\documentclass[12pt, paper=a4,fleqn]{scrartcl}
\usepackage[utf8]{inputenc}

% Prevent line breaks in inline formulas
% http://tex.stackexchange.com/questions/93940/why-do-i-have-overfull-hbox-when-i-use-inline-math-without-line-breaks
\relpenalty=10000
\binoppenalty=10000

\usepackage[utf8]{inputenc}

\usepackage{amsmath}
\usepackage{amssymb}
\usepackage{mathtools}

\mathtoolsset{centercolon}

\DeclarePairedDelimiter\paren{\lparen}{\rparen}
\DeclarePairedDelimiter\angles{\langle}{\rangle}
\DeclarePairedDelimiter\braces{\{}{\}}
\DeclarePairedDelimiter\brackets{[}{]}
\DeclarePairedDelimiter\abs{\lvert}{\rvert}
\DeclarePairedDelimiter\norm{\lVert}{\rVert}

\DeclarePairedDelimiterX{\closedStochasticInterval}[1]{[}{]}{\!\delimsize[#1\delimsize]\!}
\DeclarePairedDelimiterX{\leftOpenStochasticInterval}[1]{]}{]}{\!\delimsize]#1\delimsize]\!}
\DeclarePairedDelimiterX{\rightOpenStochasticInterval}[1]{[}{[}{\!\delimsize[#1\delimsize[\!}
\DeclarePairedDelimiterX{\openStochasticInterval}[1]{]}{[}{\!\delimsize]#1\delimsize[\!}

\newcommand{\bp}[1]{\paren[\big]{#1}}
\newcommand{\maxET}{{\eta_{\max}}}

\newcommand{\diff}{\,\mathrm d}

\newcommand{\sgn}{\operatorname{sgn}}

\newcommand{\indicator}{\mathds{1}}
\newcommand{\EE}{\mathds{E}}
\newcommand{\NN}{\mathds{N}}
\newcommand{\PP}{\mathds{P}}
\newcommand{\QQ}{\mathds{Q}}
\newcommand{\RR}{\mathds{R}}

\newcommand{\scA}{\mathcal{A}}

\newcommand{\scE}{\mathcal{E}}
\newcommand{\scF}{\mathcal{F}}
\newcommand{\scH}{\mathcal{H}}

\newcommand{\scX}{\mathcal{X}}

% http://tex.stackexchange.com/questions/154615/displaymath-environment-that-can-align-equations-and-squeezes-math
\newcommand{\squeeze}[2][0]{%
  \mbox{$\medmuskip=#1mu\displaystyle#2$}%
}

\newcommand{\baseS}{\widebar{S}}
\newcommand{\admissibleSellStrategies}[1]{\scA_\text{mon}(#1)}
\newcommand{\admissibleFiniteVariationStrategies}[1]{\scA_\text{bv}(#1)}
\newcommand{\admissibleSemimartingaleStrategies}{\scA_\text{semi}}

\newcommand{\assetsProcess}{\Theta}

\newcommand{\marcusDriver}{\Phi}

\newcommand{\impactVolatility}{\hat\sigma}

%%%%%%%%%%%%%%%%%%%%%%%%%%%%%%%%%%%%%%%%%%%%%%%%%%%%%%%%%%%%
% \widebar See http://tex.stackexchange.com/questions/16337/can-i-get-a-widebar-without-using-the-mathabx-package/60253#60253

\makeatletter
\newcommand*\if@single[3]{%
  \setbox0\hbox{${\mathaccent"0362{#1}}^H$}%
  \setbox2\hbox{${\mathaccent"0362{\kern0pt#1}}^H$}%
  \ifdim\ht0=\ht2 #3\else #2\fi
  }
%The bar will be moved to the right by a half of \macc@kerna, which is computed by amsmath:
\newcommand*\rel@kern[1]{\kern#1\dimexpr\macc@kerna}
%If there's a superscript following the bar, then no negative kern may follow the bar;
%an additional {} makes sure that the superscript is high enough in this case:
\newcommand*\widebar[1]{\@ifnextchar^{{\wide@bar{#1}{0}}}{\wide@bar{#1}{1}}}
%Use a separate algorithm for single symbols:
\newcommand*\wide@bar[2]{\if@single{#1}{\wide@bar@{#1}{#2}{1}}{\wide@bar@{#1}{#2}{2}}}
\newcommand*\wide@bar@[3]{%
  \begingroup
  \def\mathaccent##1##2{%
%If there's more than a single symbol, use the first character instead (see below):
    \if#32 \let\macc@nucleus\first@char \fi
%Determine the italic correction:
    \setbox\z@\hbox{$\macc@style{\macc@nucleus}_{}$}%
    \setbox\tw@\hbox{$\macc@style{\macc@nucleus}{}_{}$}%
    \dimen@\wd\tw@
    \advance\dimen@-\wd\z@
%Now \dimen@ is the italic correction of the symbol.
    \divide\dimen@ 3
    \@tempdima\wd\tw@
    \advance\@tempdima-\scriptspace
%Now \@tempdima is the width of the symbol.
    \divide\@tempdima 10
    \advance\dimen@-\@tempdima
%Now \dimen@ = (italic correction / 3) - (Breite / 10)
    \ifdim\dimen@>\z@ \dimen@0pt\fi
%The bar will be shortened in the case \dimen@<0 !
    \rel@kern{0.6}\kern-\dimen@
    \if#31
      \overline{\rel@kern{-0.6}\kern\dimen@\macc@nucleus\rel@kern{0.4}\kern\dimen@}%
      \advance\dimen@0.4\dimexpr\macc@kerna
%Place the combined final kern (-\dimen@) if it is >0 or if a superscript follows:
      \let\final@kern#2%
      \ifdim\dimen@<\z@ \let\final@kern1\fi
      \if\final@kern1 \kern-\dimen@\fi
    \else
      \overline{\rel@kern{-0.6}\kern\dimen@#1}%
    \fi
  }%
  \macc@depth\@ne
  \let\math@bgroup\@empty \let\math@egroup\macc@set@skewchar
  \mathsurround\z@ \frozen@everymath{\mathgroup\macc@group\relax}%
  \macc@set@skewchar\relax
  \let\mathaccentV\macc@nested@a
%The following initialises \macc@kerna and calls \mathaccent:
  \if#31
    \macc@nested@a\relax111{#1}%
  \else
%If the argument consists of more than one symbol, and if the first token is
%a letter, use that letter for the computations:
    \def\gobble@till@marker##1\endmarker{}%
    \futurelet\first@char\gobble@till@marker#1\endmarker
    \ifcat\noexpand\first@char A\else
      \def\first@char{}%
    \fi
    \macc@nested@a\relax111{\first@char}%
  \fi
  \endgroup
}
\makeatother

%%%%%%%%%%%%%%%%%%%%%%%%%%%%%%%%%%%%%%%%%%%%%%%%%%%%%%%%%%%%

\usepackage{amsthm}
\usepackage{calc}
\usepackage{caption}
\usepackage{subcaption}
\usepackage{color}
\usepackage{csquotes}
\usepackage{dsfont}
\usepackage{graphicx}
\usepackage{import}
\usepackage{microtype}
\usepackage{nameref}
\usepackage[section]{placeins}
\usepackage{tikz}
\usepackage{units}

\usepackage{xr}
\usepackage[unicode,pdfborder={0 0 0}]{hyperref}
\usepackage[noabbrev,capitalize]{cleveref}

\usetikzlibrary{calc}

\newtheoremstyle{boldremark}
	{\topsep}   
	{\topsep}   
	{}          
	{}          
	{\bfseries} 
	{.}         
	{.5em}      
	{}          
	
\newtheorem{theorem}{Theorem}[section]
\newtheorem{proposition}[theorem]{Proposition} 
\newtheorem{lemma}[theorem]{Lemma} 
\newtheorem{corollary}[theorem]{Corollary} 

\theoremstyle{definition}
\newtheorem*{definition}{Definition} 
\newtheorem{assumption}[theorem]{Assumption}

\theoremstyle{boldremark}
\newtheorem{remark}[theorem]{Remark}
\newtheorem{example}[theorem]{Example}

% We want to capitalize Figure 5, Lemma 42, ... but not equation (17).
\AtBeginDocument{%
	\crefname{equation}{equation}{equations}%
}

\numberwithin{equation}{section}

\author{Dirk Becherer%
	\footnote{Email addresses: becherer,bilarev,frentrup@math.hu-berlin.de}
 , Todor Bilarev%
	\footnote{
Support by German Science foundation DFG, via Berlin Mathematical School BMS and research training group RTG1845 StoA  is acknowledged.
\newline
We thank Kai Kümmel for an inspiring discussion and the two anonymous referees and associate editor for their helpful and encouraging feedback.}
 , Peter Frentrup
\\
Institute of Mathematics, Humboldt-Universität zu Berlin\\
Unter den Linden 6 - 10099 Berlin
}

\title{Stability for gains from large investors' strategies in $\mathbf{M_1}$/$\mathbf{J_1}$ topologies}

\begin{document}

\maketitle

\begin{abstract}
We prove continuity of a controlled SDE solution in Skorokhod's $M_1$ and $J_1$ topologies and also uniformly, in probability,
 as a non-linear functional of the control strategy.
The functional comes from a finance problem to model price impact of a large investor in an illiquid market.
We show that $M_1$-continuity is the key to ensure that
 proceeds and wealth processes from (self-financing) c\`{a}dl\`{a}g trading strategies 
are determined as the continuous extensions for those from continuous strategies.
We demonstrate by examples how continuity properties are useful to solve different stochastic control problems on optimal liquidation
 and to identify asymptotically realizable proceeds.

\vspace*{2ex}
{{ Keywords}: Skorokhod topologies, stability, continuity of proceeds, transient price impact, illiquid markets, no-arbitrage, optimal liquidation}

{{ MSC2010 subject classifications}: }
60H10, 60H20, 60G17, 91G99, 93E20 

\end{abstract}

\section{Introduction}

A classical theme in the theory of stochastic differential equations is how stably the solution process behaves, as a functional of its integrand and integrator processes,
see e.g.\ \cite{KurtzProtter96} and \cite[Chapter~V.4]{Protter04}.
A typical question is how to  extend such a functional sensibly to a larger class of input processes.
Continuity is a key property to address such problems, cf.\ e.g.\ the canonical extension of Stratonovich SDEs by Marcus \cite{Marcus81}.

In singular control problems for instance, the non-linear objective functional may initially be only defined for finite variation or even absolutely continuous control strategies.
Existence of an optimizer might require a continuous extension of the functional to a more general class of controls, e.g.\ semimartingale controls for the problem of hedging.
Herein the question of  which topology to embrace arises, and this depends on the problem at hand, see e.g.\ \cite{Kardaras2013} for an example of utility maximization in a frictionless financial market where the Emery topology turns out to be useful for the existence of an optimal wealth process.
For our application we need suitable topologies on the Skorokhod space of c\`adl\`ag functions. The two most common choices here are the uniform topology and Skorokhod $J_1$ topology; they share the property that a jump in a limiting process can only be approximated by jumps of comparable size at the same time or, respectively, at nearby times.
But this can be overly restrictive for such applications, as we have in mind, where a jump may be approximated sensibly by many small jumps in fast succession or by continuous processes such as Wong-Zakai-type approximations.
The $M_1$ topology by  Skorokhod \cite{Skorokhod56} captures such approximations of \emph{unmatched jumps}. We will take this as a starting point 
 to identify the relevant non-linear objective functional for c\`{a}dl\`{a}g controls as a continuous extension from (absolutely) continuous controls.
See \cite{Whitt2002_book} for a profound survey on the $M_1$ topology. 

We demonstrate how the old subject of stability of SDEs with jumps, when considered with respect to the $M_1$ topology,  has applications for recent problems in mathematical finance.
Our application context is that of an illiquid financial market for trading a single risky asset.
A large investor's trading causes transient price impact on some exogenously given fundamental price which would prevail in a frictionless market.
Such could be seen as a non-linear (non-proportional) transaction cost with intertemporal impact also on subsequent prices.
Our framework is rather general. It can accommodate for instance for models where price impact is basically additive, see \cref{ex: additive or multiplicative impact}; Yet, some 
 extra care is
 required here to ensure $M_1$ continuity, which 
  can actually fail to hold in common additive models that lack a  monotonicity property and positivity of prices,  cf.\ \cref{rmk:positive prices}.
An original aspect of our framework is that it also permits for multiplicative impact
which 
 appears to fit better to 
multiplicative price evolutions  as e.g.\ in models of Black-Scholes type, cf.\ \cite[Example~5.4]{BechererBilarevFrentrup2016-deterministic-liquidation}; 
In comparison, it moreover ensures positivity of asset prices, which is desirable from a theoretical point of view,  relevant for applications whose time horizon is not short (as they can occur e.g.\ for large institutional trades \cite{ChanL95,MaugEtal}, or for hedging problems with longer maturities).

The large trader's feedback effect on prices causes the proceeds (negative expenses) to be a non-linear functional of her control strategy for dynamic trading in risky assets.
Having specified the evolution for an
affected
price process at which trading of infinitesimal quantities would occur, one still has,  even for a simple block trade, to define the variations in the bank account by which the trades in risky assets are financed, i.e.\ the so-called  self-financing condition.
Choosing a seemingly sensible, but ad-hoc, definition could lead to surprising and undesirable consequences, in that the large investor can evade her liquidity costs entirely by using continuous finite variation strategies to approximate her target control strategy, cf.\ \cref{ex:ad-hoc definition of proceeds}.
Optimal trade execution proceeds or superreplication prices  may be only approximately attainable in such models. Indeed,  the analysis in  \cite{BankBaum04,CetinJarrowProtter04} shows that
approximations by continuous strategies of finite variation play a particular role.
This is, of course, a familar theme in stochastic analysis, at least since Wong and Zakai \cite{WongZakai65}. However,
 in the  models in \cite{BankBaum04,CetinJarrowProtter04}  the aforementioned strategies have zero liquidity costs, permitting the large trader to avoid those costs entirely by simply approximating more general strategies. This appears not desirable from an application point of view, and it seems also mathematically inconvenient  to distinguish between proceeds and asymptotically realizable proceeds.
To settle this issue, a stability analysis for proceeds for a class of price impact
models should address in particular the 
$M_1$ topology, in which continuous finite variation strategies are dense in the space of c\`{a}dl\`{a}g strategies (in contrast to the uniform or $J_1$ topologies), see \cref{rmk:M1 important}.

We contribute a systematic study on stability of the proceeds functional.
Starting with an unambiguous definition~\eqref{eq:proceeds cont proc} for continuous finite-variation strategies, 
we identify the approximately realizable gains for a large set of controls. A mathematical challenge for stability of the stochastic integral functional is that both the integrand and the integrator depend on the control strategy.
Our main \cref{thm:stability in j1 and m1} shows continuity of this non-linear controlled functional in the uniform, $J_1$ and $M_1$ topologies, in probability, on the space of (predictable) semimartingale or c\`{a}dl\`{a}g strategies which are bounded in probability.
A particular consequence is a Wong-Zakai-type approximation result, that could alternatively  be shown by adapting results from \cite{KurtzPardouxProtter95} on the Marcus canonical equation to our setup, cf.~\cref{sect:stability}.  
Another direct implication of $M_1$ continuity is that proceeds of general (optimal) strategies can be approximated by those of simple strategies with only small jumps.
Whereas the former property is typical for common stochastic integrals, it is far from obvious for our non-linear controlled SDE functional \eqref{eq:def of proceeds process}.

The topic of stability for the stochastic process of proceeds from dynamically trading risky assets in illiquid markets, where the dynamics of the wealth and of the proceeds for a large trader are non-linear in her strategies because of her market impact, is showing up at several places in the literature.
But the mathematical topic appears to have been touched mostly in-passing so far.
The focus of few notable investigations has been on the application context and on different topologies,
see e.g.\ \cite[Prop.~6.2]{RochSoner13} for uniform convergence in probability (ucp).
In \cite[Lem.~2.5]{LorenzSchied13} a cost functional is extended from simple strategies to semimartingales via convergence in ucp.
\cite[Def.~2.1]{Roch11} and \cite[Sect.~A.2]{CetinJarrowProtter04} use particular choices of approximating sequences to extend their definition of self-financing trading strategies from simple processes to semimartingales by limits in ucp. 
Trading gains of semimartingale strategies are defined in \cite[Prop.~1.1--1.2]{BouchardLoeperZou16} as  $L^2$-limits of gains from simple trading strategies via rebalancing at  discrete times and large order split.
In contrast, we contribute
a study of $M_1$-, $J_1$- and ucp-stability  for general approximations of c\`{a}dl\`{a}g  strategies  in a class of price impact models with  transient impact \eqref{eq:price process}, driven by quasi-left continuous martingales \eqref{eq:defbaseS}.

As a further contribution, and also to demonstrate the relevance and scope of the theoretical results, we discuss 
in the case of multiplicative impact 
a variety of 
examples 
where  continuity properties
play a 
 role.
In \cref{ex: optimal monotone liquidation in finite horizon} we establish existence of an optimal monotone liquidation strategy in finite time horizon using relative compactness and continuity of the proceeds functional in $M_1$.
\Cref{ex: optimal liquidation with general strategies} shows how to solve the optimal liquidation problem in infinite time horizon with non-negative bounded semimartingale strategies 
by approximating their proceeds via bounded variation strategies, here the $M_1$-stability being needed.
\Cref{ex: stochastic-finite-horizon} solves the liquidation problem  for an
 original extension of the model
where liquidity is stochastic and
 the time horizon is bounded by an expectation constraint for stopping times. 
This relies on $M_1$ convergence to define the trading proceeds. 
It provides an example of a liquidation problem where the optimum of singular controls is not attained in a class of finite variation strategies, but a suitable extension 
to semimartingale strategies is needed.
\Cref{ex: partial instantaneous impact} incorporates partially instantaneous recovery of price impact to our model. 
Herein, the $M_1$ topology plays the key role to identify (asymptotically realizable) proceeds as a continuous functional.
Last but not least,  \cref{sect:no arbitrage} proves absence of arbitrage for the large trader within a fairly large class of trading strategies.

The paper is organized as follows. \cref{sect:model} sets the model and defines the proceeds functional for finite variation strategies. 
In \cref{sect:continuity} we extend this definition to a more general set of strategies and prove our main \cref{thm:stability in j1 and m1}.
In the remaining \cref{sect:no arbitrage,sec:Examples} we concentrate on the case of multiplicative impact.
We show absence of arbitrage opportunities for the large investor in \cref{sect:no arbitrage} as a basis for a sensible financial model.
The examples related to optimal liquidation are investigated in \cref{sec:Examples}.

\section{A model for transient multiplicative price impact}\label{sect:model}

We consider a filtered probability space $(\Omega,\scF,(\scF_t)_{t\ge 0},\PP)$.
The filtration $(\scF_t)_{t\geq 0}$ is assumed to satisfy the usual conditions of right-continuity and completeness, with $\scF_0$ being the trivial $\sigma$-field.  Paths of semimartingales are taken to be  c\`{a}dl\`{a}g.
Let also $\scF_{0-}$ denote the trivial $\sigma$-field.
We consider a market with a single risky asset and a riskless asset (bank account) whose price is constant at $1$.
Without activity of large traders, the unaffected (discounted) price process of the risky asset would evolve according to the stochastic differential equation 
\begin{equation}\label{eq:defbaseS}
	\diff\baseS_t =  \baseS_{t-} (\xi_t \diff \langle M\rangle_t + \diff M_t)\,,\text{ }\qquad \baseS_0 > 0,
\end{equation}
where $M$ is a locally square-integrable martingale that is quasi-left continuous (i.e.\  for any finite predictable stopping time $\tau$, $\Delta M_\tau := M_{\tau} - M_{\tau-} = 0$  a.s.) with $\Delta M > -1$ and 
 $\xi$ is a predictable and bounded process. 
 In particular, 
the predictable quadratic variation process $\langle M \rangle$ is continuous \cite[Thm.~I.4.2]{JacodShiryaev2003_book}, and the unaffected (fundamental) price process  $\baseS>0$  can have jumps. We moreover assume that $\langle M \rangle
= \int_0^\cdot \alpha_s \diff s$ 
with density $\alpha$ being bounded (locally on compact time intervals) and whose paths  are (locally) Lipschitz, and that the martingale part of $\baseS$ is square integrable on compacts.
The assumptions on $M$ are satisfied e.g.\ for $M = \int \sigma\diff W$, where $W$ is a Brownian motion and $\sigma$ is a suitably regular bounded predictable process,
or for L\'{e}vy processes $M$ with suitable integrability and lower bound on jumps. 

To model the impact that trading strategies by a single large trader have on the risky asset price, let us denote by $(\assetsProcess_t)_{t\geq 0}$ her risky asset holdings throughout time and $\assetsProcess_{0-}$ be the number of shares the she holds initially.  
The process $\assetsProcess$  is the control strategy of the large investor who executes $\!\diff \assetsProcess_t$ market orders at time $t$ (buy orders if $\assetsProcess$ is increasing, sell orders if it is decreasing). 
We will assume throughout that strategies $\assetsProcess$ are predictable processes.
The large trader is faced with illiquidity costs because her trading has an adverse impact on the prices at which
her orders are executed as follows.
A \emph{market impact process} $Y$ (called volume effect process in \cite{PredoiuShaikhetShreve11}) captures the impact from a predictable strategy $\assetsProcess$ with càdlàg paths on the price of the risky asset, and is defined as the càdlàg adapted solution $Y$ to
\begin{equation} \label{eq:deterministic Y_t dynamics}
	\diff Y_t = -h(Y_t) \diff \langle M \rangle_t + \diff \assetsProcess_t
\end{equation}
for some initial condition $Y_{0-} \in \RR$.
We assume that $h: \RR \rightarrow \RR$ is Lipschitz with $h(0) = 0$ and $h(y)\sgn(y)\ge 0$ for all $y\in\RR$. The Lipschitz assumption on $h$ guarantees existence and uniqueness of $Y$ in a pathwise sense, see \cite[proof of Thm.~4.1]{PangTalrejaWhitt2007} and \cref{prop:cont of resilience} below. The sign assumption on $h$ gives \emph{transience} of the impact which recovers towards 0 (if $h(y) \neq 0$ for $y\neq 0$) when the large trader is inactive. 
The function $h$ gives the speed of resilience at any level of $Y_t$ and we will refer to it as \emph{the resilience function}.
For example, when $h(y) = \beta y$ for some constant $\beta > 0$, the market recovers at exponential rate (as in   \cite{ObizhaevaWang13,AlfonsiFruthSchied10,Lokka14}). Note that we also allow for $h\equiv 0$ in which case the impact is permanent as in \cite{BankBaum04}.
Clearly, the process $Y$ depends on $\assetsProcess$, and sometimes we will indicate this dependence as a superscript $Y = Y^\assetsProcess$. Some of the results in this paper could be extended with no additional work when considering additional noise in the market impact process, see the discussion in  \cref{ex: stochastic-finite-horizon}, or for less regular density $\alpha$ if the $-h(Y_t)\!\diff\langle M\rangle_t$ term in \eqref{eq:deterministic Y_t dynamics} is replaced by e.g.~$-h(Y_t)\!\diff t$.

If the large investor trades according to a continuous strategy $\assetsProcess$, the observed price $S$ at which infinitesimal quantities $\!\diff\assetsProcess$ are traded (see \eqref{eq:proceeds cont proc}) is given via \eqref{eq:deterministic Y_t dynamics} by
\begin{equation}\label{eq:price process}
	S_t:= g(\baseS_t, Y_t)\,,
\end{equation}
where the \emph{price impact function} $(x,y) \mapsto g(x,y)$ is $C^{2,1}$ and non-negative with $g_{xx}$ being locally Lipschitz in $y$, meaning that on every compact interval $I \subset \RR$ there exists $K > 0$ such that $\abs{g_{xx}(x,y) - g_{xx}(x,z)} \le K\abs{y-z}$ for all $x,y,z\in I$.
Moreover, we assume $g(x,y)$ to be non-decreasing in both $x$ and $y$.
In particular, selling (buying) by the large trader causes the price $S$ to decrease (increase).
This price impact is transient due to \eqref{eq:deterministic Y_t dynamics}.

\begin{example} \label{ex: additive or multiplicative impact}
	\cite{BankBaum04} consider a family of semimartingales $(S^\theta)_{\theta\in \RR}$ being parametrized by the large trader's risky asset position $\theta$. In our setup, this corresponds to general price impact function $g$ and $h\equiv 0$, meaning that impact is permanent.
	A known example in the literature on transient price impact is the additive case, $S = \baseS + f(Y)$, where 
	\cite{ObizhaevaWang13} take $f(y) = \lambda y$ to be linear,  motivated from a block-shaped limit order book. 
	For 
generalizations to non-linear increasing $f : \RR \to [0,\infty)$, see \cite{AlfonsiFruthSchied10,PredoiuShaikhetShreve11}.
	Note 
that  we require  $0\le g \in C^{2,1}$ for \cref{thm:stability in j1 and m1}, see \cref{rmk:positive prices}. 
	A  (somewhat technical) modification of the 
model by \cite{ObizhaevaWang13}, that fits with our setup and ensures positive asset prices,   
could be to  take   $g(\baseS, Y) = \varphi(\baseS + f(Y))$ with a non-negative increasing $\varphi \in C^2$ satisfying $\varphi(x) = x$ on $[\varepsilon, \infty)$ and $\varphi(\cdot) = 0$ on $(-\infty, -\varepsilon]$ for some $\varepsilon>0$.
	A different example, that   naturally ensures positive asset prices and will serve as our prime example for \cref{sect:no arbitrage,sec:Examples}, is multiplicative impact  $S = f(Y) \baseS$ for $f$ being strictly positive, non-decreasing, and with  $f\in C^1$ (to satisfy the conditions on $g$). Also here, the function $f$ can be interpreted as resulting from 
a limit order book, see \cite[Sect.~2.1]{BechererBilarevFrentrup2016-deterministic-liquidation}. 
\end{example}

While impact and resilience are given by general non-parametric functions,  note that these are static.
Considering such a model as a low (rather than high) frequency model, we do consider  approximations by continuous and finite variation strategies to be relevant.
To start,  let  $\assetsProcess$ be a continuous  process of finite variation (f.v., being adapted).
Then, the cumulative proceeds (negative expenses), denoted by $L(\assetsProcess)$, that are the variations in the bank account to finance buying and selling of the risky asset according to the strategy, can be defined (pathwise) in an unambiguous way.
Indeed, 
 proceeds  over period $[0,T]$ from a strategy $\Theta$ that is continuous should  be (justified also by \cref{lemma:proceeds of cont fv strategies})
\begin{equation}\label{eq:proceeds cont proc}
	L_T(\assetsProcess) := - \int_0^T S_u \diff \assetsProcess_u 
		= -\int_0^T g(\baseS_u, Y_u) \diff \assetsProcess_u.
\end{equation}
Our main task is to extend by stability arguments the model from continuous to more general trading strategies, in particular such involving block trades and 
even more general ones with c\`{a}dl\`{a}g paths, assuming transient price impact but no further frictions, like e.g.\ bid-ask spread (cf.~\cref{rmk:full LOB model}). To this end, we will adopt the following point of view: approximately similar trading behavior should yield similar proceeds. 
The next section will make precise what we mean by ``similar'' by considering different topologies on the c\`{a}dl\`{a}g path space.
It turns out that the natural extension of the functional $L$ from the space of continuous f.v.~paths to the space of c\`{a}dl\`{a}g f.v.~paths which makes the functional $L$ continuous in all of the considered topologies is as follows: 
for discontinuous trading we take the proceeds from a block market buy or sell order of size $\abs{\Delta \assetsProcess_\tau}$, executed immediately at a predictable stopping time $\tau<\infty$, to be given by
\begin{equation} \label{eq:block sale proceeds}
	-\int_0^{\Delta\assetsProcess_\tau} g(\baseS_{\tau-}, Y_{\tau-} + x) \diff x,
\end{equation}
and so the proceeds up to $T$ from a f.v.\  strategy $\assetsProcess$ with continuous part $\assetsProcess^c$  are
\begin{equation}\label{repeat eq:proceeds fv strategy}
	L_T(\assetsProcess):= -\int_0^T g(\baseS_u, Y_u) \diff \assetsProcess^c_u - \sum_{\substack{\Delta \assetsProcess_t \neq 0 \\ 0\leq t\leq T}} \int_0^{\Delta\assetsProcess_t} g(\baseS_{t-}, Y_{t-} + x) \diff x.
\end{equation}

Note that a block sell order means that $\Delta \assetsProcess_t < 0$, so 
the average price per share for this trade satisfies 
\(
	S_{t} \le -\frac{1}{\Delta \assetsProcess_t}\int_{0}^{\Delta \assetsProcess_t} g(\baseS_{t}, Y_{t-} + x) \diff x \le S_{t-}.
\)
Similarly, the average price per share for a block buy order, $\Delta \assetsProcess_t > 0$, is between $S_{t-}$ and $S_t$.
The expression in \eqref{eq:block sale proceeds} could be justified from a limit order book perspective for some cases of $g$, as noted in \cref{ex: additive or multiplicative impact}. But  we will derive it in the next section using stability considerations.

\begin{remark}
The aim to define a model for trading under price impact for general strategies is justified by applications in finance, which encompass trade execution, utility optimization and hedging.
While also e.g.\ \cite{BankBaum04,BluemmelRheinlaender17,CetinJarrowProtter04} define proceeds for semimartingale strategies, their definitions are  not ensuring continuity in the $M_1$ topology, in contrast to \cref{thm:stability in j1 and m1}. Another difference to \cite{BankBaum04,BluemmelRheinlaender17} is that our presentation is not going to rely on non-linear stochastic integration theory due to Kunita or, respectively, Carmona and Nualart.
\end{remark}

\section{Continuity of the proceeds in various topologies}\label{sect:continuity}

In this section we will discuss questions about continuity of the proceeds process $\assetsProcess \mapsto L_\cdot(\assetsProcess)$ with respect to various topologies: the ucp topology and the Skorokhod $J_1$ and (in particular) $M_1$ topologies. Each one captures different stability features, the suitability of which may vary with application context. 

Let us observe that for a continuous bounded variation trading strategy $\assetsProcess$ the proceeds from trading should be given by \eqref{eq:proceeds cont proc}. To this end, let us make just the assumption that
\begin{equation}\label{assumption on block trades}
\begin{aligned}
	&\text{a block order of a size }\Delta \text{ at some (predictable) time }t \text{ is executed at some  } \\
	&\text{average price per share  which is between } S_{t-} = g(\baseS_t, Y_{t-}) \text{ and }  g(\baseS_t, Y_{t-}+ c\Delta) 
\end{aligned}
\end{equation}
 for some  constant $c\ge 0$.
The assumption looks natural for $c=1$ where $Y_t= Y_{t-}+ c\Delta$, stating that a block trade is executed at an average price per share that is somewhere between the asset prices observed immediately before and after the execution.
The more general case $c\ge 0$ is just technical at this stage but will be needed in \Cref{ex: partial instantaneous impact}.
Assumption~\eqref{assumption on block trades} means that proceeds by a simple strategy as in \eqref{simple strategy n} are  
\begin{equation}
	L_t(\assetsProcess^n) = - \sum_{k:\ t_k \le t}\xi_k (\assetsProcess_{t_{k}} - \assetsProcess_{t_{k-1}})
\end{equation}
for some random variable $\xi_k$  between $g(\baseS_{t_k}, Y^{\assetsProcess^n}_{t_k -})$ and $g(\baseS_{t_k}, Y^{\assetsProcess^n}_{t_k - }+ c \Delta Y^{\assetsProcess^n}_t)$.
Note that at this point we have not specified the proceeds (negative expenses) from block trades, but we only assume that they satisfy some natural bounds.
Yet, this is indeed already sufficient to derive the functional \eqref{eq:proceeds cont proc} for continuous strategies as a limit of simple ones.
\begin{lemma}\label{lemma:proceeds of cont fv strategies}
	For $T >0$, approximate a continuous f.v.~process $(\assetsProcess_t)_{t\in [0,T]}$ by a  sequence $(\assetsProcess^n_t)_{t\in[0,T]}$ of simple trading strategies given as follows: For a sequence  of partitions 
$\{0 = t_0 < t_1 < \cdots < t_{m_n} = T\}$, $n\in \mathbb{N}$, with $\sup_{1\leq k \leq m_n}\abs{t_k - t_{k-1}} \to 0$ for $n \to \infty$, let
	\begin{equation}\label{simple strategy n}
		\assetsProcess^n_t := \assetsProcess_0 + \sum_{k = 1}^{m_n} \paren[\big]{ \assetsProcess_{t_k} - \assetsProcess_{t_{k-1}} } \indicator_{[t_k, T]}(t)\,,\quad t\in[0,T].
\end{equation}
Assume~\eqref{assumption on block trades} holds for some $c\ge 0$. Then 
	\(	\sup_{0\leq t\leq T} \abs{  L_t(\assetsProcess^n)+ \int_0^t S_u \diff \assetsProcess_u} \xrightarrow{n\rightarrow \infty} 0  \) a.s.
\end{lemma}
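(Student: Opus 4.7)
The plan is to reduce the claim to a pathwise Riemann--Stieltjes convergence statement by combining (i) stability of the market impact process $Y$ under uniform convergence of the strategy with (ii) a mean-value estimate controlling the deviation of each $\xi_k$ from $g(\baseS_{t_k}, Y^{\assetsProcess^n}_{t_k-})$. Throughout I argue pathwise on a fixed sample $\omega$.

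First I note that $\baseS$ and $Y^\assetsProcess$ are bounded on $[0,T]$ and that $Y^\assetsProcess$ is continuous, since $\assetsProcess$ and $\langle M\rangle$ are continuous. Because $\assetsProcess$ is uniformly continuous on $[0,T]$, $\sup_{t\in[0,T]}\abs{\assetsProcess^n_t - \assetsProcess_t} \le \omega(\delta_n)$, where $\delta_n := \sup_k(t_k - t_{k-1})$ and $\omega$ is the modulus of continuity of $\assetsProcess$. Setting $U^n := Y^{\assetsProcess^n} - \assetsProcess^n$ and $V := Y^\assetsProcess - \assetsProcess$, both processes are continuous and satisfy the ordinary integral equations $U^n_t = U^n_0 - \int_0^t h(U^n_s + \assetsProcess^n_s) \diff \langle M\rangle_s$ and $V_t = V_0 - \int_0^t h(V_s + \assetsProcess_s) \diff \langle M\rangle_s$. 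A pathwise Gronwall estimate, using Lipschitz continuity of $h$ and boundedness of $\langle M\rangle_T$, then yields $\sup_{t\in[0,T]}\abs{U^n_t - V_t} \to 0$, and hence
\[
    \sup_{t\in[0,T]} \abs{Y^{\assetsProcess^n}_t - Y^\assetsProcess_t} \xrightarrow{n\to\infty} 0.
\]

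Next, by assumption~\eqref{assumption on block trades} I can write $\xi_k = g(\baseS_{t_k},\, Y^{\assetsProcess^n}_{t_k-} + \vartheta_k\, c\,(\assetsProcess_{t_k} - \assetsProcess_{t_{k-1}}))$ for some $\vartheta_k\in[0,1]$. Since $g$ is $C^{2,1}$ and its arguments lie in a pathwise compact set, local Lipschitz control in the second variable gives $\abs{\xi_k - g(\baseS_{t_k}, Y^{\assetsProcess^n}_{t_k-})} \le K\,\omega(\delta_n)$ for a constant $K=K(\omega)$. Combining this with the previous step and the continuity of $Y^\assetsProcess$ (so that $Y^\assetsProcess_{t_k-} = Y^\assetsProcess_{t_k}$), I obtain $\max_{k\le m_n}\abs{\xi_k - \Psi(t_k)} \to 0$, where $\Psi(u) := g(\baseS_u, Y^\assetsProcess_u)$ is cadlag and bounded on $[0,T]$.

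Finally, I split
\[
    -L_t(\assetsProcess^n) \;=\; \sum_{t_k\le t}\bigl(\xi_k - \Psi(t_k)\bigr)\bigl(\assetsProcess_{t_k}-\assetsProcess_{t_{k-1}}\bigr) \;+\; \sum_{t_k\le t}\Psi(t_k)\bigl(\assetsProcess_{t_k} - \assetsProcess_{t_{k-1}}\bigr).
\]
The first sum is bounded uniformly in $t\in[0,T]$ by $\max_k\abs{\xi_k-\Psi(t_k)}$ times the total variation of $\assetsProcess$ on $[0,T]$, and so tends to $0$. The second sum is a right-endpoint Riemann--Stieltjes sum for the cadlag bounded integrand $\Psi$ against the continuous finite-variation integrator $\assetsProcess$; since the signed measure $\diff\assetsProcess$ is atomless, the countably many jumps of $\Psi$ are $\diff\assetsProcess$-negligible, and the sum converges to $\int_0^t\Psi(u)\diff\assetsProcess_u$. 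Uniformity in $t$ follows because both the step-function partial sum and the continuous limit $t\mapsto\int_0^t\Psi\diff\assetsProcess_u$ oscillate by at most $\sup_{[0,T]}\abs{\Psi}\cdot\omega(\delta_n)$ between consecutive partition points, reducing the uniform bound to the pointwise convergence already obtained. The principal obstacle is this uniform-in-$t$ Riemann--Stieltjes convergence against a cadlag integrand, which is overcome precisely by the continuity of $\assetsProcess$; the pathwise stability of $Y^{\assetsProcess^n}$ established in the first step will moreover serve as a building block for the more general continuity statements in \cref{thm:stability in j1 and m1}.
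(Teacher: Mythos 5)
Your proof is correct and follows essentially the same route as the paper's: uniform stability of $Y^{\assetsProcess^n}$ (which you re-derive via a Gronwall estimate rather than citing \cref{prop:cont of resilience}), a Lipschitz estimate in the second argument of $g$ reducing $\xi_k$ to $g(\baseS_{t_k},Y^{\assetsProcess}_{t_k})$ up to a uniformly vanishing error, and a Riemann--Stieltjes limit obtained by dominated convergence against $\abs{\diff\assetsProcess}$. One small imprecision: the oscillation of $t\mapsto\int_0^t\Psi(u)\diff\assetsProcess_u$ over a partition cell is controlled by $\sup\abs{\Psi}$ times the total variation of $\assetsProcess$ over that cell --- which tends to $0$ uniformly in the cell by (uniform) continuity of the variation function of the continuous f.v.\ path $\assetsProcess$ --- and not by the sup-norm modulus of continuity $\omega(\delta_n)$ of $\assetsProcess$ itself, but this does not affect the argument.
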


\begin{proof}
	Note that $\sup_{u\in [0,T]} \abs{\assetsProcess^n_u - \assetsProcess_u} \to 0$ as $n \to \infty$. The solution map $\assetsProcess\mapsto Y^\assetsProcess$ is continuous with respect to the uniform norm, see  \cref{prop:cont of resilience}.
	Therefore, 
	\begin{equation}\label{eq:tmp Y}
		\sup_{u\in [0,T]} \abs{ Y^{\assetsProcess^n}_u - Y^{\assetsProcess}_u } \to 0 \quad \text{a.s.\ for } n \to \infty. 
	\end{equation}
	Note that for $\Delta \assetsProcess_{t_k} := \assetsProcess_{t_{k}} - \assetsProcess_{t_{k-1}}$ and $\xi_k$ between $g(\baseS_{t_k}, Y^{\assetsProcess^n}_{t_k -})$ and $g(\baseS_{t_k}, Y^{\assetsProcess^n}_{t_k -} + c\Delta\assetsProcess_{t_k})$ and $Y:= Y^\assetsProcess$ we have
	\begin{align*}
		\abs{\xi_k - g(\baseS_{t_k}, Y_{t_k})} 
			&\leq L_g(\baseS_{t_k}, \omega) \max\braces[\big]{ \abs[\big]{ Y_{t_{k}}  - Y^{\assetsProcess^n}_{t_{k-}} - c\Delta \assetsProcess_{t_k} }, \abs[\big]{ Y_{t_k} - Y^{\assetsProcess^n}_{t_k-} } } 
	\\		&\leq \tilde c L_g(\baseS_{t_k}, \omega) \paren[\big]{ \abs[\big]{ Y_{t_k} - Y^{\assetsProcess^n}_{t_k} } + \abs{\Delta \assetsProcess_{t_k}} },
	\end{align*}
	where $\tilde c>0$ is a universal constant, $L_g(x, \omega)$ denotes the Lipschitz constant of $y \mapsto g(x,y)$ on a compact set, depending on the (bounded) realizations for $\omega \in \Omega$  of $Y^{\assetsProcess}$ and $Y^{\assetsProcess^n}$, $n\in \mathbb{N}$, on the interval $[0,T]$;
 such a compact set  exits since $\assetsProcess$ is continuous and $\sup_{u\in [0,T]} \abs[\big]{ Y^{\assetsProcess}_u- Y^{\assetsProcess^n}_u }$ can be bounded by a factor times the uniform distance between $\assetsProcess$ and $\assetsProcess^n$ on $[0,T]$, cf.\ \cite[proof of Thm.~4.1]{PangTalrejaWhitt2007}.
	Hence,
	\begin{align} \label{eq:L as RS integral plus error}
		L_t(\assetsProcess^n)  & =  - \sum_{k:\ t_k \le t} g(\baseS_{t_k}, Y^\assetsProcess_{t_k}) \paren[\big]{ \assetsProcess_{t_{k}} - \assetsProcess_{t_{k-1}}} + \scE^{n}_t\,,
	\\
	\hspace{-\mathindent}\text{where }
		\abs{\scE^n_t} &\leq \tilde c \paren[\Big]{ \sup_{u\in [0,T]} L_g(\baseS_u, \omega) } \sum_{k = 1}^{m_n} \paren[\big]{ \abs[\big]{ Y_{t_k} - Y^{\assetsProcess^n}_{t_k} } + \abs{ \Delta \assetsProcess_{t_k} } } \abs{ \Delta\assetsProcess_{t_k} }
	\\
		&\leq C(\omega) \paren[\Big]{ \sup_{1\leq k \leq m_n} \abs[\big]{ Y_{t_k} - Y^{\assetsProcess^n}_{t_k} } } \abs{ \assetsProcess(\omega)}_{\text{TV}} + C(\omega) \sum_{k=1}^{m_n} \abs{ \Delta \assetsProcess_{t_k} }^2 
	\\	
	&\to 0\quad \text{a.s.\ for }n \to \infty\ \text{(uniformly in $t$)},
	\end{align}
	thanks to \eqref{eq:tmp Y} and the fact that $\assetsProcess$ has continuous paths of finite variation.
	The claim follows since by dominated convergence the Riemann-sum process in \eqref{eq:L as RS integral plus error} converges a.s.\ to the Stieltjes-integral process $-\int_0^\cdot S_u \diff \assetsProcess_u$ uniformly on $[0,T]$.
\end{proof}

\begin{example}[Continuity issues for an alternative ``ad-hoc'' definition of proceeds] \label{ex:ad-hoc definition of proceeds}
	Consider the problem of optimally liquidating $\assetsProcess_{0-} = 1$ risky asset in time $[0,T]$ while maximizing expected proceeds.
	In view of assumption \eqref{assumption on block trades}, an alternative but possibly ``ad-hoc'' definition for proceeds $\tilde L_T$ of simple strategies could be to consider just some price for each block trade, similarly to \cite[Section~3]{BankBaum04} or \cite[Example~2.4]{HendersonHobson2011}.
	For multiplicative impact $g(\baseS,Y) = \baseS f(Y)$, taking e.g.\ the price directly after the impact would yield for simple strategies $\assetsProcess^n$ that trade at times $\{0 = t^n_0 < t^n_1 < \cdots < t^n_n = T\}$ the proceeds
	\(
		\tilde L_T(\assetsProcess^n) = -\sum_{k=0}^n \baseS_{t^n_k} f(Y^{\assetsProcess^n}_{t^n_k}) \Delta \assetsProcess^n_{t^n_k}
		\,.
	\)
	The family $(\assetsProcess^n)_n$ of strategies which liquidate an initial position of size $1$  until time $1/n$ in $n$ equidistant blocks of uniform size is given by $\assetsProcess^n_t := \sum_{k=1}^n \frac{n-k+1}{n} \indicator_{[ \frac{k-1}{n^2}, \frac{k}{n^2} )}(t)$.
	With unaffected price  $\baseS_t = e^{-\delta t} \widetilde{M}_t$ for a continuous martingale $\widetilde{M}$, and permanent impact ($h \equiv 0$), i.e.\ $Y_t = \assetsProcess_t - 1$, this yields $\EE[ \tilde L_T(\assetsProcess^n) ] \to \int_0^1 f(-y) \diff y$ for $n\to\infty$.
	Given $\delta \ge 0$,  for any non-increasing simple strategy $\assetsProcess = \sum_{k=1}^n \assetsProcess_{\tau_k} \indicator_{\rightOpenStochasticInterval{\tau_{k-1}, \tau_k}}$ with $\assetsProcess_{0-} = 1$ holds that
	$\EE[\tilde L(\assetsProcess)] \le \int_0^1 f(-y) \diff y$ with strict inequality for $\delta > 0$. 
	So the control sequence $(\assetsProcess^n)$ is only asymptotically optimal among all simple monotone liquidation strategies.
\end{example}

\begin{remark}\label{rmk:asymptontially realizable proceeds}
	Note that \cref{ex:ad-hoc definition of proceeds} is a toy example, since for permanent impact the optimal strategy (considering asymptotically realizable proceeds) is trivial and in case $\delta=0$ any strategy is optimal, cf.~\cite[Prop.~3.5(III) and  the comment preceding it]{GuoZervos13}.
	Nevertheless, this example shows that the object of interest are \emph{asymptotically realizable} proceeds, an
insight due to \cite{BankBaum04}.
	For analysis, it thus appears convenient and sensible not to make a formal distinction of (sub-optimal) realizable  and asymptotically realizable proceeds, but to consider the latter and interpret strategies accordingly.
		Investigating asymptotically realizable proceeds can  help to answer questions on modeling issues, e.g. whether the large investor could sidestep liquidity costs entirely and in effect act as a small investor, cf.\ \cite{BankBaum04, CetinJarrowProtter04}.
	One could impose, like \cite{CetinSonerTouzi10}, additional constraints on strategies to avoid such issues; 
       But in such tweaked models one could not investigate the effects from some given illiquidity friction alone, in isolation from other constraints, because results from an analysis will be consequences of the combination of both frictions. 
\end{remark}

By using integration-by-parts, we can obtain the following alternative representation of the functional in \eqref{eq:proceeds cont proc} for continuous f.v.~strategies:
\begin{align}
	L(\assetsProcess) 
		&= \int_0^\cdot G_x(\baseS_{u-}, Y^\assetsProcess_{u-}) \diff \baseS_u
		 + \int_0^\cdot \paren[\Big]{ \tfrac{1}{2}G_{xx}(\baseS_{u}, Y^\assetsProcess_u) \baseS_u^2 - g(\baseS_{u}, Y^\assetsProcess_u) h(Y^\assetsProcess_u) } \diff\angles{M}_u
	\notag\\
		&\quad - \paren[\big]{ G(\baseS_{\cdot}, Y^\assetsProcess_{\cdot}) - G(\baseS_0, Y^\assetsProcess_{0-}) }
	\notag\\
		&\quad + \sum_{\substack{\Delta\baseS_u \ne 0 \\ 0 \le u \le \cdot}} \paren[\big]{ G(\baseS_u, Y^{\assetsProcess}_u) - G(\baseS_{u-}, Y^{\assetsProcess}_u) - G_x(\baseS_{u-}, Y^{\assetsProcess}_u) \Delta\baseS_u }
		\,,
	\label{eq:def of proceeds process}
\end{align}
where $G(x,y):= \int_c^y g(x,z) \diff z$ for constant $c$, and using that $\baseS$ and $Y$ have no common jumps. The advantage of this representation is that the right-hand side of \eqref{eq:def of proceeds process} makes sense for any predictable process $\assetsProcess$ with c\`{a}dl\`{a}g paths in contrast to the term in \eqref{eq:proceeds cont proc}
This form of the proceeds will turn out to be helpful for the stability analysis. We will show that the right-hand side in \eqref{eq:def of proceeds process} is continuous in the control $\assetsProcess$ when the path-space of $\assetsProcess$, the c\`{a}dl\`{a}g path space, is endowed with various topologies. Hence, it can be used to define the proceeds for general trading strategies by continuity. Next section is going to discuss the topologies that will be of interest. 

\subsection{The Skorokhod space and its $M_1$ and $J_1$ topologies}

We are going to derive a continuity result (Theorem~\ref{thm:stability in j1 and m1}) for the functional $L$ in different topologies on the space $D\equiv D([0,T]) := D([0,T]; \RR)$ of real-valued c\`{a}dl\`{a}g paths on the time interval $[0,T]$.
Following the convention by \cite{Skorokhod56}, we take each element in $D[0,T]$ to be left-continuous at time $T$.\footnote{This is implicitly assumed also in \cite{Whitt2002_book}, see the compactness criterion in Thm.~12.12.2 which is borrowed from \cite{Skorokhod56}.} One could also consider initial and terminal jumps by extending the paths, see \cref{rmk: extended paths}.
At this point, let us remark that finite horizon $T$ is not essential for the results below, whose analysis carries over to the time interval $[0,\infty)$ because the topology on $D([0,\infty))$ is induced by the topologies of $D([0,T])$ for $T \geq 0$. More precisely, for the topologies we are interested in, $x_n\rightarrow x$ as $n\rightarrow \infty$ in $D([0,\infty))$ if $x_n\rightarrow x$  in $D([0,t])$ for the restrictions of $x_n, x$ on $[0,t]$, for any $t$ being a continuity point of $x$, see \cite[Sect.~12.9]{Whitt2002_book}.

Convergence in the uniform topology is rather strong, in that  approximating a path with a jump is only possible if the approximating sequence has jumps  of comparable size at the same time. 
If one is interested in stability with respect to slight shift of the execution in time, then a familiar choice that also makes $D$ separable, the Skorokhod $J_1$ topology, might be appropriate; for comprehensive study, see \cite[Ch.~3]{Billingsley99}. 
However, also here an approximating sequence for a path with jumps needs jumps of comparable size, if only at nearby times.
To capture the occurrence of the so-called \emph{unmatched jumps}, i.e.~jumps that appear in the limit of continuous processes, another topology on $D$ is more appropriate, the Skorokhod $M_1$ topology. 
Recall that $x_n \rightarrow x$ in $(D, d_{M_1})$ if $d_{M_1}(x_n, x) \rightarrow 0$ as $n\rightarrow \infty$, with
\begin{equation}
	d_{M_1}(x_n, x) := \inf \braces[\big]{ \norm{u-u_n} \vee \norm{r-r_n} \bigm| (u, r)\in \Pi(x), (u_n, r_n)\in \Pi(x_n)}\,,
\end{equation}
where $\norm{\cdot}$ denotes the uniform norm on $[0,1]$ and $\Pi(x)$ is the set of all \emph{parametric representations} $(u,r):[0,1]\to \Gamma(x)$ of the completed graph (with vertical connections at jumps)  $\Gamma(x)$ of $x\in D$, see \cite[Sect.~3.3]{Whitt2002_book}. 
In essence, two functions $x,y \in D$ are near to each other in $M_1$ if one could run continuously a particle on each graph $\Gamma(x)$ and $\Gamma(y)$ from the left endpoint toward the right endpoint such that the two particles are nearby in time and space.
In particular, it is easy to see that a simple jump path could be approximated in $M_1$ by a sequence of absolutely continuous paths, in contrast to the uniform and the $J_1$ topologies.  
More precisely, we have the following 
\begin{proposition}\label{prop:conv of Wong-Zakai in M1}
	Let $x\in D([0,T])$ 
	and consider the Wong-Zakai-type approximation sequence $(x_n) \subset D([0,T])$ defined by $x_n(t) := n\int_{t-1/n}^{t}x(s)\diff s$, $t\in [0,T]$.  
	Then 
	\[
		x_n \rightarrow x \quad \text{for }n \rightarrow \infty, \quad \text{in } (D([0,T]), M_1).
	\]
\end{proposition}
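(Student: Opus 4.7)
I would prove this by constructing explicit parametric representations $(u,r)\in\Pi(x)$ and $(u_n,r_n)\in\Pi(x_n)$ whose uniform distances $\|u-u_n\|_\infty$ and $\|r-r_n\|_\infty$ both tend to zero; this bounds $d_{M_1}(x_n,x)$ directly from the definition. Extend $x$ by $x(s):=x(0)$ for $s<0$ so that $x_n$ is continuous on $[0,T]$. Two elementary facts are used throughout: (i) $x_n(t)$ lies in the convex hull of $\{x(s):s\in[t-1/n,t]\}$, so whenever the oscillation of $x$ on this interval is $\le\eta$, one has $|x_n(t)-x(t-)|\le\eta$; (ii) $x_n$ is continuous with $x_n(0)=x(0)$, and $x_n(T)\to x(T)$ by left-continuity of $x$ at $T$.

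Fix $\varepsilon>0$. Let $\tau_1<\cdots<\tau_K$ be the finitely many jumps of $x$ of size at least $\varepsilon$, and refine to a partition $0=s_0<\cdots<s_M=T$ containing all $\tau_k$ such that the oscillation of $x$ on each $[s_{m-1},s_m)$ is less than $\varepsilon$, a standard property of c\`{a}dl\`{a}g functions. Choose $n$ with $1/n$ smaller than the mesh of this partition. Build the parametric representations on $[0,1]$ by interleaving \emph{mainland pieces}, on which $r(s)=r_n(s)$ advances through some $[s_{m-1},s_m]$ and $u=x\circ r$, $u_n=x_n\circ r_n$, with \emph{jump pieces}, one per $\tau_k$, during which $r(s)\equiv\tau_k$ and $u$ interpolates linearly between $x(\tau_k-)$ and $x(\tau_k)$ while simultaneously $r_n(s)$ sweeps linearly through $[\tau_k,\tau_k+1/n]$ and $u_n=x_n\circ r_n$. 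The $1/n$ lag introduced by each jump piece is absorbed over the following mainland piece so that $r$ and $r_n$ both reach $\tau_{k+1}$ (or $T$) at the same parameter value, giving $\|r-r_n\|_\infty\le 1/n$.

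For the values, on each mainland piece fact (i) together with the oscillation bound yields $|x_n(r_n(s))-x(r_n(s))|\le C\varepsilon$, while $|x(r(s))-x(r_n(s))|\le C\varepsilon$ because $r$ and $r_n$ differ by at most $1/n$ and across partition boundaries only jumps of $x$ of size $<\varepsilon$ can occur. On the jump piece for $\tau_k$, writing $r_n(s)=\tau_k+\alpha(s)/n$ with $\alpha(s)\in[0,1]$ and splitting the defining integral of $x_n(r_n(s))$ at $\tau_k$ yields
\begin{equation*}
    x_n(r_n(s))=(1-\alpha(s))\,\overline x_-(s)+\alpha(s)\,\overline x_+(s),
\end{equation*}
where $\overline x_\pm(s)$ are averages of $x$ on intervals of length $\le 1/n$ immediately to the left and right of $\tau_k$; fact (i) puts them within $\varepsilon$ of $x(\tau_k-)$ and $x(\tau_k)$ respectively, so $u_n(s)$ lies within $\varepsilon$ of $u(s)=(1-\alpha(s))x(\tau_k-)+\alpha(s)x(\tau_k)$. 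Altogether, $\|u-u_n\|_\infty\vee\|r-r_n\|_\infty\le C\varepsilon$ for all sufficiently large $n$, whence $\limsup_n d_{M_1}(x_n,x)\le C\varepsilon$; letting $\varepsilon\downarrow 0$ concludes.

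The main obstacle is precisely the analysis on the jump pieces: one must verify that during its brief excursion of length $1/n$ past each large jump $\tau_k$, the running average $x_n$ actually traces, to within $O(\varepsilon)$, the vertical segment from $x(\tau_k-)$ to $x(\tau_k)$ in the completed graph of $x$. The convex decomposition displayed above, together with fact (i) applied on each side of $\tau_k$ (where by construction no other large jump lives), delivers this; all remaining jumps of $x$ have size $<\varepsilon$ and are harmlessly absorbed into the oscillation term. The rest is bookkeeping with the parametrizations.
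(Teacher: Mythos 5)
Your overall strategy---building explicit parametric representations of $\Gamma(x)$ and $\Gamma(x_n)$ and bounding $d_{M_1}(x_n,x)$ directly from its definition---is the same as the paper's, and your analysis of the jump pieces (splitting $x_n(r_n(s))$ as a convex combination of averages of $x$ on either side of $\tau_k$) is correct and is indeed the heart of the matter. There is, however, a genuine gap in the mainland pieces. An element of $\Pi(x)$ must be a \emph{continuous} map of $[0,1]$ \emph{onto} the completed graph $\Gamma(x)$. On a mainland interval $[s_{m-1},s_m]$ the path $x$ is still a general c\`adl\`ag function with possibly countably many jumps of size $<\varepsilon$; with $r$ strictly increasing there and $u=x\circ r$, the map $u$ is discontinuous at every such jump and $(u,r)$ misses the corresponding vertical segments of $\Gamma(x)$. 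Hence $(u,r)\notin\Pi(x)$, and the asserted bound $d_{M_1}(x_n,x)\le\norm{u-u_n}\vee\norm{r-r_n}$ does not follow. This is not mere bookkeeping: to traverse countably many vertical segments continuously one must allot every jump a positive amount of parameter time with finite total, which is exactly the device the paper uses (fictitious time $a_k$ for the $k$-th largest jump, with $\sum_k a_k<\infty$, built into the time change $\gamma_0(t)=t+\sum_k a_k\indicator_{\{t_k\le t\}}$; the uniform convergence of the space components is then delegated to Lemma~6.2 of Kurtz--Pardoux--Protter).

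Your argument can be repaired without changing its spirit in either of two ways. (a) Introduce the $\varepsilon$-scale approximation one level earlier: set $x^\varepsilon:=\sum_m x(s_{m-1})\indicator_{[s_{m-1},s_m)}$ (with the appropriate value at $T$), so that $\norm{x-x^\varepsilon}_\infty\le\varepsilon$ and hence also $\norm{x_n-x^\varepsilon_n}_\infty\le\varepsilon$ for the corresponding running averages; apply your construction to the step function $x^\varepsilon$, whose mainland pieces carry no jumps at all, so that your $(u,r)$ genuinely lies in $\Pi(x^\varepsilon)$; and conclude via $d_{M_1}(x_n,x)\le\norm{x_n-x_n^\varepsilon}_\infty+d_{M_1}(x_n^\varepsilon,x^\varepsilon)+\norm{x^\varepsilon-x}_\infty$. (b) Abandon exact representations and verify instead the oscillation characterization of $M_1$ convergence in \cite[Thm.~12.5.1]{Whitt2002_book}: pointwise convergence at $0$, at $T$ and at continuity points of $x$ (which your fact (i) gives), together with $\lim_{\delta\downarrow 0}\limsup_n w_s(x_n,t,\delta)=0$, which your convex-hull decomposition at the $\tau_k$ delivers. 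The remaining estimates in your write-up (the lag absorption, the oscillation bounds across partition boundaries, the treatment of $\overline x_\pm$) are sound.
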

\begin{proof}
	To ease notation, we embed a path $x$ in $D([0,\infty))$ and consider the corresponding approximating sequence for the extended path on $[0,\infty)$. The claim follows by restricting to the domain $[0,T]$, as $0$ and $T$ are continuity points of $x$, cf.~\cite[Sect.~12.9]{Whitt2002_book}. The idea is to construct explicitly parametric representations of $\Gamma(x)$ and $\Gamma(x_n)$ that are close enough. For this purpose, we need to add ``fictitious'' time to be able to parametrize the segments that connect jump points of $x$. Indeed, let $(a_k)$ be a fixed convergent series of strictly positive numbers and let $t_1, t_2,\ldots$ be the jump times of $x$ ordered such that $\abs{\Delta x(t_1)}\geq \abs{\Delta x(t_2)} \geq \ldots$ and $t_k <  t_{k+1}$ if $\abs{\Delta x(t_k)}= \abs{\Delta x(t_{k+1})}$. 
	\newcommand{\jumpTime}[1]{\delta({#1})}
	Set $\jumpTime{t} := \sum_{k} a_k \indicator_{\{t_k\leq t\}}$, the total ``fictitious'' time added to parametrize the jumps of $x$ up to time $t$.
	
  Consider the time-changes $\gamma_n(t) := n\int_{t-1/n}^t(\jumpTime{u} +u)\diff u$ and $\gamma_0(t) := \jumpTime{t} + t$, $t\geq 0$, together with their continuous inverses $\gamma_n^{-1}(s) :=\inf\{u > 0 \mid \gamma_n(u) > s\}$ for $s\geq 0$, $n\geq 0$.
	It is easy to check that we have 
	\begin{equation}
		\gamma_n^{-1}(s) - 1/n < \gamma_0^{-1}(s) < \gamma_n^{-1}(s) < \infty \quad \text{for } s\geq 0,
	\end{equation}
	because $\gamma_n(t) < \gamma_0(t) < \gamma_n(t + 1/n)$, cf.~\cite[Lemma~6.1]{KurtzPardouxProtter95}. 
	Consider the sequence
	\(
		u_n(s):= x_n(\gamma_n^{-1}(s))
	\)
	for $s\geq 0$ and let
	\[ 
		u(s) := \begin{cases}
				x(\gamma_0^{-1}(s)) & \text{if $\eta_1(s) = \eta_2(s)$},
			\\	x(\gamma_0^{-1}(s))\cdot \frac{s - \eta_1(s)}{\eta_2(s) - \eta_1(s)} + x(\gamma_0^{-1}(s)-)\cdot \frac{ \eta_2(s) - s}{\eta_2(s) - \eta_1(s)} & \text{if $\eta_1(s) \neq \eta_2(s)$},
			\end{cases}
	\]
	where $[\eta_1(s), \eta_2(s)]$ is the ``fictitious'' time added for a jump at time $t = \gamma_0^{-1}(s)$, i.e.\ $\eta_1(s):= \sup\{\tilde s \mid \gamma_0^{-1}(\tilde s) < \gamma_0^{-1}(s)\}$ and $\eta_2(s):= \inf\{\tilde s \mid \gamma_0^{-1}(\tilde s) > \gamma_0^{-1}(s)\}$, as in \cite[p.~368]{KurtzPardouxProtter95}. 
	Then \cite[Lemma~6.2]{KurtzPardouxProtter95} gives $\lim_{n \to \infty}u_n = u$, uniformly on bounded intervals; our setup corresponds to $f \equiv 1$ there, so our $u_n,u$ correspond to  $V^{1/n}, V$ there.
	
	Now the claim follows by observing that $(u_n, \gamma^{-1}_n)$ is a parametric representation of the completed graph of $x_n$, i.e.~$(u_n, \gamma^{-1}_n) \in \Pi(x_n)$, and $(u, \gamma_0^{-1}) \in \Pi(x)$ which are arbitrarily close when $n$ is big.
\end{proof}
\begin{remark}\label{rmk:M1 important}
	A direct corollary of \cref{prop:conv of Wong-Zakai in M1} is that $D([0,T])$ is the closure of the set of absolutely continuous functions  in the Skorokhod $M_1$ topology, in contrast to the uniform or Skorokhod $J_1$ topologies where a jump in the limit can only be approximated by jumps of comparable sizes.  
\end{remark}
\begin{remark}[Extended paths] \label{rmk: extended paths}
	To include trading strategies that could additionally have initial and terminal jumps in our analysis, one may embed the paths of such strategies in the slightly larger space $D([-\varepsilon, T+\varepsilon];\RR)$ for some $\varepsilon> 0$, e.g.\ $\varepsilon = 1$, by setting $x(s)= x(0-)$ for $s\in [-\varepsilon, 0)$ and $x(s) = x(T+)$ for $s\in (T, T+\varepsilon]$; we will refer to thereby embedded paths as \emph{extended paths}.
	This extension is relevant when trying to approximate jumps at terminal time by absolutely continuous strategies in a non-anticipative way as e.g.\ in \cref{prop:conv of Wong-Zakai in M1} where it is clear that a bit more time could be required after a jump occurs in order to approximate it.  In particular, by considering extended paths the result of \cref{prop:conv of Wong-Zakai in M1} holds if one allows for initial and terminal jumps of $x$, but convergence holds in the extended paths space.
\end{remark}

\subsection{Main stability results}
\label{sec:main stability results}

Our main result is stability of  the functional $L$ defined by the right-hand side of \eqref{eq:def of proceeds process} for processes $\assetsProcess$ with c\`{a}dl\`{a}g paths.
\begin{theorem}\label{thm:stability in j1 and m1}
Let a sequence of predictable processes $(\assetsProcess^n)$ converge to the predictable process $\assetsProcess$ in $(D, \rho)$, in probability, where 
$\rho$ denotes the uniform topology, the Skorokhod $J_1$ or $M_1$ topology, being generated by a suitable metric $d$.
Assume that $(\assetsProcess^n)$ is 
bounded in $L^0(\PP)$, i.e.\ there exists $K\in L^0(\PP)$ such that $\sup_{0\leq t \leq T}|\assetsProcess^n_t|\leq K$ for all $n$.
	Then the sequence of processes $L(\assetsProcess^n)$ converges  to $L(\assetsProcess)$  in $(D, \rho)$ in probability, i.e.\
	\begin{equation}\label{eq:stability in metric}
		\PP\brackets[\big]{ d\bp{L(\assetsProcess^n), L(\assetsProcess)}\geq \varepsilon } \to 0 \qquad \text{  for }n \to \infty \text{ and   $\varepsilon > 0$}.
	\end{equation}
In particular, there is a subsequence $L(\assetsProcess^ {n_k})$ that converges a.s.\ to $L(\assetsProcess)$ in $(D, \rho)$.
\end{theorem}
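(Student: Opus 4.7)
The plan is to pass to a.s.\ convergence along a subsequence (convergence in probability in $(D,\rho)$ is equivalent to: every subsequence admits an a.s.-convergent further subsequence) and then argue pathwise. So assume $\assetsProcess^n \to \assetsProcess$ a.s.\ in $(D,\rho)$ with $\sup_n \sup_{t\le T}\abs{\assetsProcess^n_t} \le K(\omega) < \infty$. By \cref{prop:cont of resilience} this yields $Y^{\assetsProcess^n} \to Y^\assetsProcess$ a.s.\ in $(D,\rho)$, and Gronwall applied to \eqref{eq:deterministic Y_t dynamics} propagates the bound to $\sup_n \sup_{t\le T}\abs{Y^{\assetsProcess^n}_t} \le K'(\omega) < \infty$. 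The structural fact used throughout is that $\baseS$ is quasi-left continuous while $\assetsProcess$ (hence $Y^\assetsProcess$) is predictable, so a.s.\ the jump sets of $\baseS$ and $Y^\assetsProcess$ are disjoint; in particular $Y^\assetsProcess_u = Y^\assetsProcess_{u-}$ whenever $\Delta\baseS_u \ne 0$, and the pair $(\baseS, Y^\assetsProcess)$ has no common jumps.

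The four summands of \eqref{eq:def of proceeds process} are then treated separately. The evaluation $G(\baseS_\cdot, Y^{\assetsProcess^n}_\cdot) \to G(\baseS_\cdot, Y^\assetsProcess_\cdot)$ in $(D,\rho)$ follows, for the uniform and $J_1$ topologies, from composition with continuous $G$; for $M_1$ one uses the no-common-jumps property to lift marginal convergence to joint convergence $(\baseS, Y^{\assetsProcess^n}) \to (\baseS, Y^\assetsProcess)$ in $D([0,T];\RR^2)$ under $M_1$ (see \cite[Ch.~12]{Whitt2002_book}), after which composition with the $C^{2,1}$ function $G$ preserves $M_1$-convergence. The $\diff\langle M\rangle$-integral converges uniformly in $t$ by dominated convergence: $\langle M\rangle$ has locally bounded density, $M_1$-convergence forces pointwise convergence off the countable jump set of $Y^\assetsProcess$ (which is $\langle M\rangle$-null), and the uniform bound on $Y^{\assetsProcess^n}$ together with local Lipschitzness of $g,h,G_{xx}$ supplies a uniform pathwise envelope. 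The jump sum ranges over the a.s.-countable jump set of $\baseS$; since $Y^{\assetsProcess^n}_u \to Y^\assetsProcess_u$ at each such $u$ (a continuity point of $Y^\assetsProcess$) and a second-order Taylor bound controls each summand by $C(\omega)(\Delta\baseS_u)^2$ uniformly in $n$, dominated convergence over the countable sum closes this case.

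The main obstacle is the stochastic integral $I^n := \int_0^\cdot G_x(\baseS_{u-}, Y^{\assetsProcess^n}_{u-})\diff\baseS_u$, since $M_1$-stability of stochastic integrals against a jumpy integrator is delicate. I would decompose $\baseS$ into its continuous martingale part $\baseS^c$, its purely discontinuous martingale part $\baseS^d$, and the drift $\int_0^\cdot \baseS_{u-}\xi_u \diff\langle M\rangle_u$. Against the drift and $\baseS^c$ (continuous integrators) convergence to the limit integral follows from dominated convergence combined with It\^o's isometry on compacts: the integrands converge off the $\langle M\rangle$- and $\langle \baseS^c\rangle$-null jump set of $Y^\assetsProcess$, and are uniformly bounded by the envelope from the previous paragraph. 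Against $\baseS^d$ I exploit the purely discontinuous structure: the integral equals the compensated sum $\sum_{0 < u \le \cdot} G_x(\baseS_{u-}, Y^{\assetsProcess^n}_{u-}) \Delta\baseS_u$ minus a predictable compensator absolutely continuous with respect to $\diff\langle M\rangle$ (handled once more by dominated convergence). Since predictable jumps of $Y^{\assetsProcess^n}$ a.s.\ miss the totally inaccessible jump times of $\baseS$, at each jump time $u$ of $\baseS$ one has $G_x(\baseS_{u-}, Y^{\assetsProcess^n}_{u-}) \to G_x(\baseS_{u-}, Y^\assetsProcess_{u-})$, while a uniform $L^2$-envelope (from local square-integrability of $M$ and the uniform integrand bound) permits passing to the limit in the compensated jump sum. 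Assembling these three contributions gives ucp convergence of $I^n$, which is stronger than $J_1$- and $M_1$-convergence; combined with the convergence of the other three summands this proves the statement pathwise, hence in probability via the subsequence principle, and the final assertion about an a.s.-convergent subsequence is just the standard extraction.
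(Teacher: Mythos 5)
Your overall architecture (subsequence reduction, pathwise boundedness, term-by-term treatment of \eqref{eq:def of proceeds process}, It\^o isometry for the stochastic integral) matches the paper's, but there is a genuine gap at the final step, and it is precisely the step the theorem is really about. You write that ucp convergence of the stochastic integral and jump sum, ``combined with the convergence of the other three summands,'' proves the statement pathwise. This does not follow: $(D,J_1)$ and $(D,M_1)$ are not topological vector spaces, and addition fails to be continuous exactly when summands have common jumps. That is the situation here. The term $\int_0^\cdot G_x(\baseS_{u-},Y^{n}_{u-})\diff\baseS_u+\Sigma^n$ converges only uniformly (in probability), while $-G(\baseS,Y^{n})\to-G(\baseS,Y)$ converges only in $\rho$ (e.g.\ only in $M_1$ when $\assetsProcess^n$ are continuous approximations of a jump strategy), and the two limits have common, mutually cancelling jumps at the jump times of $\baseS$. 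Uniform convergence of one summand does not rescue you: take $c_n=c=-\indicator_{[1,2]}$ (trivially uniformly convergent) and $e_n$ continuous, increasing from $0$ to $1$ on $[1-1/n,1]$, so $e_n\to\indicator_{[1,2]}$ in $M_1$; then $c+e=0$ but $c_n+e_n$ is a tent of height close to $1$ near $t=1$ and does not converge to $0$ in $M_1$ (an analogous $\indicator_{[1-1/n,1)}$ spike kills the $J_1$ case). So your assembly step is exactly where a naive argument breaks.

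The paper closes this gap with \cref{lemma: J1 cancellation} and \cref{lemma: M1 cancellation}: addition \emph{is} continuous along these particular sequences because the cancellation happens only at jump times of $\baseS$, which are totally inaccessible and hence a.s.\ continuity points of the predictable process $Y$; consequently $Y^n\to Y$, and therefore $G(\baseS,Y^n)\to G(\baseS,Y)$, \emph{locally uniformly} in a neighbourhood of each such time, which is the extra hypothesis the cancellation lemmas need. You correctly identified and used the no-common-jumps structure of $(\baseS,Y)$ earlier in your argument, but you must invoke it again here to justify the summation. A secondary, more minor remark: your decomposition of the stochastic integral into contributions from $\baseS^c$ and $\baseS^d$, with the purely discontinuous part rewritten as a compensated jump sum, is unnecessary and somewhat delicate (the jump sum need not converge absolutely); the paper's route — reduce to a martingale integrator by absorbing the drift, localize, and apply It\^o's isometry plus Doob to the whole martingale part using $Y^n_-\to Y_-$ in $L^2(\diff\PP\otimes\diff\angles{\baseS})$ — is cleaner and avoids these issues.
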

Note that e.g.\ for almost sure convergence $\assetsProcess^n \to \assetsProcess$ in $(D,\rho)$, the $L^0(\PP)$ boundedness condition is automatically fulfilled.
\begin{proof}
By considering subsequences, one could assume that the sequence $(\assetsProcess^n)$ converges to $\assetsProcess$ in $(D, \rho)$ a.s.
The idea for the proof is to show that each summand in the definition of $L$ is continuous.
But as $D$ endowed with $J_1$ or $M_1$ is not a topological vector space, since addition is not continuous in general, further arguments will be required.
Addition is continuous (and hence also multiplication) if for instance the summands have no common jumps, see \cite[Prop.~VI.2.2]{JacodShiryaev2003_book} for $J_1$ and \cite[Cor.~12.7.1]{Whitt2002_book} for $M_1$.
In our case however, there are three terms in $L$ that can have common jumps, namely the stochastic integral process $\int_0^\cdot G_x(\baseS_{u-}, Y_{u-})\diff \baseS_u$, the sum $\Sigma:= \sum_{u\le \cdot} \bp{ G(\baseS_u,Y_u) - G(\baseS_{u-}, Y_u) - G_x(\baseS_{u-},Y_u) \Delta\baseS_u }$ of jumps and the term $-G(\baseS, Y)$.
At jump times of $\assetsProcess$ (i.e.\ of $Y$) which are predictable stopping times, 
$\baseS$ does not jump since it is quasi-left continuous. Hence the only common jump times can be jumps times of $\baseS$ which are totally inaccessible.
If $\Delta \baseS_\tau \neq 0$, we have then $\Delta (\int_0^\cdot G_x(\baseS_{u-}, Y_{u-})\diff \baseS_u)_\tau = G_x(\baseS_{\tau-}, Y_\tau) \Delta \baseS_\tau$ and also $\Delta(-G(\baseS,Y))_\tau = -\bp{G(\baseS_\tau, Y_\tau) - G(\baseS_{\tau-}, Y_\tau) }$, because $\Delta Y_\tau = 0$ a.s.
Since moreover $\Delta \Sigma_\tau = G(\baseS_\tau, Y_\tau) - G(\baseS_{\tau-}, Y_\tau) - G_x(\baseS_{\tau-}, Y_\tau) \Delta \baseS_\tau$, one has cancellation of jumps at jump times of $\baseS$.
However, these are times of continuity for $Y$ and this will be crucial below to deduce continuity of addition on the support of $\bp{ \int_0^\cdot G_x(\baseS_{u-}, Y_{u-})\diff \baseS_u, \Sigma, -G(\baseS, Y) }$ in $(D,\rho)\times (D, \rho) \times (D, \rho)$.

	First consider the case of uniformly bounded sequence $(\assetsProcess^n)$.
	Then the  processes 
	\[
		\diff Y^n_t = -h(Y^n_t)\diff \langle M \rangle_t + \diff \assetsProcess^n_t, \quad Y^n_{0-} = y\,,
	\]
	are uniformly bounded, so we can assume 
	w.l.o.g.\ that $h$, $gh$, $G$, $G_x$ and $G_{xx}$ are $\omega$-wise Lipschitz continuous and bounded (it is so on the range of all $Y^n$, $Y$, which is contained in a compact subset of $\RR$).
	By \cref{prop:cont of resilience} we have $Y^n \to Y$ in $(D, \rho)$, almost surely. 
	This implies $(\baseS, Y^n) \to (\baseS, Y)$ almost surely, by absence of common jumps of $\baseS$ and $Y$, cf.\ \cite[Prop.~VI.2.2b]{JacodShiryaev2003_book} for $J_1$ and\footnote{Using the strong $M_1$ topology in $D([0,\infty);\RR^2)$.} \cite[Thm.~12.6.1 and 12.7.1]{Whitt2002_book} for $M_1$.
	By the Lipschitz property of $G$ and (for the $M_1$ case) monotonicity of $G(\cdot,y)$ and $G(x,\cdot)$, we get
	\begin{equation} \label{eq: convergence of G(S,Yn)}
		G(\baseS, Y^n) \to G(\baseS, Y) \quad\text{in $(D, \rho)$, a.s.}
	\end{equation}
	Indeed, for the $M_1$ topology, it is easy to see that $(G(u^1,u^2),r) \in \Pi(G(\baseS,Y))$ for any parametric representation $((u^1,u^2),r)$ of $(\baseS,Y)$,
	because at jump times $t$ of $G(\baseS, Y)$, $z\mapsto r(z) \equiv t$ is constant on an interval $[z_1,z_2]$, and either $u^1$ or $u^2$ is constant on $[z_1,z_2]$.
	
	Note that jump times of $\assetsProcess$ and $Y$ coincide, and form a random countable subset of $[0,T]$. 
	Moreover, convergence in $(D, \rho)$ implies local uniform convergence at continuity points of the limit (for $\rho$ being the $M_1$ topology, cf.\ \cite[Lemma~12.5.1]{Whitt2002_book}, for the $J_1$ topology cf.\ \cite[Prop.~VI.2.1]{JacodShiryaev2003_book}).
	Hence, $Y^n_t \to Y_t$ for almost all $t\in [0,T]$, $\PP$-a.s.
	By Lipschitz continuity of $G_{xx}$ and $gh$, we get
	\(
		\tfrac{1}{2}G_{xx}(\baseS_t,Y^n_t) - g(\baseS_t, Y^n_t)h(Y^n_t) \to \tfrac{1}{2}G_{xx}(\baseS_t,Y_t) - g(\baseS_t, Y_t)h(Y_t)
	\), for almost-all $t \in [0,T]$, $\PP$-a.s.
	By dominated convergence, we conclude that
	\[
		\int_0^\cdot \!\! \paren[\big]{ \tfrac{1}{2}G_{xx}(\baseS_u,Y^n_u) - g(\baseS_u, Y^n_u)h(Y^n_u) } \diff \angles{M}_u 
		\to \!\int_0^\cdot \!\! \paren[\big]{ \tfrac{1}{2}G_{xx}(\baseS_u,Y_u) - g(\baseS_u, Y_u)h(Y_u) } \diff\angles{M}_u
	\]
	uniformly on $[0,T]$, a.s., using that $\langle M \rangle$ is 
absolutely continuous w.r.t.~Lebesgue measure. 
	Hence these two summands in the definition of $L$, see \eqref{eq:def of proceeds process}, are  ($\omega$-wise) continuous in $\assetsProcess$.
	
	Now we treat the stochastic integral and jump terms in \eqref{eq:def of proceeds process}. 
	By the above arguments we can also deal with the drift in the process $\baseS$. 
	Thus we may assume w.l.o.g.\ that $\baseS$ is a martingale.
	In particular, up to a localization argument (see below for details), we can assume that $\baseS$ is bounded and therefore the stochastic integral is a true martingale, since the integrand is bounded.
	Having $Y^n \to Y$ a.e.\ on the space $(\Omega\times [0,T], \PP\otimes\text{Leb}([0,T]))$, we can conclude convergence of the stochastic integrals in the uniform topology, in probability. 
	Dominated convergence on $\bp{ [0,T], \text{Leb}([0,T]) }$ yields
	\[
		\int_0^T (Y^n_{u-} - Y_{u-})^2 \diff \angles{\baseS}_u \to 0 \quad \text{as }n \to \infty, \ \PP\text{-a.s.}
	\] 
	Since $Y^n, Y$ are uniformly bounded one gets,
	again by dominated convergence, that
	\[
		\EE\brackets[\Big]{ \int_0^T (Y^n_{u-} - Y_{u-})^2 \diff \angles{\baseS}_u } \to 0 \quad \text{as }n \to \infty,
	\]
	i.e. $Y^n_{-} \rightarrow Y_{-}$ in $L^2(\Omega\times [0,T], \mathrm{d}\PP \otimes  \mathrm{d} \angles{\baseS})$. 
	By localization (to bound $\baseS$ and use that $G_x(x,y)$ is locally Lipschitz in $y$), It\^o's isometry and Doob's inequality, we get
	\begin{equation}\label{conv of stoch integr}
		\PP\brackets[\bigg]{ \sup_{0\leq t\leq T} \abs[\Big]{\int_0^t G_x(\baseS_{u-}, Y^n_{u-})\diff \baseS_u - \int_0^t G_x(\baseS_{u-}, Y_{u-})\diff \baseS_u } \geq \varepsilon } \to 0 \quad \text{as }n\to \infty.
	\end{equation}
	For the sum of jumps $\Sigma^n$ (defined like $\Sigma$, but with $Y^n$ instead of $Y$) we have a.s.\ uniform convergence $\Sigma^n \to \Sigma$ by \cref{lemma: uniform jump-sum convergence}.
	Hence $\int_0^t G_x(\baseS_{u-}, Y^n_{u-})\diff \baseS_u + \Sigma^n$ converges in ucp.
	To conclude on the proceeds, note that at jump times of $\baseS$, when cancellation of jumps occurs, one has continuity of $Y$ and hence local uniform convergence of the sequence $Y^n$. 
	For our setup, \cref{lemma: J1 cancellation,lemma: M1 cancellation} show continuity of addition on the support of $\bp{ \int_0^\cdot G_x(\baseS_{u-}, Y_{u-})\diff \baseS_u + \Sigma, -G(\baseS, Y) }$ (along the support of $\bp{ \int_0^\cdot G_x(\baseS_{u-}, Y^n_{u-})\diff \baseS_u + \Sigma^n, -G(\baseS, Y^n) }$) for the $J_1$ and $M_1$ topologies, respectively. 
	So the continuous mapping theorem \cite[Lem.~4.3]{Kallenberg02} yields the claim for the proceeds functional $L$ (the uniform topology being stronger than $\rho$).
	
It remains to investigate the more general case of $\baseS$ and $(\assetsProcess^n)$ being only bounded in $L^0(\PP)$. 
Note that the continuity of all terms except the stochastic integral in the definition of $L$ was proven $\omega$-wise; in this case $\sup_{n} \sup_{0\leq t\leq T} \abs{\assetsProcess^n_t(\omega)} < \infty$ (by the a.s.~convergence of $\assetsProcess^n$ to $\assetsProcess$ in $(D,\rho)$) and hence the same arguments carry over here by restricting our attention to compact sets (depending on $\omega$). 
Hence refinement of the argument above is only needed for the stochastic integral term. 
The bound on $\baseS$ and $(\assetsProcess^n)$ means that for every $\varepsilon > 0$ there exists $\Omega_{\varepsilon}\in \scF$ with $\PP(\Omega_\varepsilon) > 1- \varepsilon$ and a positive constant $K_\varepsilon$ which is a uniform bound for the sequence (together with the limit $\assetsProcess$) on $\Omega_\varepsilon$. 
For the stopping time $\tau := \inf \tau_n$, where $\tau_n := \inf\{t\geq 0\mid \abs{\assetsProcess^n_t} \vee \abs{\baseS_t} > K_\varepsilon \}\wedge T$ ($\tau$ is a stopping time because the filtration is right-continuous by our assumptions), we then have that $\tau = T$ on $\Omega_\varepsilon$. 
By the arguments above we conclude that $d\bp{ \int_0^{\cdot \wedge \tau} G_x(\baseS_{u-}, Y^n_{u-})\diff \baseS_u, \int_0^{\cdot \wedge \tau} G_x(\baseS_{u-}, Y_{u-})\diff \baseS_u } \to 0$ in probability. 
Since $\int_0^{\cdot\wedge\tau} G_x(\baseS_{u-}, Y^n_{u-})\diff \baseS_u = \int_0^\cdot G_x(\baseS_{u-}, Y^{n}_{u-})\diff \baseS_u$ on $\Omega_\varepsilon$, we conclude 
\[
	\PP\brackets[\bigg]{d\paren[\Big]{\int_0^\cdot G_x(\baseS_{u-}, Y^{n}_{u-})\diff \baseS_u, \int_0^\cdot G_x(\baseS_{u-}, Y_{u-})\diff \baseS_u}\geq \varepsilon} \leq 2\varepsilon
\]
for all $n$ large enough, and this finishes the proof since $\varepsilon$ was arbitrary.
\end{proof}

\begin{remark}
Inspection of the proof above reveals that predictability of the strategies is only needed to show why the addition map is continuous when there is cancellation of jumps in \eqref{eq:def of proceeds process}; indeed, for predictable $\assetsProcess$ the processes $Y^{\assetsProcess}$ and $\baseS$ will have no common jump and this was sufficient for the arguments. 
However, in the case when $M$ (and thus $\baseS$) is continuous, only one term in  \eqref{eq:def of proceeds process} might have jumps, namely $G(\baseS, Y^\assetsProcess)$. 
Hence, in this case the conclusion of \cref{thm:stability in j1 and m1} even holds under the relaxed assumption that the c\`{a}dl\`{a}g strategies are merely adapted, instead of being predictable.
\end{remark}

\begin{remark}\label{rmk:positive prices}
	Our assumption of positive prices (and monotonicity of $x \mapsto g(x,y)$) has been (just) used to prove the  $M_1$-convergence of $G(\baseS, Y^n)$ in \eqref{eq: convergence of G(S,Yn)}.
	If one would want to consider a model where prices could become negative (like additive impact $S = \baseS + f(Y)$, see  \cref{ex: additive or multiplicative impact}), then $M_1$-continuity of proceeds would not hold in general, as a simple counter-example can show.
	Yet, the above proof still shows $L_t(\assetsProcess^n) \to L_t(\assetsProcess)$ in probability, for all $t \in [0,T]$ where $\Delta\assetsProcess_t = 0$.
	Also note that for continuous $\assetsProcess^n$ converging in $M_1$ to a continuous strategy $\assetsProcess$, hence also uniformly, one obtains that proceeds $L(\assetsProcess^n) \to L(\assetsProcess)$ converge uniformly, in probability.
\end{remark}

An important consequence of \cref{thm:stability in j1 and m1} is a stability property for our model.
It essentially implies that we can approximate each strategy by a sequence of absolutely continuous strategies, corresponding to small intertemporal shifts of reassigned trades, whose proceeds will approximate the proceeds of the original strategy.
More precisely, if we restrict our attention to the class of monotone strategies, then we can restate this stability in terms of the Prokhorov metric on the pathwise proceeds (which are monotone and hence define measures on the time axis).
This result on stability of proceeds with respect to small intertemporal Wong-Zakai-type re-allocation of orders may be compared to seminal work by \cite{HindyHuangKreps92} on a different but related problem, who required that for economic reason the utility should be a continuous functional of cumulative consumption with respect to the  L\'evy--Prokhorov metric $d_{\text{LP}}$, in order to satisfy the sensible property of intertemporal substitution for consumption. Recall for convenience of the reader the definition of $d_{\text{LP}}$ in our context: for increasing c\`{a}dl\`{a}g paths on $[0,\tilde T]$, $x, y:[0,\tilde T]\rightarrow \RR$ with $x(0-) = y(0-)$ and $x(\tilde T) = y(\tilde T)$, 
\[
	d_{\text{LP}}(x,y) := \inf\{\varepsilon > 0\mid x(t)\leq y((t+\varepsilon)\wedge  \tilde T) + \varepsilon, \ \ y(t)\leq x((t+\varepsilon)\wedge \tilde T) + \varepsilon\ \ \forall t\in [0,\tilde T]\}.
\]

\begin{corollary}\label{cor:WZ proceeds converge}
	Let $\assetsProcess$ be a predictable
	process with c\`adl\`ag paths defined on the time interval $[0,T]$ (with possible initial and terminal jumps) that is extended to the time interval $[-1,T+1]$ as in \cref{rmk: extended paths}. 
	Consider 
	the sequence of f.v.~processes $(\assetsProcess^n)$ where 
	\begin{equation}\label{eq: WZ approx sequence}
		\assetsProcess^n_t := n\int_{t-1/n}^{t}\assetsProcess_s\diff s, \quad t\geq 0,
	\end{equation}
	and let $L := L(\assetsProcess) , L^n:= L(\assetsProcess^n)$ be the proceeds processes from the respective trading.
	Then $L^n_t \to L_t$ at all continuity points $t\in [0,T+1]$ of $L$ as $n\to \infty$, in probability.
	In particular, for any bounded monotone strategy $\assetsProcess$  the Borel measures $L^n(\mathrm d t;\omega)$ and $L(\mathrm d t;\omega)$ on $[0,T+1]$ are finite (a.s.) and converge in the L\'evy--Prokhorov metric $d_{\text{LP}}(L^n(\omega), L(\omega))$ in probability, i.e.~for any $\varepsilon > 0$,
	\[
		\PP\brackets[\big]{d_{\text{LP}}(L^n(\omega), L(\omega)) > \varepsilon } \to 0 \quad \text{as }n\rightarrow \infty.
	\]
\end{corollary}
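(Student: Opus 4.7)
The strategy is to chain the pathwise Wong--Zakai approximation of \cref{prop:conv of Wong-Zakai in M1} with the main stability result \cref{thm:stability in j1 and m1}, and then to extract the two stated modes of convergence as consequences of $M_1$-convergence.

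First, fix $\omega$ in a set of full probability. The extended path $\assetsProcess(\omega) \in D([-1, T+1])$ is c\`{a}dl\`{a}g and hence bounded on the compact interval $[-1, T+1]$, so the averages $\assetsProcess^n(\omega)$ defined by \eqref{eq: WZ approx sequence} inherit the same $\omega$-dependent bound. In particular, the sequence $(\assetsProcess^n)$ is bounded in $L^0(\PP)$. Applied pathwise, \cref{prop:conv of Wong-Zakai in M1} together with \cref{rmk: extended paths} gives $\assetsProcess^n \to \assetsProcess$ in $(D([-1, T+1]), M_1)$ almost surely. \cref{thm:stability in j1 and m1} (with $\rho$ the $M_1$-topology) therefore yields $L^n \to L$ in $(D, M_1)$, in probability. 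Since $M_1$-convergence in $D$ implies pointwise convergence at every continuity point of the limit (cf.~\cite[Lemma 12.5.1]{Whitt2002_book}), we obtain $L^n_t \to L_t$ in probability at every continuity point $t$ of $L$, which is the first claim.

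For the second claim, suppose $\assetsProcess$ is bounded and (say) non-increasing. Then by \eqref{eq: WZ approx sequence} each $\assetsProcess^n$ is continuous, non-increasing, and uniformly bounded by the same constant; the corresponding proceeds processes $L$ and $L^n$ are a.s.\ non-decreasing and bounded on $[0, T+1]$, hence they are distribution functions of finite Borel measures $L(\diff t; \omega)$ and $L^n(\diff t;\omega)$. Moreover, by the extension convention of \cref{rmk: extended paths}, $\assetsProcess$ is constant on $(T, T+1]$, so no trading takes place there either for $\assetsProcess$ or, for $n$ large enough, for $\assetsProcess^n$. Consequently $T+1$ is a continuity point of $L$, and by the first claim $L^n(T+1) \to L(T+1)$ in probability.

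The final step is the classical fact that for non-decreasing bounded c\`{a}dl\`{a}g functions on a compact interval, $M_1$-convergence together with convergence of the terminal value is equivalent to weak convergence of the associated Stieltjes measures, which on the Polish space $[0, T+1]$ is metrized by $d_{\text{LP}}$. Combined with the already established $M_1$-convergence of $L^n$ to $L$ in probability and the endpoint convergence at $T+1$, this gives $d_{\text{LP}}(L^n(\omega), L(\omega)) \to 0$ in probability, completing the proof. The most delicate point is identifying $M_1$-convergence of the monotone distribution functions with $d_{\text{LP}}$-convergence of the induced measures; the required convergence of total masses is guaranteed precisely by the path extension, which precludes trades on $(T, T+1]$ and thus makes $T+1$ a continuity point of $L$.
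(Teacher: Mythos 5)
Your proposal is correct and follows essentially the same route as the paper: chain \cref{prop:conv of Wong-Zakai in M1} with \cref{thm:stability in j1 and m1} to get $M_1$-convergence of $L^n$ to $L$ in probability, deduce pointwise convergence at continuity points of $L$ from \cite[Lemma~12.5.1]{Whitt2002_book}, and then identify $d_{\text{LP}}$-convergence with weak convergence of the induced measures, i.e.\ convergence at continuity points of the distribution function together with the total mass. Your added details (verifying the $L^0(\PP)$-bound hypothesis and observing that the path extension makes $T+1$ a continuity point, so the total masses converge) are welcome elaborations of steps the paper leaves implicit, not a different argument.
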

\begin{proof}
	An application of \cref{prop:conv of Wong-Zakai in M1} together with \cref{thm:stability in j1 and m1} gives 
	\[
		d_{M1}(L^n, L) \xrightarrow{\PP} 0.
	\]
	The first part of the claim now follows from the fact that convergence in $M_1$ implies local uniform convergence at continuity points of the limit, see \cite[Lemma~12.5.1]{Whitt2002_book}. 
	The same property implies the claim about the L\'evy--Prokhorov metric because convergence in this metric is equivalent to weak convergence of the associated measures which on the other hand is equivalent to convergence at all continuity points of the cumulative distribution function (together with the total mass).
\end{proof}
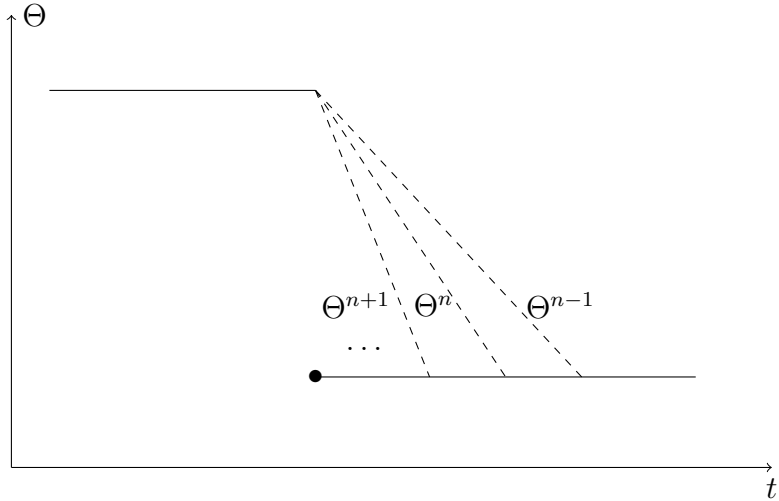
\begin{figure}[ht]%
	\centering
	\begin{tikzpicture}
		\coordinate (left)           at (0.5, 5);
		\coordinate (beforeJump)     at (4,   5);
		\coordinate (afterJump)      at (4,   1.2);
		\coordinate (right)          at (9,   1.2);
		\coordinate (afterDecreaseA) at (5.5, 1.2);
		\coordinate (afterDecreaseB) at (6.5, 1.2);
		\coordinate (afterDecreaseC) at (7.5, 1.2);
		
		\draw[->] (0,0) -- (10,0) coordinate[label = {below:$t$}] (xmax);
		\draw[->] (0,0) -- (0,6) coordinate[label = {right:$\assetsProcess$}] (ymax);
		
		\draw (left) -- (beforeJump);

		\node at (afterJump) {\textbullet};
		\draw (afterJump) -- (right);
		
		\draw[dashed] (beforeJump) -- (afterDecreaseA) node [near end, left] {$\assetsProcess^{{n+1}}$};
		\draw[dashed] (beforeJump) -- (afterDecreaseB) node [near end, left=-0.2em] {$\assetsProcess^{{n}}$};
		\draw[dashed] (beforeJump) -- (afterDecreaseC) node [near end, right] {$\assetsProcess^{{n-1}}$};
		
		\coordinate (midtimeJump) at ($0.5*(afterJump) + 0.5*(afterDecreaseA)$);
		\node at ($0.9*(midtimeJump) + 0.1*(beforeJump)$) {\dots};
		
	\end{tikzpicture}
	\caption{The Wong--Zakai approximation in \eqref{eq: WZ approx sequence} for a single jump process.}
	\label{fig:WZ}
\end{figure}

Note that the sequence $(\assetsProcess^n)$ from \cref{cor:WZ proceeds converge} satisfies $\assetsProcess^n \equiv \assetsProcess_T$ on $[T+1/n, T+1]$ for all $n$, i.e.~the approximating strategies arrive at the position $\assetsProcess_T$, however by requiring a bit more time to execute. Based on the Wong--Zakai approximation sequence from \eqref{eq: WZ approx sequence}, we next show that each semimartingale strategy on the time interval $[0,T]$ can be approximated by simple adapted strategies with uniformly small jumps that, however, again need slightly more time to be executed. 
\begin{proposition}
	Let $(\assetsProcess_t)_{t\in [0,T]}$ be a predictable process with c\`adl\`ag paths 
	extended to the time interval $[0,T+1]$ as in \cref{rmk: extended paths}. 
	Then there exists a sequence $(\assetsProcess^n_t)_{t\in [0,T+1]}$ of simple predictable c\`{a}dl\`{a}g processes with jumps of size not more than $1/n$ such that $d_{M1}(L(\assetsProcess^n), L(\assetsProcess)) \xrightarrow{\PP}0$ as $n\rightarrow \infty$, where $d_{M1}$ denotes the Skorokhod $M_1$ metric on $D([0,T+1];\RR)$. 
	Moreover, if $\assetsProcess$ is continuous, the same convergence holds true in the uniform metric on $[0,T]$ instead.
\end{proposition}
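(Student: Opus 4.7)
The approach is a two-step approximation: first regularize $\assetsProcess$ to a continuous strategy via the Wong--Zakai sequence from \cref{cor:WZ proceeds converge}, then discretize that continuous strategy by level-crossings. Let
\begin{equation*}
\tilde\assetsProcess^n_t := n\int_{t-1/n}^t \assetsProcess_s\,\diff s
\end{equation*}
be the Wong--Zakai averages of the extended path on $[-1,T+1]$. Each $\tilde\assetsProcess^n$ is a continuous adapted (hence predictable) process, and \cref{prop:conv of Wong-Zakai in M1} yields $d_{M_1}(\tilde\assetsProcess^n, \assetsProcess) \to 0$ in probability on $[0,T+1]$. The remaining task is to approximate each continuous $\tilde\assetsProcess^n$ by a simple predictable process $\hat\assetsProcess^n$ whose jumps are uniformly bounded by $1/n$, in a way that preserves $M_1$-convergence of the proceeds via \cref{thm:stability in j1 and m1}.

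For fixed $n\geq 1$ set $\tau^n_0 := 0$ and recursively
\begin{equation*}
\tau^n_i := \inf\{t > \tau^n_{i-1} : |\tilde\assetsProcess^n_t - \tilde\assetsProcess^n_{\tau^n_{i-1}}| \geq 1/(2n)\} \wedge (T+1).
\end{equation*}
Since $\tilde\assetsProcess^n$ is continuous, each $\tau^n_i$ is a predictable stopping time, and the $\tau^n_i$ cannot accumulate strictly below $T+1$, so almost surely $\tau^n_{N^n} = T+1$ for some integer $N^n < \infty$. Define
\begin{equation*}
\hat\assetsProcess^n_t := \tilde\assetsProcess^n_0 + \sum_{i \geq 1} \bp{\tilde\assetsProcess^n_{\tau^n_i} - \tilde\assetsProcess^n_{\tau^n_{i-1}}}\indicator_{\{\tau^n_i \leq t\}}.
\end{equation*}
Then $\hat\assetsProcess^n$ is a simple predictable càdlàg process whose jumps $|\Delta \hat\assetsProcess^n_{\tau^n_i}|$ equal exactly $1/(2n) < 1/n$ by continuity of $\tilde\assetsProcess^n$, and $\sup_{t \in [0,T+1]} |\hat\assetsProcess^n_t - \tilde\assetsProcess^n_t| \leq 1/(2n)$ holds pathwise.

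Because the uniform metric dominates $d_{M_1}$, one obtains
\begin{equation*}
d_{M_1}(\hat\assetsProcess^n, \assetsProcess) \leq \tfrac{1}{2n} + d_{M_1}(\tilde\assetsProcess^n, \assetsProcess) \xrightarrow{\PP} 0,
\end{equation*}
while $\sup_t|\hat\assetsProcess^n_t| \leq \sup_t|\tilde\assetsProcess^n_t| + 1 \leq \sup_t|\assetsProcess_t| + 1$ shows that $(\hat\assetsProcess^n)$ is bounded in $L^0(\PP)$. \cref{thm:stability in j1 and m1} then delivers $d_{M_1}(L(\hat\assetsProcess^n), L(\assetsProcess)) \to 0$ in probability. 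For continuous $\assetsProcess$ on $[0,T]$, the Wong--Zakai regularization is unnecessary: apply the same level-crossing construction directly to $\assetsProcess$, so that $\sup_{t \in [0,T]}|\hat\assetsProcess^n_t - \assetsProcess_t| \leq 1/(2n)$ a.s., and invoke the uniform-topology case of \cref{thm:stability in j1 and m1}.

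The main technical point is to secure that the stopping times $\tau^n_i$ are predictable and do not accumulate, ensuring $\hat\assetsProcess^n$ is genuinely a simple predictable process with jumps uniformly below $1/n$; the continuity of $\tilde\assetsProcess^n$ — which is precisely why the intermediate Wong--Zakai step is used in the general case — makes both requirements routine, and the remaining $M_1$-stability of the proceeds is then a direct application of the main theorem.
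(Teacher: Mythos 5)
Your proof is correct and follows essentially the same route as the paper's: Wong--Zakai regularization via \cref{cor:WZ proceeds converge} (resp.\ \cref{prop:conv of Wong-Zakai in M1}), then a level-crossing discretization of the resulting continuous strategies, then \cref{thm:stability in j1 and m1} applied to the $L^0$-bounded sequence. The only point worth flagging is that your $\hat\assetsProcess^n$ has an almost surely finite but \emph{random} number of jumps, whereas the paper additionally truncates the level-crossing sum after $m(n)$ stopping times (using ucp convergence of the truncations) so that the approximants are simple processes in the strict sense of finitely many summands --- a one-line fix that does not otherwise affect your argument.
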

\begin{proof}
	Consider the Wong-Zakai-type approximation sequence $\widetilde \assetsProcess^n$ from \cref{cor:WZ proceeds converge} for which $d_{M_1}(L(\widetilde \assetsProcess^n), L(\assetsProcess))\xrightarrow{\PP}0$, where the Skorokhod $M_1$ topology is considered for the extended paths on time-horizon $[0,T+1]$. 
	Now we approximate each (absolutely) continuous process $\widetilde \assetsProcess^n$ by a sequence of simple processes as follows.

	For $\varepsilon > 0$, consider the sequence of stopping times with \(\sigma^{\varepsilon,n}_0 := 0\) and
	\begin{align*}
		\sigma^{\varepsilon,n}_{k+1} &:= \inf\braces[\big]{t \bigm| t > \sigma^{\varepsilon,n}_{k} \hbox{ and } \abs{\widetilde \assetsProcess^n_t - \widetilde\assetsProcess^n_{\sigma^{\varepsilon,n}_k}}  \ge \varepsilon } \wedge (\sigma^{\varepsilon,n}_k + 1/n) \quad \text{for $k\ge 0$}.
	\end{align*}
	Note that $\sigma^{\varepsilon,n}_k$ are predictable as hitting times of continuous processes and $\sigma^{\varepsilon,n}_k \nearrow \infty$ as $k\rightarrow \infty$ because the process $\widetilde\assetsProcess^n$ is continuous.
 When $\varepsilon \rightarrow 0$, we have $\assetsProcess^{\varepsilon,n}\xrightarrow{ucp}\widetilde\assetsProcess^n$  for 
	\[
		\assetsProcess^{\varepsilon,n} := \widetilde\assetsProcess^n_0 + \sum_{k = 1}^{\infty} \paren[\big]{ \widetilde\assetsProcess^n_{\sigma^{\varepsilon, n}_k} - \widetilde\assetsProcess^n_{\sigma^{\varepsilon, n}_{k-1}} } \indicator_{\rightOpenStochasticInterval{\sigma^{\varepsilon, n}_k, \infty}}.
	\] 
	Moreover, if for each integer $m\geq 1$ we define the (predictable) process $\assetsProcess^{\varepsilon,n,m}$ by 
	\[
		\assetsProcess^{\varepsilon,n,m} := \widetilde\assetsProcess^n_0 + \sum_{k = 1}^{m} \paren[\big]{ \widetilde\assetsProcess^n_{\sigma^{\varepsilon, n}_k} - \widetilde\assetsProcess^n_{\sigma^{\varepsilon, n}_{k-1}} } \indicator_{\rightOpenStochasticInterval{\sigma^{\varepsilon, n}_k, \infty}}
		\,,
	\] 
	then for each fixed $\varepsilon $ and $n$ we have $\assetsProcess^{\varepsilon,n,m} \xrightarrow{ucp} \assetsProcess^{\varepsilon,n}$ when $m\rightarrow \infty$.
	Hence, we can choose $\varepsilon = \varepsilon(n)$ small enough  and $m = m(n)$ big enough  such that
	\[
		d(\widetilde\assetsProcess^n, \assetsProcess^{\varepsilon(n), n, m(n)}) < 2^{-n},
	\]
	with $d(\cdot, \cdot)$ denoting a metric that metrizes ucp convergence (cf.\ e.g.~\cite[p.~57]{Protter04}).
	Thus, $\assetsProcess^n := \assetsProcess^{\varepsilon(n), n, m(n)}$ will be close to $\assetsProcess$ in the Skorokhod $M_1$ topology, in probability, because the uniform topology is stronger than the $M_1$ topology.

	Note that if $\assetsProcess$ is already continuous, no intermediate Wong-Zakai-type approximation would be needed, and so we obtain uniform convergence in probability in that case.
\end{proof}

The  previous theorem provided a general result on convergence in probability which relies solely on topological closeness of strategies.
Differently in spirit, an approximation idea due to \cite{BankBaum04} shows that one can actually approximate the proceeds of any strategy almost surely by some cleverly constructed continuous f.v.~strategies which can be implemented within the same time interval, if the base price $\baseS$ is continuous. 
\begin{proposition}[Almost sure uniform approximation~\emph{\`{a} la} Bank-Baum by continuous f.v.\ strategies]
	\label{prop:approx a la Bank-Baum}
	Suppose that $\baseS$ is continuous and $g(x,\cdot)$ and $h$ are continuously differentiable with locally H\"{o}lder-continuous derivatives for some index $\delta > 0$. 
	For any  predictable c\`{a}dl\`{a}g process $\assetsProcess$ on $[0,T]$ and any $\varepsilon>0$, there exists a continuous process $\assetsProcess^\varepsilon$ with f.v.~paths such $Y^\assetsProcess_T = Y^{\assetsProcess^\varepsilon}_T$, $\assetsProcess^\varepsilon_0 = \assetsProcess_{0-}$ and 
	$\abs{L_T(\assetsProcess) - L_T(\assetsProcess^\varepsilon)} \vee \abs{\assetsProcess_T - \assetsProcess^\varepsilon_T} \leq \varepsilon,\ \PP \text{-a.s.}$
\end{proposition}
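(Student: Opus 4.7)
The plan is a pathwise Bank--Baum--style construction. Since $\baseS$ is continuous, the jump sum in \eqref{eq:def of proceeds process} vanishes, simplifying $L_T(\assetsProcess)$ to the two integrals plus the boundary terms $-G(\baseS_T, Y^\assetsProcess_T) + G(\baseS_0, Y^\assetsProcess_{0-})$. The core idea is to choose $\assetsProcess^\varepsilon$ so that $Y^{\assetsProcess^\varepsilon}$ is continuous with $Y^{\assetsProcess^\varepsilon}_T = Y^\assetsProcess_T$ and $Y^{\assetsProcess^\varepsilon}_{0-} = Y^\assetsProcess_{0-}$; then the boundary terms cancel between the two representations and only the two integral differences remain, with integrands supported on narrow time windows where $Y^{\assetsProcess^\varepsilon}$ differs from $Y^\assetsProcess$.

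Fix $\omega$ in a full-measure set where $\baseS(\omega)$ is uniformly continuous on $[0,T]$ with modulus $\omega_\baseS(\cdot;\omega)$ and $Y^\assetsProcess(\omega)$ is bounded c\`adl\`ag. Enumerate the jumps $(\tau_i)_{i\ge 1}$ of $\assetsProcess$ by decreasing magnitude $\abs{\Delta\assetsProcess_{\tau_i}}$ and pick $N = N(\omega,\varepsilon)$ large enough to control the small-jump tail. For $i \le N$ choose disjoint intervals $I_i = (\tau_i - h_i, \tau_i] \subset [0,T]$, let $Y^\varepsilon$ linearly interpolate on each $I_i$ between $Y^\assetsProcess_{\tau_i-h_i}$ and $Y^\assetsProcess_{\tau_i}$, and elsewhere agree with $Y^\assetsProcess$ except for a small absolutely continuous correction chosen to absorb the small-jump tail and enforce $Y^\varepsilon_T = Y^\assetsProcess_T$. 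Set
\[
	\assetsProcess^\varepsilon_t := \assetsProcess_{0-} + (Y^\varepsilon_t - Y^\varepsilon_{0-}) + \int_0^t h(Y^\varepsilon_s)\diff\angles{M}_s,
\]
which is continuous of finite variation with $\assetsProcess^\varepsilon_0 = \assetsProcess_{0-}$, and satisfies $Y^{\assetsProcess^\varepsilon} = Y^\varepsilon$ by uniqueness for \eqref{eq:deterministic Y_t dynamics}. The terminal deviation $\abs{\assetsProcess^\varepsilon_T - \assetsProcess_T}$ is bounded by $\mathrm{Lip}(h)\norm{\alpha}_\infty T\,\norm{Y^\varepsilon - Y^\assetsProcess}_\infty$, which is $\le\varepsilon$ by construction.

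The $\diff\angles{M}$-integral difference is estimated pathwise via H\"older continuity in $y$ of the integrand, boundedness of the density $\alpha$ of $\angles{M}$ on $[0,T]$, and the smallness of $\norm{Y^\varepsilon - Y^\assetsProcess}_\infty$ together with the total length of $\bigcup_i I_i$. The main obstacle is a pathwise (almost-sure) bound on the stochastic integral difference $\int_0^T [g(\baseS_u, Y^\assetsProcess_u) - g(\baseS_u, Y^\varepsilon_u)]\diff\baseS_u$. The approach is to apply integration-by-parts on each $I_i$: both $g(\baseS, Y^\assetsProcess)$ and $g(\baseS, Y^\varepsilon)$ are semimartingales (by It\^o, using the $C^{1+\delta}$ assumption on $g(x,\cdot)$), and their difference vanishes at the endpoints of $I_i$ since $Y^\varepsilon$ agrees with $Y^\assetsProcess$ there; thus the boundary contributions cancel, transforming the stochastic integral into a pathwise Riemann--Stieltjes integral against the finite-variation part of $\baseS$ plus a quadratic covariation correction, both estimable pathwise by $\omega_\baseS(h_i;\omega)$ times factors involving $\abs{\Delta\assetsProcess_{\tau_i}}$ and the H\"older constants of $g_y$. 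Choosing $h_i = h_i(\omega)$ small enough relative to $\omega_\baseS(\cdot;\omega)$ and $\abs{\Delta\assetsProcess_{\tau_i}}$ yields a bound $\le\varepsilon/2$, and the small-jump tail is handled via local H\"older continuity of $g$ in $y$ together with the smallness of the absolutely continuous correction.
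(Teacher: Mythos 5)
Your overall skeleton (build a continuous finite-variation $Y^\varepsilon$ with matching endpoints, recover $\assetsProcess^\varepsilon = Y^\varepsilon - Y_{0-} + \assetsProcess_{0-} + \int_0^\cdot h(Y^\varepsilon_s)\diff\angles{M}_s$, and let the boundary terms $G(\baseS_T,Y_T)-G(\baseS_0,Y_{0-})$ cancel) is the same as the paper's. But the paper does not attempt a bare-hands pathwise estimate: it rewrites $L_T(\assetsProcess)=\int_0^T K(Y_{s-},\diff s) - \bigl(G(\baseS_T,Y_T)-G(\baseS_0,Y_{0-})\bigr)$ with $K(y,t):=G(\baseS_t,y)-h(y)\int_0^t g(\baseS_u,y)\diff\angles{M}_u$ and invokes the multidimensional version of Bank--Baum's Theorem~4.4 for the nonlinear integrator $(K,\widetilde K)$; the H\"older hypotheses of the proposition are exactly what is needed to make this a ``smooth family of semimartingales'' there. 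Your proof replaces that citation by a direct argument, and that is where it breaks down.

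The step that fails is the claimed pathwise control of $\int_{I_i}\bigl[g(\baseS_u,Y^\assetsProcess_u)-g(\baseS_u,Y^\varepsilon_u)\bigr]\diff\baseS_u$ via integration by parts. With $Z:=g(\baseS,Y^\assetsProcess)-g(\baseS,Y^\varepsilon)$ one gets $\int_{I_i}Z_u\diff\baseS_u=-\int_{I_i}\baseS_u\diff Z_u-[Z,\baseS]_{I_i}$ since $Z$ vanishes at the endpoints of $I_i$; the bracket term is indeed pathwise small, but $\diff Z_u$ contains $\bigl(g_x(\baseS_u,Y^\assetsProcess_u)-g_x(\baseS_u,Y^\varepsilon_u)\bigr)\diff\baseS_u$, so $\int_{I_i}\baseS_u\diff Z_u$ is \emph{again} a stochastic integral against the martingale part of $\baseS$ whose integrand on $I_i$ is of order $\abs{\Delta\assetsProcess_{\tau_i}}$, not small. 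It is therefore not ``a pathwise Riemann--Stieltjes integral against the finite-variation part of $\baseS$ plus a quadratic covariation correction''; the martingale integral has merely been moved, not removed. Shrinking $h_i$ makes this term small in $L^2$ and hence in probability, but no deterministic (or adapted) choice of $h_i$ yields an almost-sure bound, because the oscillation of the martingale part of $\baseS$ on a short interval is not pathwise controlled by its length; and since the proposition asserts an a.s.\ bound for one fixed $\assetsProcess^\varepsilon$, extracting an a.s.\ convergent subsequence does not close the gap either. Overcoming exactly this is the content of Bank--Baum's Theorem~4.4, whose construction tracks $Y$ along stopping times that control the oscillation of the integrator pathwise. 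A secondary defect: your $Y^\varepsilon$ interpolates on $(\tau_i-h_i,\tau_i]$ towards the post-jump value $Y^\assetsProcess_{\tau_i}$, and $N(\omega,\varepsilon)$ and $h_i(\omega)$ depend on the whole path, so the resulting $\assetsProcess^\varepsilon$ is anticipative, whereas the paper's construction produces a predictable strategy.
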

\begin{proof}
Note that $K(y, t):= G(\baseS,y) - h(y)\int_0^t g(\baseS_u, y) \diff \langle M \rangle_u$ and $\widetilde K(y, t):= h(y)  \langle M \rangle_t$ define  smooth families of semimartingales in the sense of \cite[Def.~2.2]{BankBaum04} and 
\begin{equation}\label{eq:proceeds as non-linear integral}
	L_T(\assetsProcess) = \int_0^T K(Y_{s-}, \diff s) - \bp{ G(\baseS_T, Y_T) - G(\baseS_{0-}, Y_{0-}) }.
\end{equation}
Predictability of $\assetsProcess$ implies predictability of $Y$ and hence $Y_T$ is $\scF_{T-}$ measurable. By the multidimensional version of \cite[Thm.~4.4]{BankBaum04}  for the non-linear integrator $(K, \widetilde K)$ (extending the proof to this multidimensional setup is straightforward),  
for every $\varepsilon > 0$ there exists a predictable process $Y^\varepsilon$ with continuous paths of finite variation, such that $Y^\varepsilon_0 = Y_{0-}$, $Y^\varepsilon_T = Y_T$ and $\PP$-a.s. 
 \[
 	\sup_{0\leq t \leq T} \braces[\Big]{ \abs[\Big]{\int_{0}^t K(Y_{s-}, \diff s) - \int_{0}^t K(Y^\varepsilon_{s-}, \diff s)} \vee \abs[\Big]{ \int_0^t h(Y_s)\diff \langle M\rangle_s - \int_0^t h(Y^\varepsilon_s)\diff \langle M\rangle_s }}\leq \varepsilon.
 \]
 The process $Y^\varepsilon$ corresponds to a predictable process $\assetsProcess^\varepsilon$ with continuous f.v.~paths, namely $\assetsProcess^\varepsilon = Y^\varepsilon - Y_{0-} + \assetsProcess_{0-} + \int_0^\cdot h(Y^\varepsilon)\diff \langle M\rangle_u$, that satisfies $|\assetsProcess_T - \assetsProcess^\varepsilon_T| \leq \varepsilon$, and with reference to \eqref{eq:proceeds as non-linear integral}, also satisfies $|L_T(\assetsProcess) - L_T(\assetsProcess^\varepsilon)|\leq \varepsilon.$
\end{proof}

\subsection{Connection to the Marcus canonical equation}\label{sect:stability}

Here we explain briefly, how our proceeds functional connects with an interesting SDE  which is known as the Marcus canonical equation \cite{Marcus81}.
Stability in the sense of Wong--Zakai approximations for this kind of equations has been studied in \cite{KurtzPardouxProtter95}.
Their techniques offer an alternative way to derive the approximation result of \cref{cor:WZ proceeds converge}. 
Recently, stability of such equations for a $p$-variation rough paths variant of the $M_1$ topology has been studied in \cite{FrizChevyrev18}.

\begin{definition}[Marcus canonical equation]
	Let $\marcusDriver: \RR^d \to \RR^{d \times k}$ be continuously differentiable and $Z$ be a $k$-dimensional semimartingale.
	Then the notation
	\begin{equation} \label{eq:defMarcusIntegralNotation}
		X_t = X_{0-} + \int_0^t \marcusDriver(X_s) \circ\!\diff Z_s
	\end{equation}
	means that $X$ satisfies the stochastic integral equation
	\begin{align}
\nonumber
		X_t =& X_{0-} + \int_0^t \marcusDriver(X_{s-}) \diff Z_s
		 	+ \frac{1}{2} \sum_{j,m=1}^k \sum_{\ell = 1}^d \int_0^t \frac{\partial \marcusDriver_{\cdot,j}}{\partial x_\ell}(X_{s-}) \marcusDriver_{\ell,m}(X_{s-}) \diff {[Z^j, Z^m]^c_s}
 \label{eq:defMarcusIntegral}
		\\	&\quad + \sum_{\substack{0 \le s \le t \,,\, \Delta Z_s \ne 0}} \paren[\big]{\varphi(\marcusDriver(\cdot) \Delta Z_s, X_{s-}) - X_{s-} - \marcusDriver(X_{s-})\Delta Z_s},
	\end{align}
	where $\marcusDriver_{\cdot,j}$ is the $j^\text{th}$ column of $\marcusDriver$, $Z^j$ is the $j^\text{th}$ entry of $Z$ and $\varphi(\xi,x)$ denotes the value $y(1)$ of the solution to
	\begin{align} \label{eq:defPhiODE}
		y'(u) = \xi(y(u)) \quad \text{ with  }\quad y(0)=x.
	\end{align}
\end{definition}

\noindent 
The quadratic (co-)variation process is denoted by $[\cdot] = [\cdot]^c+[\cdot]^d$, it decomposes into a continuous part (appearing in (\ref{eq:defMarcusIntegral})) and a discontinuous part.
The next lemma gives a representation of the impact and proceeds processes of our model in terms of a Marcus canonical equation for the case $h \in C^1$.
To this end, let  the function $\marcusDriver: \RR^3 \rightarrow \RR^{3\times 3}$ for $X = (X^1, X^2, X^3)^{tr} \in \RR^3$ be given by
	\begin{align}\label{eq:def of Marcus driver}
		\marcusDriver(X) &:= \begin{pmatrix}
			- g(X^3, X^2) & 0 & 0
		\\	1 & 0 & -h(X^2)
		\\	0 & 1 & 0
		\end{pmatrix}.
	\end{align}

\begin{lemma} \label{lem:problem as Marcus integral}
	Let $\assetsProcess$ be a c\`{a}dl\`{a}g process with paths of finite total variation, and $L$ be defined by \eqref{repeat eq:proceeds fv strategy} be the process describing the evolution of proceeds generated by $\assetsProcess$.
	Set $X_t := \paren[\big]{ L_t, Y_t, \baseS_t }^{tr}$, so $X_{0-} = \paren[\big]{ 0, Y_{0-}, \baseS_{0-} }^{tr}$, and $Z_t := \paren[\big]{ \assetsProcess_t, \baseS_t, \angles{M}_t }^{tr} $.
	Then the process $X$ is the solution to the Marcus canonical equation 
	\[	
		X_t = X_{0-} + \int_0^t \marcusDriver(X_s) \circ\!\diff Z_s \,.
	\]
\end{lemma}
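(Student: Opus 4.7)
The approach is to unfold definition \eqref{eq:defMarcusIntegral} for the specific $\marcusDriver$ in \eqref{eq:def of Marcus driver} component by component, verify the three standard Itô-integral terms, show that the continuous quadratic co-variation correction vanishes, and finally solve the ODE \eqref{eq:defPhiODE} at each jump time in closed form to match the block trade formula \eqref{eq:block sale proceeds}. Since $Z^1 = \assetsProcess$ has finite variation and $Z^3 = \langle M\rangle$ is continuous of finite variation, the only continuous co-variation that can contribute is $[\baseS,\baseS]^c$, i.e.\ the term with $j=m=2$. But the second column $\marcusDriver_{\cdot,2} = (0,0,1)^{tr}$ is constant in $X$, so all partials $\partial\marcusDriver_{\cdot,2}/\partial x_\ell$ vanish identically, and the entire continuous correction is zero. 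Thus \eqref{eq:defMarcusIntegral} collapses to $X_{0-}$ plus the Itô integral $\int_0^\cdot \marcusDriver(X_{s-})\!\diff Z_s$ plus the jump-correction sum.

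The Itô integral unfolds, row by row, to $-\int_0^t g(\baseS_{s-},Y_{s-})\!\diff\assetsProcess_s$ for the first component, $\assetsProcess_t - \assetsProcess_{0-} - \int_0^t h(Y_{s-})\!\diff\langle M\rangle_s$ for the second, and $\baseS_t - \baseS_{0-}$ for the third. The second row reproduces \eqref{eq:deterministic Y_t dynamics} up to the values at jumps, the third is trivial, and the first matches the drift part of \eqref{repeat eq:proceeds fv strategy} apart from the block-trade terms. So I only need to argue that the jump corrections supply exactly the required discrepancy.

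The key step is to solve the ODE $y'(u) = \marcusDriver(y(u))\Delta Z_s$ with $y(0) = X_{s-}$ explicitly. Using $\Delta Z_s = (\Delta\assetsProcess_s,\Delta\baseS_s,0)^{tr}$, the system reads $(y^1)' = -g(y^3,y^2)\Delta\assetsProcess$, $(y^2)' = \Delta\assetsProcess$, $(y^3)' = \Delta\baseS$, which integrates to $y^2(u)=Y_{s-}+u\Delta\assetsProcess_s$, $y^3(u) = \baseS_{s-}+u\Delta\baseS_s$ and $y^1(u) = L_{s-} - \int_0^u g(\baseS_{s-}+v\Delta\baseS_s,\, Y_{s-}+v\Delta\assetsProcess_s)\Delta\assetsProcess_s\,\diff v$. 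Here I invoke that $\assetsProcess$ is predictable and $M$ is quasi-left continuous (so $\baseS$ jumps only at totally inaccessible times): at any jump time $\assetsProcess$ and $\baseS$ cannot jump simultaneously, and the nonlinearity in the integrand collapses. If $\Delta\baseS_s \neq 0$ but $\Delta\assetsProcess_s = 0$, then $\varphi = X_{s-} + \marcusDriver(X_{s-})\Delta Z_s$, so the jump correction is zero in rows one and two and $\baseS$ picks up $\Delta\baseS_s$ through the Itô integral. If $\Delta\assetsProcess_s \neq 0$ but $\Delta\baseS_s = 0$, the first component of $\varphi$ equals $L_{s-} - \int_0^{\Delta\assetsProcess_s} g(\baseS_{s-},Y_{s-}+x)\diff x$, and subtracting $\marcusDriver(X_{s-})^1 \Delta Z_s = -g(\baseS_{s-},Y_{s-})\Delta\assetsProcess_s$ and combining with the Itô-integral jump $-g(\baseS_{s-},Y_{s-})\Delta\assetsProcess_s$ of the first component yields exactly the block-trade term $-\int_0^{\Delta\assetsProcess_s} g(\baseS_{s-},Y_{s-}+x)\diff x$ in \eqref{repeat eq:proceeds fv strategy}. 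The second component similarly receives $\Delta\assetsProcess_s$ and the third receives nothing.

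The main obstacle, as usual for Marcus integrals, is bookkeeping: organising the three contributions (Itô integral, absent continuous correction, and ODE jump correction) so that the jump of $L$ is seen to equal $-\int_0^{\Delta\assetsProcess_s} g(\baseS_{s-},Y_{s-}+x)\diff x$ without double counting the linear piece subtracted in the Marcus jump term. The structural ingredients that make the calculation work—finite variation of $Z^1,Z^3$, constancy of column two of $\marcusDriver$, and non-simultaneity of $\Delta\assetsProcess$ and $\Delta\baseS$—should be highlighted explicitly. Summing these contributions over $s\le t$ and integrating the continuous parts then yields precisely \eqref{repeat eq:proceeds fv strategy}, \eqref{eq:deterministic Y_t dynamics} and the identity $\baseS_t=\baseS_t$, which together amount to $X_t = X_{0-} + \int_0^t \marcusDriver(X_s)\circ\!\diff Z_s$.
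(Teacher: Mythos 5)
Your proposal is correct and follows essentially the same route as the paper's proof: unfold the Marcus definition, observe that the continuous correction vanishes (the paper notes only $[\baseS]^c$ survives; you additionally make explicit that the relevant column $\marcusDriver_{\cdot,2}=(0,0,1)^{tr}$ is constant, which is the right reason), solve the jump ODE \eqref{eq:defPhiODE} in closed form using that quasi-left continuity of $\baseS$ and predictability of $\assetsProcess$ preclude common jumps, and recombine the jump correction with the It\^o-integral jump to recover the block-trade term in \eqref{repeat eq:proceeds fv strategy}. No gaps.
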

\noindent 
For the proof see \cref{sect:Marcus integral proofs}.
Following \cite[Sect.~6]{KurtzPardouxProtter95}, we now derive a Wong-Zakai-type approximation result in our setup.
For a bounded semimartingale process $\assetsProcess$ and $\varepsilon>0$ consider the  approximating  absolutely continuous processes defined by 
\begin{equation}\label{eq:theta eps}
	\assetsProcess^\varepsilon_t := \frac{1}{\varepsilon}\int_{t-\varepsilon}^{t}\assetsProcess_s\diff s, \quad t\geq 0,
\end{equation}
with the convention that $\assetsProcess_t = \assetsProcess_{0-}$ for $t < 0$. See \cref{fig:WZ}, where $\varepsilon = 1/n$.

Let $Z^\varepsilon_t := (\assetsProcess^\varepsilon_t, \baseS_t, \angles{M}_t)^{tr}$ and $X^\varepsilon$ be a solution to the following SDE in the Itô sense
\begin{equation}\label{eq:def of X^eps}
	\diff X^\varepsilon_t = \marcusDriver(X^\varepsilon_t) \diff Z^\varepsilon_t, \qquad X^\varepsilon_0 = X_{0-}.
\end{equation}
The next result on Wong-Zakai-type convergence is based on the theory from \cite[Sect.~5]{KurtzPardouxProtter95}.
See \cite[Thm.~6.2]{BechererBilarevFrentrup-arXiv-2015-old} for a proof in the case of $\baseS$ being geometric Brownian motion and $g(x,y) = x f(y)$, which however generalizes easily to continuous $\baseS$ and general impact function $g$.

\begin{theorem}\label{thm:Wong-Zakai approx}
	Suppose that $\baseS$ is continuous and 
	let $(\assetsProcess_t)_{t\geq 0}$ be a bounded semimartingale.
	For $\varepsilon > 0$, let $\assetsProcess^\varepsilon$ be the Wong-Zakai-type approximations from \eqref{eq:theta eps}. 
	Let $X^\varepsilon$ be defined by \eqref{eq:def of X^eps} for $Z^\varepsilon_t := (\assetsProcess^\varepsilon_t, \baseS_t, \angles{M}_t)^{tr}$ and $\marcusDriver$ as in \eqref{eq:def of Marcus driver}.
	For time-changes $\gamma_\varepsilon(t) := \frac{1}{\varepsilon}\int_{t-\varepsilon}^t([\assetsProcess]^d_s+s)\diff s$, consider the processes $(\scX^\varepsilon_t)_{t\geq 0}$ defined by $\scX^\varepsilon_t := X^\varepsilon_{\gamma_\varepsilon^{-1}(t)}$.
	For $\varepsilon \to 0$ the processes $\scX^\varepsilon$ then converge in probability in the compact uniform topology to a process $(\scX^0_t)_{t\geq 0}$, such that $X_t = (X_t^1,X_t^2,X_t^3)^{tr} := \scX^{0}_{\gamma_0(t)}$ is a solution of 
	\begin{equation}\label{eq:WZ limiting equation}
		X_t = X_{0-} + \int_0^{t}\marcusDriver(X_s) \circ\!\diff Z_s -  \paren[\Big]{\frac{1}{2} \int_0^t g_x(\baseS_{s}, X_{s-}^2) \diff [\baseS,\assetsProcess]_{s},\ 0,\ 0}^{tr},
	\end{equation}
	where $X_{0-} = (0, Y_{0-}, \baseS_0)^{tr}$ and $\gamma_0(t) := [\assetsProcess]^d_t + t$.
\end{theorem}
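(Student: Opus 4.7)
The plan is to apply the Wong--Zakai approximation theory for Marcus-type SDEs driven by semimartingales with jumps, as developed in \cite[Sect.~5--6]{KurtzPardouxProtter95}; our setting is a multi-dimensional extension of \cite[Thm.~6.2]{BechererBilarevFrentrup-arXiv-2015-old}, where this is worked out in the special case when $\baseS$ is geometric Brownian motion and $g(x,y) = xf(y)$, and the generalization to a continuous $\baseS$ and general impact $g$ is essentially routine once the bookkeeping is set up correctly.

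First, I would reduce to the case of uniformly bounded, globally Lipschitz $\marcusDriver$. Since $\assetsProcess$ is bounded as a semimartingale and $h$ is Lipschitz with $h(0)=0$, the impact processes $Y=Y^\assetsProcess$ and $Y^\varepsilon = Y^{\assetsProcess^\varepsilon}$ take values in a (random) compact subset of $\RR$ uniformly in $\varepsilon$, by the a priori estimates underlying \cref{prop:cont of resilience}. Using the standing regularity $g\in C^{2,1}$ (and the implicit $h\in C^1$ needed for the Marcus set-up), the driver $\marcusDriver$ from \eqref{eq:def of Marcus driver} is $C^1$ with locally bounded and Lipschitz derivatives on the relevant compact. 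A standard localization argument (stopping at exit times from growing compact sets and sending them to infinity) then reduces to the case where $\marcusDriver$ and $D\marcusDriver$ are globally bounded and Lipschitz, so that the Itô SDE \eqref{eq:def of X^eps} admits unique strong solutions with a-priori bounds uniform in $\varepsilon$.

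Second, I would invoke the Wong--Zakai theorem of~\cite{KurtzPardouxProtter95}. The time-change $\gamma_\varepsilon$ and its limit $\gamma_0(t) = [\assetsProcess]^d_t + t$ are constructed exactly as in \cite[Lem.~6.1--6.2]{KurtzPardouxProtter95} (and already used in our proof of \cref{prop:conv of Wong-Zakai in M1}): they absorb the ``fictitious'' time that the averaging \eqref{eq:theta eps} requires to traverse each jump of $\assetsProcess$, and produce a continuous parametric representation of $\assetsProcess$ in the time-changed scale. Applying \cite[Thm.~5.1]{KurtzPardouxProtter95} (or the multi-integrator variant of Section~6 there) to the approximating driver $Z^\varepsilon\circ\gamma_\varepsilon^{-1}$, and noting that $\baseS$ and $\angles{M}$ are continuous and need no smoothing, yields $\scX^\varepsilon = X^\varepsilon\circ \gamma_\varepsilon^{-1} \to \scX^0$ in probability in the compact-uniform topology, together with an identification of the limit SDE for $\scX^0$ of Marcus-type in the smoothed direction $\assetsProcess$, combined with ordinary Itô integrals against $\baseS$ and $\angles{M}$.

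Third, I would undo the time-change and identify the equation for $X_t := \scX^0_{\gamma_0(t)}$. Because only $\assetsProcess$ is Wong--Zakai-smoothed, the Stratonovich-type correction arising in the limit captures only those continuous cross-variations $[Z^j,Z^m]^c$ that are produced by the smoothing, whereas the full Marcus canonical equation \eqref{eq:defMarcusIntegral} is written in terms of \emph{all} such cross-variations. Given the sparse structure of $\marcusDriver$ in \eqref{eq:def of Marcus driver}, the discrepancy lives entirely in the first component and amounts precisely to $-\tfrac12\int_0^t g_x(\baseS_s, X^2_{s-})\diff[\baseS,\assetsProcess]_s$ (with $[\baseS,\assetsProcess]=[\baseS,\assetsProcess]^c$, since $\baseS$ is continuous), yielding \eqref{eq:WZ limiting equation}. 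The same identification can alternatively be cross-checked directly: apply \eqref{eq:def of proceeds process} to the continuous f.v.~strategy $\assetsProcess^\varepsilon$, pass to the limit via \cref{thm:stability in j1 and m1} / \cref{cor:WZ proceeds converge}, and rewrite the resulting limit as the Marcus integral perturbed by the extra covariation term, using Itô's formula applied to $G(\baseS,Y)$.

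The main obstacle is the careful accounting of the \emph{partial} Wong--Zakai correction: because only the integrator $\assetsProcess$ is smoothed while $\baseS$ remains a genuine semimartingale, the limit lacks the full cross-variation correction $-\tfrac12 g_x \diff[\baseS,\assetsProcess]^c$ present in \eqref{eq:defMarcusIntegral}, producing exactly the $\tfrac12$-factor discrepancy subtracted on the right-hand side of \eqref{eq:WZ limiting equation}.
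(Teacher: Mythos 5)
Your proposal follows essentially the same route as the paper: the paper does not spell out a proof but defers to the Wong--Zakai theory of Kurtz--Pardoux--Protter (Sect.~5--6) and to the special-case proof in the arXiv version \cite[Thm.~6.2]{BechererBilarevFrentrup-arXiv-2015-old}, with the localization step and the explanation of the partial covariation correction appearing verbatim in \cref{rmk:bounded semimartingale}~a) and~b). Your three steps (localization to a globally Lipschitz driver, invocation of the KPP time-changed convergence theorem, and identification of the extra $-\tfrac12\int g_x\,\mathrm d[\baseS,\assetsProcess]$ term from smoothing only the $\assetsProcess$-component) match the intended argument.
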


	\cref{thm:Wong-Zakai approx} directly gives, noting $X^1=L$, that for a bounded semimartingale strategy $\assetsProcess$, the proceeds $L=L(\assetsProcess)$ of this strategy up to  $T<\infty$ take the form
	\begin{align}  \label{eq:proceeds semimart strategies}
		L_T  = &- \int_0^Tg(\baseS_{t-}, Y^\assetsProcess_{t-}) \diff \assetsProcess_t  
			- \frac{1}{2} \int_0^T g_y(\baseS_{t-},  Y^\assetsProcess_{t-}) \diff\brackets{ \assetsProcess }^c_t 
			- \int_0^T g_x(\baseS_{t}, Y^\assetsProcess_{t-}) \diff [\baseS, \assetsProcess]_t \nonumber
	\\		&-\sum_{\substack{\Delta \assetsProcess_t \neq 0 \\ t\leq T}} \paren[\bigg]{\int_0^{\Delta \assetsProcess_t} g(\baseS_{t}, Y^\assetsProcess_{t-} + x)\diff x - g(\baseS_{t}, Y^\assetsProcess_{t-})\Delta \assetsProcess_t},
	\end{align}
	where the stochastic integral is understood in It\^{o}'s sense and $Y^\assetsProcess$ is given as in \eqref{eq:deterministic Y_t dynamics}. It is straightforward to see that \eqref{eq:proceeds semimart strategies} coincides with \eqref{eq:def of proceeds process}.

\begin{remark} \label{rmk:bounded semimartingale}
	a)
	Note that boundedness of $\assetsProcess$ implies that $X^2$ is bounded.
	Localizing along $\baseS$ (the variable $X^3$), we can assume that $g$ is globally Lipschitz continuous.
	This implies absolute convergence of the infinite sum in \eqref{eq:defMarcusIntegral}, see \cite[p.~356]{KurtzPardouxProtter95}.
	In particular, \eqref{eq:proceeds semimart strategies} is well-defined.

	b)
	The additional covariation term in the limiting equation \eqref{eq:WZ limiting equation} arises since only the strategies $\assetsProcess$ are approximated in a Wong--Zakai sense, but not also unaffected price $\baseS$ and clock $\angles{M}$.
	For strategies $\assetsProcess$ being of finite variation (as it would be natural under proportional transaction costs), this additional covariation term clearly vanishes.
	
	c) 
	Note that \cref{thm:Wong-Zakai approx} implies the results in \cref{cor:WZ proceeds converge} for bounded semimartingale processes $\assetsProcess$. 
	Indeed, \cref{thm:Wong-Zakai approx} gives for the first components $L^\varepsilon=X^{\varepsilon,1}$, $L=X^{0,1}$ that for any $\eta > 0$ and any horizon $T \in [0,\infty)$ we have
	\(
		\PP\brackets[\big]{ \sup_{t\le T} \abs[\big]{ L^\varepsilon_{\gamma_\varepsilon^{-1}(\gamma_0(t))} - L_t } \le \eta } \to 1
	\)
	for $\varepsilon\to 0$.
	Since $\gamma_\varepsilon^{-1}(\gamma_0(t)) \to t$ at continuity points of $\gamma_0$ (which are the continuity points of $\assetsProcess$ and thus of $L$) it follows that $\PP[\Omega^\eta_\varepsilon] \to 1$ as $\varepsilon \to 0$ with 
	\[
		\Omega^\eta_\varepsilon:= \{ \omega \mid \forall t\text{ with } \Delta L_t(\omega) =0 : \abs{L^\varepsilon_t(\omega) - L_t(\omega)} \le \eta \}.
	\]
	
	d) The proof of \cref{thm:Wong-Zakai approx} could be adapted to the case when $M$ is quasi-left continuous if the bounded semimartingale $\assetsProcess$ is assumed to be predictable.
\end{remark}

\section{Absence of arbitrage for the large trader}\label{sect:no arbitrage}

On the one hand the large trader is faced with adverse price reaction to her trades. 
On the other hand, her market influence might give her opportunities to manipulate price dynamics in her favor. 
It is therefore relevant to show that the model does not permit arbitrage opportunities for the large trader in a (fairly large) set of trading strategies. For this section we consider a multiplicative price impact model where $g(\baseS,Y) = f(Y)\baseS$ with a non-negative, increasing and continuously differentiable function $f$, cf.\ \cref{ex: additive or multiplicative impact}.\footnote{For additive dynamics of $\baseS$ instead of \eqref{eq:defbaseS}, one could carry out the analysis in this section also in the case of additive impact $g(\baseS,Y) = \baseS + f(Y)$}
Consider a 
portfolio $(\beta_t, \assetsProcess_t)$ of the large investor, where $\beta_t$ represents holdings in the bank account (riskless num\'{e}raire with discounted value $1$) and $\assetsProcess_t$ denotes holdings in the risky asset $S$ at time $t$.
We will consider bounded c\`{a}dl\`{a}g strategies $\assetsProcess$ on the full time horizon $[0,\infty)$ although our results below will deal with a finite but arbitrary horizon.
For the strategy $(\beta, \assetsProcess)$ to be self-financing, the bank account 
evolves according to 
\begin{equation}\label{eq:bank account}
	\beta_t = \beta_{0-} + L_t(\assetsProcess) \,, \quad  t\geq 0,
\end{equation}
with $L(\assetsProcess)$ as in \eqref{eq:def of proceeds process}.
In order to define the wealth dynamics induced by the large trader's strategy, we have to specify the dynamics of the value of the risky asset position in the portfolio.
If the large trader were to unwind her risky asset position at time $t$ immediately by selling $\Theta_t$ shares (meaning to buy shares in case of a short position  $\Theta_t<0$), 
 the resulting change in the bank account would be given by a term of the form \eqref{eq:block sale proceeds}.
In this sense, let the \emph{instantaneous liquidation value} process of her position be
\begin{equation}\label{eq:wealth process}
	V^\assetsProcess_t = \beta_t + \baseS_t\int_{0}^{\assetsProcess_t} f(Y^\assetsProcess_t - x)\diff x\,, \quad t\geq 0.
\end{equation}
This corresponds to the asymptotically realizable real wealth process in \cite{BankBaum04}.
Its dynamics~\eqref{eq:dynamics of V}  are mathematically tractable and relevant, e.g.\ to study no-arbitrage. For $F(x):= \int_0^x f(y)\diff y$ we have
	\(
		\baseS_t\int_{0}^{\assetsProcess_t} f(Y^{\assetsProcess}_{t}-x)\diff x = \baseS_t \paren[\big]{F(Y^\assetsProcess_t) - F(Y^\assetsProcess_t - \assetsProcess_t)}.
	\)
	By \eqref{eq:def of proceeds process} and \eqref{eq:bank account}, noting that
$Y^\assetsProcess - \assetsProcess$ 
and  $ \langle M\rangle$ are absolutely continuous processes,  we have
	\begin{align} \notag
		\hspace{-2em}
		\diff V^\assetsProcess_t 
			&= F(Y^\assetsProcess_{t-}) \diff \baseS_t - \baseS_t (f h)(Y^\assetsProcess_{t-}) \diff \langle M\rangle_t - \diff \paren[\big]{ \baseS_\cdot F(Y^\assetsProcess_\cdot - \assetsProcess_\cdot) }_t
		\\ \nonumber
			&= \paren[\big]{ F(Y^\assetsProcess_{t-}) - F(Y^\assetsProcess_{t-} - \assetsProcess_{t-}) }\diff \baseS_t - \baseS_t \paren[\big]{ F'(Y^\assetsProcess_{t-}) - F'(Y^\assetsProcess_{t-} - \assetsProcess_{t-})} h(Y^\assetsProcess_{t-}) \diff \langle M\rangle_t 
			\\  \label{eq:dynamics of V}
			&=  \paren[\big]{F(Y^\assetsProcess_{t-}) - F(Y^\assetsProcess_{t-} - \assetsProcess_{t-})} \baseS_{t-} (\mu_t \diff \langle M \rangle_t + \diff M_t),
	\end{align}
	\[
	\hspace{-\mathindent}\text{with }
		\mu_t := \xi_t -h(Y^\assetsProcess_{t-}) \cdot \frac{F'(Y^\assetsProcess_{t-}) - F'(Y^\assetsProcess_{t-} - \assetsProcess_{t-})}{F(Y^\assetsProcess_{t-}) - F(Y^\assetsProcess_{t-} - \assetsProcess_{t-})} \indicator_{\{\assetsProcess_{t-} \neq 0\}}
	\text{ and }
	V^\assetsProcess_0 \!=\! \beta_0 + \int_0^{\assetsProcess_0} \!\!\! f(Y_0 + x) \diff x.
	\]

We will prove a no-arbitrage theorem for the large trader essentially for models that do not permit arbitrage opportunities for small investors in the absence of 
trading by the large trader. More precisely, for this section we assume for the driving noise $M$ the
\begin{assumption}\label{assumption:NA}
For every predictable and bounded process $\mu$ and every $T\geq 0$, there exists a probability measure $\PP^\mu \approx \PP$ on $\scF_T$ such that the process $M + \int_0^\cdot \mu_s\diff \langle M\rangle_s$ is a $\PP^\mu$-local martingale on $[0,T]$.
\end{assumption} 

\begin{example}[Models satisfying assumption \cref{assumption:NA}]
a)  If $M$ is continuous, then under our model assumptions from \cref{sect:model}, for every predictable and bounded process $\mu$ the probability measure $\diff \PP^\mu = \mathcal{E}(-\int_0^\cdot \mu_s \diff M_s )\diff \PP$ is  well-defined (thanks to Novikov's condition) and  satisfies \cref{assumption:NA}.
	
b) Let $M$ be a L\'{e}vy process that is a martingale with  $\Delta M > -1$ and $\EE[M_1^2]<\infty$. 
In this case, it is a special semimartingale with characteristic triplet $(0, \sigma, K)$ (w.r.t.~the identity truncation function), and  we have the decomposition $M = \sqrt{\sigma} W + x*(\mu^M - \nu^\PP)$, where $W$ is a $\PP$-Brownian motion (or null if $\sigma=0$), $\mu^M$ is the jump measure of $M$ and $\nu^\PP(\mathrm d x, \mathrm d t) = K(\mathrm d x)\diff t$ is the $\PP$-predictable compensator of $\mu^M$.
We have $\langle M \rangle_t = \lambda t$, $t\geq 0$, for some $\lambda \geq 0$.
In the case $\sigma > 0$, \cref{assumption:NA} is clearly satisfied.
Indeed, an equivalent change of measure by the standard Girsanov's theorem with respect to the non-vanishing (scaled) Brownian motion $M^c$ can be done such that $M^c+\int \mu \diff \langle M\rangle$ becomes a martingale, 
without changing the L\'{e}vy measure. 

Otherwise, in  case of $\sigma = 0$, $M$ is a pure jump L\'{e}vy process.
For this case, let us restrict our consideration to the situation of two-sided jumps, since pure-jump L\'{e}vy processes  of such type appear more relevant to the modeling of financial returns than those ones with one-sided jumps only; examples are the exponential transform of the variance-gamma process or the so-called CGMY-process (suitably compensated to give a martingale exponential transform), cf.~\cite{KallsenShiryaev02,CGMY02} for the relevant notions and models respectively.
Here, it turns out  that $K((-\infty, 0)) >0$ and $K((0, +\infty)) > 0$  is already a sufficient condition for \cref{assumption:NA} to hold, i.e.\ possibility for jumps occurring in both directions.
Indeed, a suitable  change of measure can then be constructed as follows.
Let $n> 0$ be such that $K([1/n, n]) > 0$ and $K([-n, -1/n])>0$.
Denote $C^+ := \int_{[1/n, n]}x^2 K(\mathrm d x) > 0$ and $C^- := \int_{[-n, -1/n]}x^2 K(\mathrm d x) > 0$. Define functions $Y^\pm:\RR\to \RR$ by $Y^+ := 1$ on $[1/n, n]^c$, $Y^+(x) - 1 := x/C^+$ on $[1/n, n]$, and by  $Y^- := 1$ on $[-n, -1/n]^c$, $Y^-(x) - 1 := -x/C^-$ on $[-n, -1/n]$, respectively.
Thus $\int_\RR x(Y^\pm(x) - 1)K(\mathrm d x)= \pm 1$ and hence , with $\eta := \lambda \mu$, the bounded previsible process 
\[
	Y(\omega, t, x) := \eta^-_t(\omega) (Y^+(x) - 1) + \eta^+_t(\omega) (Y^-(x) - 1) + 1
\] 
satisfies $\int_\RR x(Y(x) - 1)K(\mathrm d x) = -\eta$.
The stochastic exponential $Z:=\mathcal{E}((Y-1)*(\mu^L - \nu^\PP))$ is a strictly positive $\PP$-martingale, cf.\ \cite[Prop.~5]{EscheSchweizer05}.
So for $T\geq 0$ there is a measure $d\PP^\mu= Z_T d\PP$ with density process $(Z_t)_{t\le T}$.
By Girsanov's theorem \cite[Thm.~III.3.11]{JacodShiryaev2003_book}, $M- 1/Z_-\cdot  \langle M, Z \rangle = M + \int_0^\cdot \mu_u \diff \langle M \rangle_u$ is a $\PP^\mu$-local martingale on $[0,T]$.
\end{example}

The set of \emph{admissible trading strategies} that we consider is
\begin{align*}
	\hspace{-0.5\mathindent}\mathcal{A} := \big\{(\assetsProcess_t)_{t\geq 0} \mid {}&\text{bounded, predictable, c\`{a}dl\`{a}g, with  $V^\assetsProcess$   bounded from below,}
\\  		&\text{ $\assetsProcess_{0-} = 0$, and such that $\assetsProcess_t = 0$ for $t\in[ T,\infty)$ for some $T<\infty$} 
 \big\}.
\end{align*}
Note that for such a strategy $\assetsProcess$ it clearly holds $V^\assetsProcess=\beta$ on $[T,\infty)$, i.e.~beyond some bounded horizon $T<\infty$ the liquidation value coincides with the cash holdings $\beta_T$.
Boundedness from below for $V^\assetsProcess$ has a clear economical meaning, while the boundedness of $\assetsProcess$ may be viewed as a more  technical requirement.
It ensures under \cref{assumption:NA}
the existence of a strategy-dependent measure $\QQ^\assetsProcess\approx \PP$ (on $\scF_T$) so that $V^\assetsProcess$ is a $\QQ^\assetsProcess$-local martingale on $[0,T]$. 
This relies on \eqref{eq:dynamics of V} and is at the key idea for the proof for 
\begin{theorem}\label{thm:absence of arbitrage}
	Under \cref{assumption:NA}, the model is free of arbitrage up to any finite time horizon $T \in [0,\infty)$, in the sense that there exists no 
	$\assetsProcess \in \mathcal{A}$ with $\assetsProcess_t=0$ on $t\in[T,\infty)$ such that for the corresponding self-financing strategy $(\beta, \assetsProcess)$ with $\beta_{0-} = 0$ we have 
\begin{equation}
\PP\brackets{ V^\assetsProcess_T \geq 0 } = 1
\quad\text{ and }\quad
\PP\brackets{ V^\assetsProcess_T > 0 } > 0\,. \label{eq: arbitrage opp}
\end{equation}
\end{theorem}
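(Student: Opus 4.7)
The plan is to construct, for each fixed horizon $T<\infty$ and each admissible $\Theta \in \mathcal{A}$ vanishing after $T$, an equivalent probability measure $\QQ^\Theta \approx \PP$ on $\scF_T$ under which the liquidation value $V^\Theta$ is a local martingale. Since $V^\Theta$ is bounded from below by admissibility, the local martingale property will upgrade to the supermartingale property, and standard arbitrage-exclusion via $\EE^{\QQ^\Theta}[V^\Theta_T]\le V^\Theta_{0-}=0$ will finish the argument.

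First I would reduce everything to the dynamics \eqref{eq:dynamics of V}, which identifies the drift-correcting process as $\mu$. The main technical step is to show that $\mu$ is bounded, so that \cref{assumption:NA} is applicable. Since $\Theta$ is bounded, a Gronwall-type argument applied to $Y^\Theta_t = Y_{0-} - \int_0^t h(Y^\Theta_s)\diff\langle M\rangle_s + \Theta_t$, using that $\langle M\rangle_T$ is bounded (because $\alpha$ is locally bounded), yields that $Y^\Theta$ is uniformly bounded on $[0,T]$, with bound depending only on $\|\Theta\|_\infty$, $|Y_{0-}|$, $L_h$ and $\langle M\rangle_T$. Then I apply the mean value theorem twice: for $\Theta_{t-}\neq 0$ there exist $c_1,c_2$ between $Y^\Theta_{t-}$ and $Y^\Theta_{t-}-\Theta_{t-}$ with
\[
\frac{F'(Y^\Theta_{t-})-F'(Y^\Theta_{t-}-\Theta_{t-})}{F(Y^\Theta_{t-})-F(Y^\Theta_{t-}-\Theta_{t-})} = \frac{f'(c_2)}{f(c_1)},
\]
which is bounded on the compact range of $Y^\Theta$ because $f>0$ and $f\in C^1$ in the multiplicative impact framework. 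Together with boundedness of $h$ on compacts (since $h$ is Lipschitz with $h(0)=0$) and boundedness of $\xi$, this gives predictability and boundedness of $\mu$.

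Next, by \cref{assumption:NA} there exists $\PP^\mu\approx\PP$ on $\scF_T$ such that $\widetilde M := M + \int_0^\cdot \mu_s\diff\langle M\rangle_s$ is a $\PP^\mu$-local martingale on $[0,T]$. Rewriting \eqref{eq:dynamics of V} as
\[
\diff V^\Theta_t = \bp{F(Y^\Theta_{t-}) - F(Y^\Theta_{t-}-\Theta_{t-})}\baseS_{t-}\diff \widetilde M_t
\]
with a locally bounded predictable integrand identifies $V^\Theta$ as a $\PP^\mu$-local martingale on $[0,T]$, starting at $V^\Theta_{0-}=\beta_{0-}+\baseS_{0-}\int_0^{\Theta_{0-}}f(Y_{0-}-x)\diff x = 0$ in view of $\beta_{0-}=\Theta_{0-}=0$.

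Finally, by admissibility $V^\Theta$ is bounded from below, so the local martingale $V^\Theta$ is a $\PP^\mu$-supermartingale on $[0,T]$ and hence $\EE^{\PP^\mu}[V^\Theta_T]\le V^\Theta_{0-}=0$. If $\PP[V^\Theta_T\ge 0]=1$ then also $\PP^\mu[V^\Theta_T\ge 0]=1$ by equivalence, so $V^\Theta_T = 0$ $\PP^\mu$-a.s., hence $\PP$-a.s., contradicting $\PP[V^\Theta_T>0]>0$. I anticipate the only delicate point is the boundedness of $\mu$, particularly obtaining a uniform bound on the ratio above, which crucially exploits the strict positivity of $f$ (i.e.~the multiplicative form of impact); the rest is a standard application of the supermartingale argument.
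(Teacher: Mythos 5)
Your proposal is correct and follows essentially the same route as the paper: bound the drift $\mu$ via the mean value theorem and strict positivity of $f$ on the compact range of $(Y^\assetsProcess,\assetsProcess)$, invoke \cref{assumption:NA} to obtain $\PP^\mu\approx\PP$ under which $V^\assetsProcess$ is a local martingale, and use boundedness from below to upgrade to a supermartingale with $\EE^{\PP^\mu}[V^\assetsProcess_T]\le 0$. The extra details you supply (the Gronwall bound on $Y^\assetsProcess$ and the explicit equivalence argument at the end) are consistent with, and slightly more explicit than, the paper's own proof.
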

\begin{proof}
Recall the SDE \eqref{eq:dynamics of V} which describes the liquidation value process $V$, and note  that $V_0=0$.
	For each $\assetsProcess \in \mathcal{A}$ we have that $(\assetsProcess, Y^\assetsProcess)$ is bounded.
	Thus, the drift $\mu$ is bounded as well because, in the case of $\assetsProcess_{t-} \ne 0$, by the mean value theorem we have  
	\[
		\frac{ F'(Y^\assetsProcess_{t-}) - F'(Y^\assetsProcess_{t-} - \assetsProcess_{t-})}{F(Y^\assetsProcess_{t-}) - F(Y^\assetsProcess_{t-} - \assetsProcess_{t-})} = \frac{f'(z_1)}{f(z_2)} \quad \hbox{for some $z_{1,2}$ between $Y^\assetsProcess_{t-}$ and $Y^\assetsProcess_{t-} - \assetsProcess_{t-}$},
	\]
	and this is bounded from above because $f,f'$ are continuous and $f>0$ (so it is bounded away from zero on any compact set).
	Hence, \cref{assumption:NA} guarantees the existence of $\PP^\mu \approx \PP$ on $\scF_T$ such that $V^\assetsProcess$ is a $\PP^\mu$-local martingale on $[0,T]$, and since it is also bounded from below, it is a $\PP^\mu$-supermartingale, so $E^\mu[V^\Theta_T]\le V^\Theta_0=0$.  
	This rules out arbitrage opportunities,  as described in \eqref{eq: arbitrage opp}, under any probability $\PP$ equivalent to $\PP^\mu$  on ${\scF}_T$, for any $T\in [0,\infty)$.
\end{proof}

\begin{remark}[Extension to bid-ask spread] \label{rmk:full LOB model}
	Absence of arbitrage 
in the 
model
 with zero bid-ask spread naturally implies no arbitrage for model extensions with spread, at least when the admissible  trading strategies have paths of finite variation.
	To make this precise, let us model different impact processes  $Y^{\assetsProcess^-}$ and $Y^{\assetsProcess^+}$ from selling and buying, respectively, according to \eqref{eq:deterministic Y_t dynamics}, and best bid and ask price processes $(S^b\!, S^a) := \paren[\big]{ f(Y^{\assetsProcess^-})\baseS^b\!, f(Y^{\assetsProcess^+})\baseS^a }$ with $S^b \leq S^a$ for non-increasing $\assetsProcess^-$ and non-decreasing $\assetsProcess^+$.
	Then, the proceeds from implementing $(\assetsProcess^-, \assetsProcess^+)$ on $[0,T]$ would be
	\[
		-\int_0^T \! S^b_t \diff \assetsProcess^{-,c}_t 
		-\int_0^T \! S^a_t \diff \assetsProcess^{+,c}_t 
		-\sum_{\mathclap{\substack{0\leq t \leq T \\ \Delta \assetsProcess^-_t < 0 }}} \baseS^b_t \int_0^{\Delta \assetsProcess^-_t}\! f(Y^{\assetsProcess^-}_{t-} + x) \diff x 
		-\sum_{\mathclap{\substack{0\leq t \leq T \\ \Delta \assetsProcess^+_t >0 }}} \baseS^a_t \int_0^{\Delta \assetsProcess^+_t}\! f(Y^{\assetsProcess^+}_{t-} + x) \diff x.
	\] 
	Now for $\assetsProcess := \assetsProcess^- + \assetsProcess^+$, the initial relation $Y^{\assetsProcess^-}_{0-} \leq Y^\assetsProcess_{0-} \leq Y^{\assetsProcess^+}_{0-}$ implies $Y^{\assetsProcess^-} \leq Y^\assetsProcess \leq Y^{\assetsProcess^+}$.
	Hence $S^b \le S \le S^a$ and the proceeds above for the model with non-vanishing spread would be dominated (a.s.)\ by those that we get in \eqref{repeat eq:proceeds fv strategy}, i.e.~in the model without bid-ask spread.
	In an alternative but different variant, one could extend the zero bid-ask spread model to a one-tick-spread model, motivated by insights in \cite{ContLarrard13}, by letting $(S^b,S^a) := (S,S+\delta)$ for some $\delta>0$.
	Again, proceeds in this model would be dominated by those in the zero-spread model.
	In either variant, absence of arbitrage opportunities in the zero bid-ask spread model implies the same for an extended model with spread.
\end{remark}

\begin{remark}[Extension to c\`{a}gl\`{a}d strategies]\label{rmk:caglad strategies}
For any c\`{a}gl\`{a}d (left continuous with right limits) $(\assetsProcess_t)_{t\ge0}$ (with $\assetsProcess_{0-}=\assetsProcess_{0}$) the unique c\`{a}gl\`{a}d solution $Y^{\Theta}$ to
the integral equation $Y_t-Y_s= \int_s^t h(Y_u)\alpha_u\, du + \Theta_t-\Theta_s$ ($0\le s<t$, with $Y_0=Y_{0-}$), 
  corresponding to \eqref{eq:deterministic Y_t dynamics}, 
can be defined pathwise (cf.\ proof of \cite[Thm.~4.1]{PangTalrejaWhitt2007}); statements on
 c\`{a}dl\`{a}g paths ($\bar \assetsProcess$,$Y^{\bar \assetsProcess}$) translate to  c\`{a}gl\`{a}d paths ($\assetsProcess$,$Y^{\assetsProcess}$)
 by  relations $\bar \assetsProcess_{t-} = \assetsProcess_{t}$ and $Y^{\bar \assetsProcess}_{t-}=Y^{\assetsProcess}_t$, $t\ge 0$.
Using this, we can define the dynamics of the liquidation wealth process $V$
for any 
 strategy $\assetsProcess$ which is adapted with  c\`{a}gl\`{a}d paths or predictable with c\`{a}dl\`{a}g paths, and hence locally bounded, 
by the the unique (strong)
solution to the SDE \eqref{eq:dynamics of V} for given initial condition $V_0\in \RR$. 
Thereby, the result on absence of arbitrage can be extended to a larger set of strategies, which contains the set $\mathcal{A}$ and in addition all bounded adapted and c\`{a}gl\`{a}d (left-continuous with right limits) processes $(\Theta_t)_{t\ge 0}$ with 
$\assetsProcess_{0-}=\assetsProcess_{0} = 0$ for which  there exists some $T<\infty$ such that $\assetsProcess_t = 0$ for $t\in[ T,\infty)$ holds.
Indeed, the same lines of proof show that such $\Theta$ cannot give an arbitrage opportunity in the sense of \cref{thm:absence of arbitrage}.
\end{remark}

\section{Application examples}
\label{sec:Examples}

In this section, we present four examples in the framework of multiplicative impact $g(\baseS,Y) = f(Y)\baseS$, cf.\ \cref{ex: additive or multiplicative impact}, that highlight different questions in which our stability results are helpful. \Cref{ex: optimal monotone liquidation in finite horizon} shows, by compactness argument, the existence of an optimal control by an application of our continuity result in \cref{thm:stability in j1 and m1}. For this, it is rather easy to check that the set of controls is compact for the $M_1$ topology. In \cref{ex: optimal liquidation with general strategies} we identify the solution of an optimal liquidation problem with the already known optimizer in a smaller class of admissible controls, by approximating semimartingale strategies with strategies of bounded variation, where stability of the proceeds functional plays a crucial role.

\Cref{ex: stochastic-finite-horizon,ex: partial instantaneous impact} illustrate modifications of the price impact model by changing the impact process to allow stochastic, respectively partially instantaneous, impact, to which the analysis in \cref{sect:continuity} carries over.
Herein, the $M_1$ topology is again key for identifying the (asymptotically realizable) proceeds and thus extending the models to a larger class of trading strategies. This is particularly crucial in \Cref{ex: stochastic-finite-horizon}, where the optimal liquidation problem  with stochastic liquidity can be solved explicitly 
by a convexity argument if the price process is  a martingale. In this case, any finite-variation strategy turns out to be suboptimal.
We construct an optimal singular control of infinite 
variation.

\subsection{Optimal liquidation problem on finite time horizon} \label{ex: optimal monotone liquidation in finite horizon}
In this example,  using continuity of the proceeds in the $M_1$ topology we will show that the optimal liquidation problem over monotone strategies on a finite time horizon admits an optimal strategy. For $\theta \geq 0$ shares to be liquidated, the problem is to
\begin{equation}\label{eq:opt liq problem monotone}
	\text{maximize}\quad \EE[ L_T(\assetsProcess) ] \quad \text{over } \assetsProcess\in \admissibleSellStrategies{\theta},
\end{equation}
over the set of all decreasing adapted càdlàg $\assetsProcess$ with $\assetsProcess_{0-} = \theta$ and $\assetsProcess\indicator_{[T,\infty)} = 0$.
We consider the situation when the unaffected price process has constant drift, i.e.~$\baseS_t = e^{\mu t} M_t$ for $t\geq 0$, where $\mu \in \RR$ and $M$ is a non-negative continuous martingale that is locally square integrable. 
Existence and (explicit) structural description of the optimal strategy is already known in the following two cases:
a) $\mu = 0$ and any time horizon $T\geq 0$, cf.~\cite{PredoiuShaikhetShreve11,Lokka12};
or: b)  $\mu < 0$ and sufficiently big time horizon $T \geq T(\theta, \mu)$ under additional assumptions on $f$ and $h$, cf.~\cite{BechererBilarevFrentrup2016-deterministic-liquidation}. There $M$ can be taken even quasi-left continuous in which case the set of admissible strategies should be restricted to predictable processes. 

In the general case, the following  compactness argument proves  existence of an optimizer - without providing any structural description for it, of course.
First, it suffices to optimize over deterministic strategies and thus to take $M \equiv 1$ by a change of measure argument, see \cite[Remark~3.9]{BechererBilarevFrentrup2016-deterministic-liquidation}. Now, for some fixed $\varepsilon > 0$ consider the optimization problem over the set of strategies
 $$\widetilde{\mathcal{A}}_{\text{mon}}(\theta) = \{\widetilde{\assetsProcess} \in D[-\varepsilon, T+\varepsilon] \mid \widetilde{\assetsProcess} \text{ is the \emph{extended path} of some determ. } \assetsProcess\in \admissibleSellStrategies{\theta}\}.$$
Endowing $\widetilde{\mathcal{A}}_{\text{mon}}(\theta)$ with the Skorokhod $M_1$ topology makes it relatively compact, which is straightforward to check using \cite[Thm.~12.12.2]{Whitt2002_book}; the compactness criterion in \cite[Thm.~12.12.2]{Whitt2002_book} is trivial for such monotone strategies because the $M_1$ oscillation function is zero and all the paths are constant in neighborhoods of the end points. Thus, if $(\widetilde{\assetsProcess^n})\subset \widetilde{\mathcal{A}}_{\text{mon}}(\theta)$ is a maximizing sequence (of extended paths) for the problem \eqref{eq:opt liq problem monotone}, then it (or some subsequence) converges to $\widetilde{\assetsProcess^*}\in D[-\varepsilon, T+\varepsilon]$. By continuity of the proceeds functional $L$ in the $M_1$ topology (\cref{thm:stability in j1 and m1}) we obtain  
\begin{equation}\label{eq:sup is attained}
	\sup_{\assetsProcess\in \admissibleSellStrategies{\theta}} L_T(\assetsProcess) = \lim_{n\rightarrow \infty} L_{T+\varepsilon}(\widetilde{\assetsProcess^n}) = L_{T+\varepsilon}(\widetilde{\assetsProcess^*}).
\end{equation}
Since on $[-\varepsilon,0)$ (resp.~$(T, \varepsilon]$) each $\widetilde{\assetsProcess^n}$ is constant $\theta$ (resp.~0) and convergence in $M_1$ implies local uniform convergence at continuity points of the limit, cf.~\cite[Lemma~12.5.1]{Whitt2002_book}, there exists $\assetsProcess^*\in \admissibleSellStrategies{\theta}$ such that $\widetilde{\assetsProcess^*}$ is its extended path in $D[-\varepsilon, T+\varepsilon]$. Thus $L_{T+\varepsilon}(\widetilde{\assetsProcess^*}) = L_T(\assetsProcess^*)$ and $\assetsProcess^*$ is an optimal liquidation strategy by \eqref{eq:sup is attained}.

\subsection{Optimal liquidation problem with general strategies} \label{ex: optimal liquidation with general strategies}
	Consider the problem from \cite[Sect.~5]{BechererBilarevFrentrup2016-deterministic-liquidation} to liquidate a risky asset optimally, posed over the set of bounded variation strategies $\admissibleFiniteVariationStrategies{\theta}$ with no shortselling, for some initial position $\theta\geq 0$, i.e.\ $\max_{\assetsProcess \in \admissibleFiniteVariationStrategies{\theta}} \EE[L_\infty(\assetsProcess)]$; Recall that in the setup there the fundamental price process is $\baseS_t = e^{-\delta t} M_t$ for some $\delta > 0$ and a non-negative locally square integrable quasi-left continuous martingale $M$, and $\diff \langle M \rangle_t $ in the dynamics of $Y$ in  \eqref{eq:deterministic Y_t dynamics} is replaced by $\diff t$.
	By \cite[Thm.~5.1]{BechererBilarevFrentrup2016-deterministic-liquidation}, the optimal bounded variation strategy $\assetsProcess^*$ is deterministic and liquidates in  some finite time $T-1$ (which depends on the model parameters).
	
	Now consider the optimal liquidation problem over the  larger set of admissible strategies
	\[
		\admissibleSemimartingaleStrategies(\theta):= \{ \assetsProcess \mid \text{bounded predictable semimartingale, } \assetsProcess \ge 0, \assetsProcess_{0-} = \theta, \assetsProcess_t = \assetsProcess_{t \wedge (T-1)} \}.
	\]
	Note that for any admissible strategy $\assetsProcess\in \admissibleSemimartingaleStrategies(\theta)$, the (martingale part of the) stochastic integral in \cref{eq:def of proceeds process} is a true martingale and will vanish in expectation, yielding
	\[
		\EE[ L_T(\assetsProcess)] = \EE\brackets[\bigg]{  - \int_0^T e^{-\delta t}M_t ( (fh)(Y^\assetsProcess_t) +\delta F(Y^\assetsProcess_t))\diff t - (e^{-\delta T}M_T F(Y^\assetsProcess_T) - M_{0-} F(Y^\assetsProcess_{0-}))},
	\]
	where $F(x) = \int_0^x f(y)\diff y$.
	A change of measure argument as in  \cite[Rem.~3.9]{BechererBilarevFrentrup2016-deterministic-liquidation} shows that we can take w.l.o.g.~$M\equiv 1$ and thus it suffices to optimize the proceeds over the set $\mathcal{A}_{\text{c\`{a}dl\`{a}g}}(\theta)$ of all deterministic non-negative c\`{a}dl\`{a}g paths having square-summable jumps, starting at time $0-$ at $\theta$ and   being zero after time $T-1$. For each such $\assetsProcess\in \mathcal{A}_{\text{c\`{a}dl\`{a}g}}(\theta)$ and every $\varepsilon > 0$, we can find a deterministic bounded variation strategy $\assetsProcess^\varepsilon\in \admissibleFiniteVariationStrategies{\theta}$ that executes until time $T$ and gives proceeds that are at most $\varepsilon$-away from the proceeds of $\assetsProcess$. Indeed, this follows from \cref{cor:WZ proceeds converge} where the approximating sequence is indeed of bounded variation continuous processes (since $\assetsProcess$ is bounded), and noting that the probabilistic nature of the stability results in \cref{sec:main stability results} is due to the presence of the (intrinsically probabilistic) stochastic integral in \eqref{eq:def of proceeds process}, cf.~the proof of \cref{thm:stability in j1 and m1}, which would be immaterial here in the case of constant $M$. In particular, 
		\[
		\sup_{\assetsProcess \in \admissibleSemimartingaleStrategies(\theta)} \EE[L_T(\assetsProcess)] 
		\leq \sup_{\mathcal{A}_{\text{c\`{a}dl\`{a}g}}(\theta)} \EE[L_T(\assetsProcess)] = \sup_{\assetsProcess \in \admissibleFiniteVariationStrategies{\theta}} \EE[L_T(\assetsProcess)] 
		= \EE[L_T(\assetsProcess^*)],
	\]
	meaning that $\assetsProcess^*$ is optimal also within in the (larger) set $\admissibleSemimartingaleStrategies(\theta)$.

\newcommand{\typeFname}{impact fixing}
\subsection{Stochastic liquidity and constrained liquidation horizon}
\label{ex: stochastic-finite-horizon}
\newcommand{\helperFunction}{\psi}
\newcommand{\helperFunctionHull}{\hat \helperFunction}
\newcommand{\otherHelperFunction}{\hat \Psi}

Let us investigate an optimal liquidation problem 
 for a variant of the price impact model which features \emph{stochastic liquidity}.
The singular control problem exhibits two interesting properties:  It still permits an explicit description for the optimal  strategy
under a new constraint on the \emph{expected} time to  (complete) liquidation,
but the optimal control is not of finite variation.
So the set of admissible strategies needs to accommodate for infinite variation controls.
As it is clear how to define the proceeds functional for (continuous) strategies of finite variation (cf.~\eqref{eq:proceeds cont proc}), and we want (and need) to admit for jumps in the (optimal) control, the $M_1$ topology is a natural choice to  extend the domain continuously.
We consider no discounting or drift in the unaffected price process, letting $\baseS_t = \baseS_0 \scE(\sigma W)_t$ with constant $\sigma > 0$. This martingale case will permit to apply convexity arguments in spirit of  \cite{PredoiuShaikhetShreve11} to construct an optimal control, see \cref{thm:stoch resilience} below.
In \eqref{eq:deterministic Y_t dynamics}, the dynamics of market impact  $Y$ (called volume effect in~\cite{PredoiuShaikhetShreve11})  was a deterministic function of the large trader's strategy $\assetsProcess$. 
To model liquidity which is stochastic (e.g.\ by volume imbalances from other large 'noise' traders, cf.\ 
\cite[Remark~2.4]{BechererBilarevFrentrup2016-stochastic-resilience}), 
 let the  impact process  $Y^\assetsProcess$ solve
\begin{equation}
	\diff Y^\assetsProcess_t = -\beta Y^\assetsProcess_t \diff t + \impactVolatility \diff B_t + \diff \assetsProcess_t\,, \quad  \text{ with }\quad Y^\assetsProcess_{0-} = Y_{0-} \in \RR \text{ given,}
\end{equation}
for constants $\beta, \impactVolatility > 0$ and a Brownian motion $B$ that is independent of $W$.
For the impact function $f \in C^3(\RR)$, giving the observed price  by $S_t = f(Y_t) \baseS_t$, we require $f,f' > 0$ with $f(0)=1$ and that $\lambda(y) := f'(y)/f(y)$ is bounded away from $0$ and $\infty$, i.e.\ for constants $0<\lambda_{\min}\le \lambda_{\max}$ we have $\lambda_{\min} \le \lambda(y) \le \lambda_{\max}$ for all $y \in \RR$, with bounded derivative $\lambda'$.
Moreover, we assume that $k(y):= \frac{\impactVolatility^2}{2} \frac{f''(y)}{f(y)} - \beta - \beta y \frac{f'(y)}{f(y)}$ is strictly decreasing.
An example satisfying these conditions is $f(y) = e^{\lambda y}$ with constant $\lambda > 0$.
Let $F(x):= \int_{-\infty}^x f(y) \diff y$, which is positive and of exponential growth due to the bounds on $\lambda$: $0 < F(x) \le \bp{ e^{\lambda_{\min}} + e^{\lambda_{\max}} } / \lambda_{\min}$.
The liquidation problem on infinite horizon \emph{with} discounting and \emph{without} intermediate buying in this model has been solved in \cite{BechererBilarevFrentrup2016-stochastic-resilience}.

For our problem here,
proceeds of general semimartingale strategies $\assetsProcess$ should be
\begin{align} \label{def:stochastic-impact-proceeds}
\begin{split}
	L_T(\assetsProcess) 
		&= \int_0^T \baseS_t \helperFunction(Y^\assetsProcess_{t-}) \diff t + \baseS_0 F(Y_{0-}) - \baseS_T F(Y^\assetsProcess_T)
\\		&\qquad+ \int_0^T F(Y^\assetsProcess_{t-}) \diff \baseS_t + \impactVolatility \int_0^T \baseS_t f(Y^\assetsProcess_{t-}) \diff B_t 
		\,,
\end{split}
\end{align}
with $\helperFunction(y):= -\beta y f(y) + \frac{\impactVolatility^2}{2} f'(y)$,
because \eqref{def:stochastic-impact-proceeds} is the continuous extension (in $M_1$ in probability, as in \cref{thm:stability in j1 and m1}) of the functional $L(\assetsProcess^c) = -\int_0^T S_u \diff \assetsProcess^c_u$ from continuous f.v.~$\assetsProcess^c$ to   semimartingales $\assetsProcess$ that are \emph{bounded in probability}  on $[0,\infty)$:
The proof of \cref{thm:stability in j1 and m1} carries over as for such $\assetsProcess$, impact $Y$ and thus $\helperFunction(Y)$ and $F(Y)$ are then also bounded in probability and the stochastic $\diff B$-integral in \eqref{def:stochastic-impact-proceeds} converges by a similar argument as in \eqref{conv of stoch integr} for the $\diff \baseS$ integral, using $\angles{\baseS}_t = \sigma^2 \int_0^t \baseS_u^2 \diff u = \sigma^2 \angles{\int_0^\cdot \baseS_u \diff B_u}_t$.

Our goal is to maximize expected proceeds $\EE[L_\infty(\assetsProcess)]$ over some suitable set of admissible strategies that we specify now.
From an application point of view, it makes sense to impose some bound on the time horizon within which liquidation is to be completed.
Indeed, since our control objective here involves no discounting, one needs to restrict the horizon to get 
a non-trivial solution.
Let some $\maxET \ge 0$ be given.
A semimartingale $\assetsProcess$ that is bounded in probability on $[0,\infty)$
will be called an \emph{admissible strategy}, if 
\begin{align*}
&\text{there exists a stopping time $\tau$ with $\EE[\tau] \le \maxET$ such that $\assetsProcess_t = \assetsProcess_t \indicator_{t \le \tau}$, with}\\
&\text{$\EE[\tau \baseS_\tau] < \infty$, $\bp{L_\tau(\assetsProcess)}^- \in L^1(\PP)$ and such that the processes $\int_0^{\cdot \wedge \tau} \baseS_t F(Y^\assetsProcess_{t-}) \diff W_t$\,,}\\
&\text{$\int_0^{\cdot \wedge \tau} \baseS_t f(Y^\assetsProcess_{t-}) \diff B_t$\,, \  $\baseS_{\cdot \wedge \tau}$ and $(\baseS B)_{\cdot \wedge \tau}$ are uniformly integrable (UI) martingales. }
\end{align*}
The integrability conditions ensure $L_\tau(\assetsProcess) \in L^1(\PP)$. Indeed, for admissible $\assetsProcess$ it suffices to check $\bp{\int_0^\tau \baseS_t \helperFunction(Y^\assetsProcess_{t-}) \diff t}^+ \in L^1(\PP)$.
We will show in the proof of \cref{thm:stoch resilience} that $\helperFunction$ attains a maximum $\helperFunction(y^*)$. Thus we can bound $\int_0^\tau \baseS_t \helperFunction(Y^\assetsProcess_{t-}) \diff t$ from above by $\helperFunction(y^*) \int_0^\tau \baseS_t \diff t$, which is integrable by optional projection \cite[Thm.~VI.57]{DellacherieMeyer82bookB} since $\EE[\tau \baseS_\tau] < \infty$.

Let $\scA_\maxET$ be the set of all admissible strategies with given fixed initial value $\assetsProcess_{0-}$, where $\abs{\assetsProcess_{0-}}$ is the number of shares to be liquidated (sold) if $\assetsProcess_{0-} >0$, resp.\ acquired (bought) if $\assetsProcess_{0-} <0$. 
The definition of  $\scA_\maxET$ involves several technical conditions. But  the set $\scA_\maxET$ is not small, for instance it clearly contains all strategies of finite variation which liquidate until some bounded stopping times $\tau$ with $\EE[\tau]\leq \maxET$, and also 
strategies of infinite variation (see below). 
Note that intermediate short selling is permitted, and that $\scA_0$ contains only the trivial strategy to sell (resp.~buy) everything immediately.

 We will
show that optimal strategies are  \emph{\typeFname{}}.
For $\tilde \Upsilon, \Upsilon \in \RR$ an \emph{\typeFname{} strategy} $\assetsProcess = \assetsProcess^{\tilde\Upsilon,\Upsilon}$ is a strategy with liquidation time $\tau$ (i.e.~$\assetsProcess_t = 0$ for $t\geq \tau$), such that $Y = Y^{\assetsProcess^{\tilde\Upsilon,\Upsilon}}$ satisfies $Y_t = \tilde \Upsilon$ on $\rightOpenStochasticInterval{0,\tau}$ and $Y_\tau = \Upsilon$. More precisely, $\assetsProcess_0 = \assetsProcess_{0-} + \tilde\Upsilon - Y_{0-}$,  $\diff \assetsProcess_t = \beta \tilde\Upsilon\diff t - \impactVolatility \diff B_t$ on $\openStochasticInterval{0,\tau}$ until $\tau = \tau^{\tilde\Upsilon,\Upsilon} := \inf \{ t > 0 \mid \assetsProcess_{t-} = \tilde \Upsilon - \Upsilon \}$, with final block trade of size $\Delta \assetsProcess_\tau = -\assetsProcess_{\tau-} = \Upsilon - \tilde\Upsilon$ and $\assetsProcess=0$ on $\rightOpenStochasticInterval{\tau,\infty}$. 
We have the following properties of \typeFname{} strategies (for proof, see \cref{sect:finite stochastic horizon proofs}).
\begin{lemma}[Admissibility of \typeFname{} strategies]\label{lem:admissible type F strategies}
	The liquidation time $\tau = \tau^{\tilde\Upsilon,\Upsilon}$ of an \typeFname{} strategy $\assetsProcess^{\tilde\Upsilon,\Upsilon}$ has expectation 
	\(
		\EE[\tau] = \paren{Y_{0-} - \assetsProcess_{0-} - \Upsilon}/\paren{\beta \tilde\Upsilon}
	\)
	if $(Y_{0-} - \assetsProcess_{0-} - \Upsilon)\tilde\Upsilon > 0$, and $\EE[\tau] = 0$ if $\Upsilon = Y_{0-} - \assetsProcess_{0-}$, otherwise $\EE[\tau] = \infty$.

	\noindent
	Moreover, if $\EE[\tau^{\tilde\Upsilon,\Upsilon}] \le \maxET$ then $\assetsProcess^{\tilde\Upsilon,\Upsilon} \in \scA_\maxET$.
\end{lemma}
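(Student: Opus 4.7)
The plan is to derive $\assetsProcess$ explicitly on $[0,\tau]$, identify $\tau$ as a first-passage time of a drifted Brownian motion and compute $\EE[\tau]$ by optional stopping, then verify the admissibility clauses of $\scA_\maxET$ using independence of $B$ and $W$.

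On $\openStochasticInterval{0,\tau}$ we have $Y^\assetsProcess\equiv\tilde\Upsilon$, so the SDE for $Y$ forces $\diff\assetsProcess_t=\beta\tilde\Upsilon\,\diff t-\hat\sigma\,\diff B_t$. Combined with the initial jump $\assetsProcess_0=\assetsProcess_{0-}+\tilde\Upsilon-Y_{0-}$ this yields $\assetsProcess_t=\assetsProcess_{0-}+\tilde\Upsilon-Y_{0-}+\beta\tilde\Upsilon t-\hat\sigma B_t$ for $t\in\openStochasticInterval{0,\tau}$, and the terminal condition $\assetsProcess_{\tau-}=\tilde\Upsilon-\Upsilon$ rewrites as
\[
\beta\tilde\Upsilon\tau-\hat\sigma B_\tau \;=\; Y_{0-}-\assetsProcess_{0-}-\Upsilon \;=:\; c,
\]
identifying $\tau$ as the first-passage time of $t\mapsto\beta\tilde\Upsilon t-\hat\sigma B_t$ to level $c$. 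Classical first-passage theory for Brownian motion with drift then delivers the three cases: $c=0$ gives $\tau=0$; when $c\tilde\Upsilon>0$ the drift points toward $c$, so $\tau<\infty$ a.s.\ with finite mean, and Wald's identity $\EE[B_\tau]=0$ (valid once $\EE[\tau]<\infty$) applied to the displayed equation yields $\EE[\tau]=c/(\beta\tilde\Upsilon)$; otherwise the drift is adverse and $\EE[\tau]=\infty$.

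Now assume $\EE[\tau]\le\maxET$. The pathwise formula gives $\sup_t|\assetsProcess_t|<\infty$ a.s., so $\assetsProcess^{\tilde\Upsilon,\Upsilon}$ is a predictable c\`adl\`ag semimartingale, bounded in probability, vanishing on $\rightOpenStochasticInterval{\tau,\infty}$. The decisive leverage for the remaining clauses is that $\tau\in\mathcal{F}^B_\infty$ is independent of $\baseS=\baseS_0\mathcal{E}(\sigma W)$: conditioning on $\mathcal{F}^B_\infty$ (which freezes $\tau$ while preserving the martingale property of $\baseS$) gives $\EE[\baseS_\tau\mid\mathcal{F}^B_\infty]=\baseS_0$, so that $\EE[\tau\baseS_\tau]=\baseS_0\EE[\tau]\le\baseS_0\maxET$ and the nonneg martingale $\baseS^\tau$ is UI (preserved expectation at $\tau$). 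Since $Y_{t-}\equiv\tilde\Upsilon$ on $\leftOpenStochasticInterval{0,\tau}$, moreover $\int_0^{\cdot\wedge\tau}\baseS F(Y_{-})\,\diff W$ collapses to $(F(\tilde\Upsilon)/\sigma)(\baseS_{\cdot\wedge\tau}-\baseS_0)$ and is UI as well.

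For the remaining UI conditions on $(\baseS B)^\tau$ and $\int_0^{\cdot\wedge\tau}\baseS f(Y_{-})\,\diff B = f(\tilde\Upsilon)\int_0^{\cdot\wedge\tau}\baseS\,\diff B$, I would use integration by parts $\int\baseS\,\diff B = \baseS B - \sigma\int B\baseS\,\diff W$ (employing $[\baseS,B]=0$), reducing matters to UI of $\int_0^{\cdot\wedge\tau}B\baseS\,\diff W$; by Fubini and independence its quadratic-variation expectation equals $\baseS_0^2\EE\bp{\int_0^\tau B_s^2 e^{\sigma^2 s}\,\diff s}$, whose finiteness stems from the exponential moments of the inverse-Gaussian law of $\tau$ (MGF finite on a neighborhood of the origin). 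Finally $(L_\tau)^-\in L^1$ follows from \eqref{def:stochastic-impact-proceeds}: the drift term has expectation $\psi(\tilde\Upsilon)\baseS_0\EE[\tau]$, the boundary terms are integrable via UI of $\baseS^\tau$, and the stochastic integrals have zero mean once UI. The main obstacle will be this last UI check for the $\diff B$-integral: Wald's identity alone only provides $\EE[\tau]<\infty$, whereas controlling $\EE\bp{\int_0^\tau \baseS_s^2\,\diff s}$ demands the specific exponential moments of the hitting-time distribution together with a careful independence-based separation of the $B$- and $W$-randomness.
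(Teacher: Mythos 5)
Your computation of $\EE[\tau]$ is correct and essentially equivalent to the paper's: you identify $\tau$ as a first-passage time of a drifted Brownian motion and use optional stopping/Wald, whereas the paper reads the three cases off the explicit first-passage density from Borodin--Salminen; both are standard. The treatment of $\baseS_{\cdot\wedge\tau}$ (nonnegative local martingale with preserved expectation at $\tau$, by independence of $\tau$ and $W$) and of the $\diff W$-integral, which collapses to a constant multiple of $\baseS_{\cdot\wedge\tau}-\baseS_0$, is also sound.

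The genuine gap is in the uniform integrability of $\int_0^{\cdot\wedge\tau}\baseS_t\diff B_t$ and $(\baseS B)_{\cdot\wedge\tau}$, i.e.\ precisely the step you flag as the main obstacle. You propose an $L^2$-bound via the expected quadratic variation. By independence of $(B,\tau)$ from $W$ this leads to integrals such as $\EE\bp{\int_0^\tau \baseS_s^2\diff s}=\baseS_0^2\int_0^\infty \PP[\tau>s]\,e^{\sigma^2 s}\diff s$, whose finiteness is equivalent to $\EE[e^{\sigma^2\tau}]<\infty$. But the moment generating function of the inverse-Gaussian law of $\tau$ is finite only up to the threshold $\tfrac12(\beta\tilde\Upsilon/\impactVolatility)^2$, and nothing forces $\sigma^2$ to lie below it: the constraint $\EE[\tau]\le\maxET$ only gives $\abs{\tilde\Upsilon}\ge\abs{Y_{0-}-\assetsProcess_{0-}-\Upsilon}/(\beta\maxET)$, and the right-hand side can be made arbitrarily small by taking $\Upsilon$ close to $Y_{0-}-\assetsProcess_{0-}$. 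So ``MGF finite on a neighborhood of the origin'' is not enough — you need it at the fixed point $\sigma^2$ — and for a nonempty range of admissible parameters the $L^2$ quantity is actually infinite. The paper circumvents exactly this by working in $\scH^1$ rather than $\scH^2$: by Burkholder--Davis--Gundy, $\EE\brackets[\big]{[\baseS]_\tau^{1/2}}\le C\,\EE[\sup_{u\le\tau}\baseS_u]$, and the key estimate is $\EE[\sup_{u\le t}\baseS_u]\le 1+\sigma\sqrt{t}/\sqrt{2\pi}$ — the running maximum of the exponential martingale grows only like $\sqrt{t}$ in expectation, because the $-\sigma^2u/2$ drift in the exponent compensates the Gaussian fluctuations (this is a delicate reflection-principle computation, not a consequence of Doob or BDG alone). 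Combined with independence of $\tau$ from $W$, everything then reduces to $\EE[\sqrt{\tau}\,]$ and $\EE[\tau]$, which are finite by hypothesis; an extra factor $\sqrt{\tau}$ handles $[\baseS B]_\tau$. To complete your proof you would need this (or an equivalent) sub-exponential bound on $\EE[\sup_{u\le t}\baseS_u]$ in place of the exponential moments of $\tau$, which are not available in general.
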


 Using convexity arguments we construct the solution for the optimization problem in 
\begin{theorem}\label{thm:stoch resilience}
	For every $\maxET \in [0,\infty)$ there exist $\hat\eta \in [0,\maxET]$ and $\tilde \Upsilon$, $\Upsilon \in \RR$ such that the  associated {\typeFname{} strategy} $\hat \assetsProcess := \assetsProcess^{\tilde \Upsilon, \Upsilon}$ generates maximal expected proceeds in expected time $\EE[\tau^{\tilde \Upsilon, \Upsilon}] = \hat\eta$ among all admissible strategies, i.e.
	\begin{equation*}
		\EE[L_\infty(\hat\assetsProcess)] = \max \braces[\big]{ \EE[L_\infty(\assetsProcess)]  \bigm| \assetsProcess \in \scA_\maxET}\,.
	\end{equation*}
	Moreover, if $f(y) = e^{\lambda y}$ with $\lambda\in (0,\infty)$, then we have $\hat\eta = \maxET$ and the optimal strategy is unique.
\end{theorem}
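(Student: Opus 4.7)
The plan is to reduce the stochastic control problem to a finite-dimensional optimization over the \typeFname{} parameters $(\tilde\Upsilon,\Upsilon)$ and to establish, via a convexity argument in the spirit of \cite{PredoiuShaikhetShreve11}, that no other admissible strategy can do better.

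First, using the UI martingale property that is built into the admissibility definition, the two stochastic integrals in \eqref{def:stochastic-impact-proceeds} have zero expectation at $\tau$, so
\[
	\EE[L_\infty(\assetsProcess)] = \baseS_0 F(Y_{0-}) + \EE\Big[\int_0^\tau \baseS_t\, \helperFunction(Y^\assetsProcess_{t-}) \diff t\Big] - \EE\big[\baseS_\tau F(Y^\assetsProcess_\tau)\big].
\]
The function $\helperFunction$ satisfies $\helperFunction'(y)=f(y)k(y)$; combined with $f>0$, the strict monotonicity of $k$ (assumption), and the fact that the two-sided bounds $\lambda_{\min}\le\lambda\le\lambda_{\max}$ force $k(y)\to\mp\infty$ as $y\to\pm\infty$, this yields a unique global maximiser $y^*\in\RR$ of $\helperFunction$.

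For an \typeFname{} strategy $\assetsProcess^{\tilde\Upsilon,\Upsilon}$, the stopping time $\tau^{\tilde\Upsilon,\Upsilon}$ is a hitting time of a process driven by $B$ alone and hence independent of $\baseS$. The martingale property of $\baseS$ together with UI at $\tau$ then gives $\EE[\int_0^\tau \baseS_t \diff t] = \baseS_0 \EE[\tau]$ and $\EE[\baseS_\tau F(\Upsilon)] = \baseS_0 F(\Upsilon)$; combined with the explicit expectation in \cref{lem:admissible type F strategies}, one obtains the closed form
\[
	\EE[L_\infty(\assetsProcess^{\tilde\Upsilon,\Upsilon})] = \baseS_0 \Big( F(Y_{0-}) - F(\Upsilon) + \tfrac{\helperFunction(\tilde\Upsilon)}{\beta\tilde\Upsilon}\,(Y_{0-} - \assetsProcess_{0-} - \Upsilon)\Big).
\]
Maximising over the feasible set $\{(\tilde\Upsilon,\Upsilon) \mid \EE[\tau^{\tilde\Upsilon,\Upsilon}]\le\maxET,\ (Y_{0-}-\assetsProcess_{0-}-\Upsilon)\tilde\Upsilon>0\}$ is a finite-dimensional Lagrange problem whose first-order conditions identify $(\tilde\Upsilon^*,\Upsilon^*)$ via $f(\Upsilon^*) = \helperFunction(\tilde\Upsilon^*)/(\beta\tilde\Upsilon^*) - \lambda$ and a scalar equation for $\tilde\Upsilon^*$ depending on the Lagrange multiplier $\lambda\ge 0$ associated with the horizon constraint. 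Existence follows from the coercivity provided by the exponential growth bounds on $F$; one sets $\hat\eta:=\EE[\tau^{\tilde\Upsilon^*,\Upsilon^*}]\le\maxET$.

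The core step is to show that $\assetsProcess^{\tilde\Upsilon^*,\Upsilon^*}$ is optimal among \emph{all} admissible strategies. My plan is to apply the change of measure $\diff \QQ/\diff\PP = \baseS_\tau/\baseS_0$ on $\scF_\tau$ (well defined by UI) together with $B\perp W$ (so $B$ retains its dynamics under $\QQ$ by Girsanov) to rewrite
\[
	\EE[L_\infty(\assetsProcess)]/\baseS_0 = F(Y_{0-}) + \EE^\QQ\Big[\int_0^\tau \helperFunction(Y^\assetsProcess_{t-}) \diff t - F(Y^\assetsProcess_\tau)\Big].
\]
Applying Itô to $F(Y^\assetsProcess)$ and solving for the time integral of $\helperFunction$ gives
\(
	\int_0^\tau \helperFunction(Y_{t-}) \diff t - F(Y_\tau) = -F(Y_{0-}) - \int_0^\tau f(Y_{t-})\impactVolatility \diff B_t - \int_0^\tau f(Y_{t-}) \diff \assetsProcess^c_t - \sum_{t\le\tau}\bigl(F(Y_t)-F(Y_{t-})-f(Y_{t-})\Delta\assetsProcess_t\bigr),
\)
which (upon taking $\EE^\QQ$ and eliminating the $B$-martingale) expresses the objective as a convex functional of the ``trading cost'' $\int f(Y_{t-})\diff\assetsProcess_t$ plus a jump penalty that vanishes exactly when $Y$ is held at a level maximising the per-share efficiency $\helperFunction(\tilde\Upsilon)/(\beta\tilde\Upsilon)$. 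This matches the upper bound from the previous step. The main obstacle is making this convexity argument rigorous for a general (possibly infinite-variation) semimartingale $\assetsProcess$: this is precisely where the continuous $M_1$-extension of $L$ from \cref{thm:stability in j1 and m1} is indispensable, since the optimum is not attained in the class of finite-variation strategies.

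Finally, for $f(y)=e^{\lambda y}$ one checks directly that $\helperFunction(y)=(\tfrac{\impactVolatility^2\lambda}{2}-\beta y)e^{\lambda y}$ and that $\helperFunction(\tilde\Upsilon)/(\beta\tilde\Upsilon)$ is strictly unimodal with a unique interior maximiser, while strict monotonicity of $f$ pins down $\Upsilon^*$ uniquely. A derivative check in the Lagrange parameter $\lambda$ then shows that the horizon constraint is active at the optimum, i.e.\ $\hat\eta=\maxET$, which together with the previous paragraphs yields uniqueness of the optimal strategy.
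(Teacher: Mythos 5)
There is a genuine gap at the heart of the argument: the ``core step'' that is supposed to show no admissible strategy beats the best \typeFname{} strategy is not actually a proof. Your It\^o computation for $F(Y^\assetsProcess)$ under $\QQ$ merely re-derives the identity $\EE[L_\infty(\assetsProcess)]/\baseS_0 = -\EE^\QQ[\int_0^\tau f(Y_{t-})\diff\assetsProcess_t + \dots]$, i.e.\ it expresses the objective as the $\QQ$-expected trading cost again; this is circular and yields no upper bound. The assertion that one obtains ``a convex functional of the trading cost plus a jump penalty that vanishes exactly when $Y$ is held at a level maximising $\helperFunction(\tilde\Upsilon)/(\beta\tilde\Upsilon)$'' is not substantiated and, as stated, is not correct: $\helperFunction$ is \emph{not} concave on all of $\RR$ (it is increasing on $(-\infty,y^*)$ and strictly concave only on $[y^*,\infty)$), so no single application of Jensen to $\helperFunction$ can work. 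The missing idea is the paper's concave hull $\helperFunctionHull(y)=\helperFunction(y\vee y^*)$ combined with a \emph{double} Jensen argument on the occupation measure $\mu(A\times B)=\int_A\int_0^{\tau(\omega)}\indicator_B(t)\diff t\,\QQ[\diff\omega]$: one bounds $\helperFunction\le\helperFunctionHull$ pointwise, applies Jensen for the concave $\helperFunctionHull$ with respect to $\mu$, converts the resulting average $\frac{1}{\EE_\QQ[\tau]}\int Y_t\,\mu(\diff\omega,\diff t)$ into $\bp{Y_{0-}-\assetsProcess_{0-}-\EE_\QQ[Y_\tau]}/(\beta\EE_\QQ[\tau])$ via the pathwise identity $\int_0^t\beta Y_s\diff s=\impactVolatility B_t+\assetsProcess_t-\assetsProcess_{0-}-Y_t+Y_{0-}$ and $\EE[\baseS_\tau B_\tau]=0$, and finally applies Jensen for the convex $F$ to $-\EE_\QQ[F(Y_\tau)]$. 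This produces the bound $\EE[L_\infty]\le F(Y_{0-})+\otherHelperFunction(\EE_\QQ[\tau],\EE_\QQ[Y_\tau])$ valid for \emph{every} admissible strategy, and the three equality conditions (each $Y_\tau$ and $Y|_\mu$ concentrated at a point, and $Y\ge y^*$ $\mu$-a.e.) are what force optimal strategies to be \typeFname{}. Your appeal to the $M_1$-extension of \cref{thm:stability in j1 and m1} at this point is a red herring: it is needed to \emph{define} $L$ for infinite-variation strategies, not to prove the optimality bound.

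Two smaller issues. First, your finite-dimensional reduction (the closed form for $\EE[L_\infty(\assetsProcess^{\tilde\Upsilon,\Upsilon})]$ and the Lagrange first-order conditions) is consistent with the paper's parametrization $\otherHelperFunction(\eta,\Upsilon)$, but existence of a maximizer needs the continuity of $\eta\mapsto\otherHelperFunction(\eta,\hat e(\eta))$ down to $\eta=0$, which the paper establishes by a separate compactness/contradiction argument that your coercivity remark does not replace. Second, for $f(y)=e^{\lambda y}$ the claim $\hat\eta=\maxET$ is proved in the paper by checking $\otherHelperFunction_\eta(\eta,\Upsilon)>0$ directly (so the value is strictly increasing in $\eta$); your ``derivative check in the Lagrange parameter'' is plausible but unsupported, and you overload $\lambda$ as both the multiplier and the exponent.
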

The proof will also show that optimal strategies have to be  \emph{\typeFname{}}. In particular, 
any non-trivial  admissible strategy of finite variation is suboptimal.
\begin{proof}
Since  $f'/f$ and $(f'/f)'$ 
  are bounded, then $f''/f$ is also bounded and hence there is a unique $y^* \in \RR$ with $k(y^*) = 0$.
So $\helperFunction$ is strictly increasing on $(-\infty,y^*)$ and decreasing on $(y^*,\infty)$, since $\helperFunction'(y) = f(y)k(y)$.
Note that $\helperFunction$ is strictly concave on $[y^*, \infty)$ and $\helperFunction(y) > 0$ for $y < 0$. Hence, the concave hull of $\helperFunction$ is
\[
	\helperFunctionHull(y):= \inf \{ \ell(y) \mid \text{$\ell$ is an affine function with $\ell(x) \ge \helperFunction(x)\ \forall x$} \} = \helperFunction(y \vee y^*)\,.
\]
Let $\assetsProcess \in \scA_\maxET$ with liquidation time $\tau$.
Denote by $\QQ$ the measure with $\diff \QQ = \bp{ \baseS_\tau / \baseS_0 } \diff \PP$.
Then by optional projection, as in \cite[Thm.~VI.57]{DellacherieMeyer82bookB}, we obtain (taking w.l.o.g.~$\baseS_0 = 1$):
\begin{align}
\nonumber	\EE[L_\infty] 
		&= \EE[L_\tau] 
		= \EE\brackets[\Big]{ \int_0^\tau \baseS_t \helperFunction(Y_t) \diff t } + F(Y_{0-}) - \EE\brackets[\big]{ \baseS_{\tau} F(Y_\tau) }
\\
\nonumber		&= F(Y_{0-}) + \EE_\QQ\brackets[\Big]{ \int_0^\tau \helperFunction(Y_t) \diff t } - \EE_\QQ\brackets[\big]{ F(Y_\tau) }
\\ \label{eq:expected proceeds via measure mu}
		&= F(Y_{0-}) + \int_{\Omega \times [0,\infty)} \helperFunction(Y_t(\omega)) \mu(\!\diff \omega, \!\diff t) - \EE_\QQ\brackets[\big]{ F(Y_\tau) }
\,,
\end{align}
for the finite measure $\mu$ given by $\mu(A \times B):= \int_A \int_0^{\tau(\omega)} \indicator_B(t) \diff t \QQ[\!\diff \omega]$ with total mass $\mu(\Omega,[0,\infty)) = \EE_\QQ[\tau] = \EE[\tau \baseS_\tau] < \infty$.
For $\tau \neq 0$, Jensen's inequality for $\helperFunctionHull$ and $F$ gives
\begin{align}
\label{ineq:concave hull above helperFunction}
	\EE[L_\infty] &\le F(Y_{0-}) + \int_{\Omega \times [0,\infty)} \helperFunctionHull(Y_t(\omega)) \mu(\!\diff \omega, \!\diff t) - \EE_\QQ\brackets[\big]{ F(Y_\tau) }
\\ \label{ineq:Jensen for concave hull of helperFunction}
		&\le F(Y_{0-}) + \EE_\QQ[\tau] \helperFunctionHull\paren[\bigg]{ \frac{1}{\EE_\QQ[\tau]} \int_{\Omega \times [0,\infty)} Y_t(\omega) \mu(\!\diff\omega,\!\diff t) } - \EE_\QQ\brackets[\big]{ F(Y_\tau) }
\allowdisplaybreaks
\\ \label{eq: E_Q of int beta Y inside helperFunction hat}
		&= F(Y_{0-}) + \EE_\QQ[\tau] \helperFunctionHull\paren[\bigg]{ \frac{1}{\beta \EE_\QQ[\tau]} \EE_\QQ\brackets[\Big]{ \int_0^\tau \beta Y_t \diff t } } - \EE_\QQ\brackets[\big]{ F(Y_\tau) }
\allowdisplaybreaks
\\ \label{eq: E_Q of Y inside helperFunction hat}
		&= F(Y_{0-}) + \EE_\QQ[\tau] \helperFunctionHull\paren[\bigg]{ \frac{Y_{0-} - \assetsProcess_{0-} - \EE_\QQ[Y_\tau]}{\beta \EE_\QQ[\tau]} } - \EE_\QQ\brackets[\big]{ F(Y_\tau) }
\allowdisplaybreaks
\\ \label{ineq:Jensen for F}
		&\le F(Y_{0-}) + \EE_\QQ[\tau] \helperFunctionHull\paren[\bigg]{ \frac{Y_{0-} - \assetsProcess_{0-} - \EE_\QQ[Y_\tau]}{\beta \EE_\QQ[\tau]} } - F\bp{\EE_\QQ[Y_\tau] }
\\ \label{eq:expected proceeds bound fct of Etau and EYtau}
		&= F(Y_{0-}) + \otherHelperFunction\bp{ \EE_\QQ[\tau], \EE_\QQ[Y_\tau] }
\,,
\end{align}
for $\otherHelperFunction(\eta, \Upsilon):= \eta \helperFunctionHull\bp{ \frac{Y_{0-} - \assetsProcess_{0-} - \Upsilon}{\beta \eta} } - F(\Upsilon)$ when $\eta > 0$, while for $\tau = 0$ we get that $\EE[L_\infty]$ is given by \eqref{eq:expected proceeds bound fct of Etau and EYtau} with  $\otherHelperFunction(0, \Upsilon):= - F(\Upsilon)$.
The step from \eqref{eq: E_Q of int beta Y inside helperFunction hat} to \eqref{eq: E_Q of Y inside helperFunction hat} uses that $\EE[ \baseS_\tau B_\tau ] = 0$, due to $(\baseS B)_{\cdot\wedge\tau}$ being UI, and $\int_0^t \beta Y_s \diff s = \impactVolatility B_t + \assetsProcess_t - \assetsProcess_{0-} - Y_t + Y_{0-}$.
Since $F$ is strictly convex, we obtain equality in \eqref{ineq:Jensen for F} if and only if $Y_\tau$ is concentrated at a point $\Upsilon \in \RR$ $\PP$-a.s.
At \eqref{ineq:Jensen for concave hull of helperFunction} we obtain equality if and only if either $Y_t \in (-\infty, y^*]$ $\mu$-a.e. (where $\helperFunctionHull$ is affine) or $Y_t$ is concentrated at a point $\tilde\Upsilon \in \RR$ $\mu$-a.e.
Equality at \eqref{ineq:concave hull above helperFunction} can only happen if $Y\geq y^*$ $\mu$-a.e.
Hence, we only get equality
\begin{equation*}
	\EE[L_\infty] = F(Y_{0-}) + \otherHelperFunction\bp{ \EE_\QQ[\tau], \EE_\QQ[Y_\tau] }
\end{equation*}
for \emph{\typeFname{} strategies} $\assetsProcess = \assetsProcess^{\tilde\Upsilon,\Upsilon}$ with $\tilde \Upsilon \geq y^*$,
where $\EE[L_\tau] = F(Y_{0-}) + \otherHelperFunction(\EE[\tau], \Upsilon)$.
Since $y^*$ is the largest maximizer of $\helperFunctionHull$, $\lim_{y \to \infty} \helperFunction'(y) = -\infty$ and $F$ is strictly increasing, $\otherHelperFunction(\eta, \cdot)$ has a unique maximizer $\hat e(\eta) \in (-\infty, e^*)$ where $e^* = e^*(\eta) = Y_{0-} - \assetsProcess_{0-} - \beta \eta y^*$ for $\eta > 0$ and $\hat e(0) = e^*(0) = Y_{0-} - \assetsProcess_{0-}$. 
Because $\hat y(\eta):= \bp{ Y_{0-} - \assetsProcess_{0-} - \hat e(\eta) }/(\beta\eta) > y^*$, the {\typeFname{} strategy} $\assetsProcess^{\hat y(\eta), \hat e(\eta)}$ has expected time to liquidation $\eta$ (cf.~\cref{lem:admissible type F strategies}) and generates $F(Y_{0-}) + \otherHelperFunction\bp{ \eta, \hat e(\eta) }$ expected proceeds that are optimal among all {\typeFname{} strategies} with expected time to liquidation $\eta$.

Note that $\hat{e}(\eta)$ is continuous in $\eta\in (0,+\infty)$ by the implicit function theorem; recall that $\hat{e}(\eta)$ solves $0=\otherHelperFunction_\Upsilon(\eta, \hat{e}(\eta)) = -\helperFunctionHull'(\hat{y}(\eta))/\beta - f(\hat{e}(\eta))$, and $\otherHelperFunction_{\Upsilon\Upsilon}(\eta, \Upsilon) < 0$ for $\Upsilon < e^*(\eta)$. Moreover, $\hat{e}(\eta)\to \hat{e}(0)$ when $\eta\to 0$, otherwise $\hat{y}(\eta)\to +\infty$ for a subsequence giving $-\helperFunctionHull'(\hat{y}(\eta))/\beta =-(fk)(\hat{y}(\eta))/\beta \to +\infty $ and therefore also $f(\hat{e}(\eta))\to +\infty$, which would contradict $\limsup_{\eta\to 0} \hat{e}(\eta) \leq \lim_{\eta\to 0}e^{*}(\eta) = Y_{0-}-\assetsProcess_{0-}$. 

In particular, the contradiction argument above shows that $\hat{y}(\eta)$ is contained in a compact set for small $\eta$. As a consequence, $\hat{G}(\eta, \hat{e}(\eta)) = \eta \helperFunctionHull(\hat{y}(\eta)) - F(\hat{e}(\eta)) \to \hat{G}(0, \hat{e}(0))$ as $\eta\to 0$, i.e.\ the map $\eta\mapsto \otherHelperFunction(\eta, \hat{e}(\eta))$ is continuous on $[0,+\infty)$.
 Hence,  it attains a maximizer $\hat\eta \in [0,\maxET]$ whose associated {\typeFname{} strategy} $\hat \assetsProcess = \assetsProcess^{\hat y(\hat \eta), \hat e(\hat \eta)}$ generates maximal expected proceeds in expected time $\EE[\tau^{\hat y(\hat \eta), \hat e(\hat \eta)}] = \hat\eta$ among all admissible strategies $\scA_\maxET$.
 
If $f(y) = e^{\lambda y}$ with  $\lambda\in (0,\infty)$, one can check by direct calculations that $\hat{G}_\eta(\eta, \Upsilon)  > 0$ for $\eta>0$, $\Upsilon \in \RR$, and thus using $\frac{\mathrm d}{\mathrm d \eta} \hat{G}(\eta, \hat{e}(\eta)) = \hat{G}_\eta(\eta, \hat{e}(\eta)) +  \hat{G}_{\Upsilon}(\eta, \hat{e}(\eta)) \hat{e}'(\eta) = \hat{G}_\eta(\eta, \hat{e}(\eta))$, the map $\eta \mapsto \otherHelperFunction(\eta,\hat{e}(\eta))$ is strictly increasing, so $\hat\eta = \maxET$ is its unique maximizer in $[0,\maxET]$ and hence the optimal strategy is unique.
\end{proof}

\subsection{Price impact with partially instantaneous recovery} \label{ex: partial instantaneous impact}
This example is inspired by work of \cite{Roch11} on a different (additive impact, block-shaped limit order book (LOB))
 price impact model; adapting his interesting idea to our setup leads to an extension of our
 transient impact 
model,
where a further parameter $\eta\in (0,1]$ permits for partially instantaneous recovery of price impact.
Further, the example illustrates how 
proceeds from 
trading could, at first, be given for simple strategies only, and 
continuity arguments are key for an extension to a larger space of strategies.

Motivated by 
observations
that other traders 
respond quickly to 
market orders by adding limit orders in opposite direction, \cite{Roch11} has proposed a model where impact from a block trade is partially instantaneous and partially 
transient.
A market sell (resp.\ buy) order
eats into the bid (resp.\ ask) side of a LOB and is filled at respective prices, price impact being a function of the shape of the LOB.
A certain  fraction $1-\eta$ ($0< \eta\le 1$) of that
impact 
is instantaneously recovered directly after the trade, while only the remaining $\eta$-fraction constitutes a transient impact that decays gradually over time (cf.\ \eqref{eq: impact with eta}). 
As stated in \cite{Roch11}, this means that ``we think of $1-\eta$ as the fraction of the order book which is renewed after a market order so that in practice the actual impact on prices is $\eta$
 times the full impact''.
In our previous model for a two-sided LOB (non-monotone strategies),  with the idealizing assumption of zero bid-ask spread, 
the model with full impact ($\eta=1$) implicitly postulates 
that the gap between bid and ask prices after a block buy (resp.~sell) order is 
filled up instantaneously with ask (resp.~bid) orders. For one-directional trading such hypothesis is conservative, but for trading in alternating directions it may be overly optimistic.
So,  it appears 
to be an interesting generalization to postulate that the gap is closed from both sides in a certain fraction.

To incorporate this into our setup, let $\eta \in [0,1]$ and  suppose that the impact directly after completion of a block trade of size $\Delta\assetsProcess_t$ at time $t\in [0,\infty)$ is actually $Y_{t-} + \eta\Delta\assetsProcess_t$, where $Y_{t-}$ is the market impact immediately before the trade. 
Thus, the market impact process $Y^{\eta, \assetsProcess}$  evolves according to 
\begin{equation}\label{eq: impact with eta}
\diff Y^{\eta, \assetsProcess}_t = -h(Y^{\eta, \assetsProcess}_t)\diff \langle M\rangle_t + \eta \diff \assetsProcess_t, \quad t\geq 0.
\end{equation}
Indeed, \eqref{eq: impact with eta} holds for simple strategies $\assetsProcess$ and hence for all c\`{a}dl\`{a}g trading programs $\assetsProcess$ by continuity of $\assetsProcess\mapsto Y^{\eta, \assetsProcess}$ in the  uniform and Skorokhod $J_1$ and $M_1$ topologies. 

The case $\eta = 0$ corresponds to 
no (non-instantaneous) impact while $\eta = 1$ gives our previous setup with full impact.
The situation where $\eta\in (0, 1)$ is more delicate, in that executing a block order at once would always be suboptimal, whereas subdividing a block trade into smaller ones and executing them one after the other would lead to smaller expenses, i.e.\ larger proceeds, due to the instantaneous partial recovery 
of price impact.
Thus, there would be a difference between \textit{asymptotically realizable} proceeds from a block trade (in the terminology of \cite{BankBaum04}) and its direct proceeds from a LOB interpretation. 

Motivated by optimization questions like the optimal trade execution problem where a trader tries to evade illiquidity costs from large (block) orders, if possible, our aim is to specify a model that is stable with respect to small intertemporal changes, in particular approximating block trades by subdividing the trade into small packages and executing them in short time intervals. 
Thus, the proceeds that we will derive here will be asymptotically realizable. First, let us only assume that at every time $t\geq 0$, the average price per share for a block trade of size $\Delta$ is some value between $f(Y_{t-})\baseS_t$ and $f(Y_{t-} + \Delta)\baseS_t$, where $Y_{t-}$ is the state of the impact process right before the block trade. 
Hence, the arguments in the proof of \cref{lemma:proceeds of cont fv strategies} carry over (with $c=1/\eta$, $Y=Y^\eta/\eta$ and suitably re-scaled functions $f$,$h$)
and yield that the proceeds from implementing a continuous finite variation strategy $\assetsProcess$ should be given by
\({\small
	\tilde L_T(\assetsProcess) = -\int_0^T \baseS_t f(Y^{\eta, \assetsProcess}_{t})\diff \assetsProcess_t,
}
\) $T\ge 0$, irrespective of a particular initial specification for proceeds from block trades.
As such was the starting point for \cref{sect:continuity}, the analysis there for the case $\eta=1$ carries over 
to the model extension for $\eta\in(0,1]$: For any continuous f.v.\ process $\assetsProcess$ we obtain
\begin{equation}\label{eq:proceeds partial impact}
	\hspace{-\mathindent}
	\squeeze{
	\tilde L_T(\assetsProcess) 
	\!
		=\! \frac{1}{\eta}\paren[\bigg]{ 
			\int_0^T \!\!\!\! F(Y^{\eta,\assetsProcess}_{u-})\diff \baseS_u 
			-\! \int_0^T \!\!\!\! \baseS_u (fh)(Y^{\eta,\assetsProcess}_u)\diff \langle M \rangle_u 
			- \paren[\big]{ \baseS_T F(Y^{\eta,\assetsProcess}_T) - \baseS_0 F(Y^{\eta,\assetsProcess}_{0-}) } 
		\!}
	}
	\mathrlap{.}
\end{equation}
By \cref{thm:stability in j1 and m1} the right-hand side of \eqref{eq:proceeds partial impact} is continuous in the predictable strategy $\assetsProcess$ taking values in $D([0,T];\RR)$ when endowed with any of the uniform, Skorokhod $J_1$ and $M_1$ topologies. 
So, asymptotically realizable proceeds are given by \eqref{eq:proceeds partial impact}. 
In particular, asymptotically realizable proceeds from a block sale of size $\Delta\neq 0$ at time $t$ are 
\[
	-\frac{1}{\eta} \baseS_t \paren[\big]{ F(y_{t-} + \eta \Delta) - F(y_{t-}) } 
		= -\frac 1 \eta \baseS_t\int_0^{\eta \Delta} f(y_{t-} + x)\diff x
	\,,
\]
where $y_{t-}$ denotes the state of the market impact process before the trade. Note that these proceeds strictly dominate
the proceeds  $- \baseS_t\int_0^{ \Delta} f(y_{t-} + x)\diff x$  that would arise from a executing the block sale in the  LOB 
corresponding to the price impact function $f$. 
Also this model variant is free of arbitrage in the sense of \cref{thm:absence of arbitrage}, whose proof carries over.
In mathematical terms one may observe, maybe surprisingly, that the model structure (see \eqref{eq: impact with eta} and \eqref{eq:proceeds partial impact}) for the extension $\eta\in (0,1]$ is like the one for the previous model (with $\eta=1$), and is hence amenable to a likewise analysis. In finance terms,  to model partially instantaneous recovery in such a way thus has  quantitative effects. But it does not lead to new qualitative features for the model, since the large investor could side-step much of the, at first sight, highly disadvantageous effect from large block trades by trading continuously (in approximation), at least in absence of further frictions.

\appendix
\section{Appendix}

The next proposition collects known continuity properties 
of the solution map $\assetsProcess\mapsto Y^\assetsProcess$ on $D([0,T];\RR)$  from \eqref{eq:deterministic Y_t dynamics}, with the presentation being adapted to our setup.
\begin{proposition}\label{prop:cont of resilience}
Assume that $h$ is Lipschitz continuous  and $\langle M \rangle = \int_0^\cdot \alpha_s \diff s$ with pathwise (locally) Lipschitz density $\alpha$.
Then the solution map $D([0,T];\RR)\to D([0,T];\RR)$, with $\assetsProcess\mapsto Y^\assetsProcess$ from \eqref{eq:deterministic Y_t dynamics}, is  defined pathwise. The map is continuous when the space $D([0,T];\RR)$ is endowed with either the uniform topology or the Skorokhod $J_1$ or $M_1$ topology. Moreover, if $\assetsProcess$ is an adapted c\`{a}dl\`{a}g process, then the process $Y^\assetsProcess$ is also adapted.
\end{proposition}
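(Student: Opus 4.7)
The plan is to handle this proposition in three stages: first establish pathwise well-posedness and adaptedness, then prove uniform continuity by a Gronwall estimate, and finally lift that estimate to the $J_1$ and $M_1$ topologies via time changes and parametric representations.

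For pathwise solvability, fix $\omega$ and a horizon $T$ and rewrite \eqref{eq:deterministic Y_t dynamics} as the Volterra integral equation
\[
Y_t = Y_{0-} + \assetsProcess_t - \int_0^t h(Y_s)\alpha_s\,ds.
\]
Since $h$ is globally Lipschitz and $\alpha(\omega)$ is bounded on $[0,T]$, the Picard iteration $Y^{(0)}_t:=Y_{0-}+\assetsProcess_t$, $Y^{(k+1)}_t:=Y_{0-}+\assetsProcess_t-\int_0^t h(Y^{(k)}_s)\alpha_s\,ds$, contracts by a factor $(L_h\|\alpha\|_{\infty,T}T)^k/k!$ on the supremum norm and converges uniformly to the unique c\`adl\`ag solution; uniqueness also follows directly from Gronwall. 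If $\assetsProcess$ is adapted then each iterate is adapted (as an adapted process plus a Lebesgue integral of an adapted integrand), and adaptedness is preserved under uniform pathwise limits, giving the adaptedness claim.

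For the uniform topology, set $D^n_t := Y^{\assetsProcess^n}_t - Y^\assetsProcess_t$ and estimate pathwise
\[
|D^n_t| \le \sup_{s\le T}|\assetsProcess^n_s-\assetsProcess_s| + L_h\int_0^t |D^n_s|\,\alpha_s\,ds,
\]
so Gronwall yields $\sup_{t\le T}|D^n_t| \le \sup_{t\le T}|\assetsProcess^n_t-\assetsProcess_t|\cdot\exp\bp{L_h\int_0^T\alpha_s\,ds}\to 0$. For the $J_1$ case, pick increasing homeomorphisms $\lambda_n\colon[0,T]\to[0,T]$ with $\|\lambda_n-\mathrm{id}\|_\infty\to 0$ and $\|\assetsProcess^n\circ\lambda_n-\assetsProcess\|_\infty\to 0$; after a change of variables one checks that $\tilde Y^n_t:=Y^{\assetsProcess^n}_{\lambda_n(t)}$ satisfies
\[
\tilde Y^n_t = Y_{0-} + (\assetsProcess^n\circ\lambda_n)(t) - \int_0^t h(\tilde Y^n_s)\,\alpha_{\lambda_n(s)}\lambda_n'(s)\,ds
\]
(first for piecewise linear $\lambda_n$, then in general by regularization). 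Because $\alpha_{\lambda_n(\cdot)}\lambda_n'(\cdot)$ is uniformly bounded (locally) and converges weakly to $\alpha$, a Gronwall-type estimate as in \cite[proof of Thm.~4.1]{PangTalrejaWhitt2007} forces $\|\tilde Y^n - Y^\assetsProcess\|_\infty\to 0$, which is precisely $J_1$-convergence of $Y^{\assetsProcess^n}$ to $Y^\assetsProcess$.

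The $M_1$ case is the main technical obstacle, and the plan is to repeat the preceding Gronwall argument at the level of parametric representations. Given $(u_n,r_n)\in\Pi(\assetsProcess^n)$ and $(u,r)\in\Pi(\assetsProcess)$ with $\|u_n-u\|\vee\|r_n-r\|\to 0$, define $v_n(s):=Y^{\assetsProcess^n}_{r_n(s)}$ off the plateaus of $r_n$ and interpolate linearly on each plateau in the same proportion that $u_n$ uses to interpolate between $\assetsProcess^n_{r_n(s)-}$ and $\assetsProcess^n_{r_n(s)}$; since $\Delta Y^{\assetsProcess^n} = \Delta \assetsProcess^n$, one verifies that $(v_n,r_n)\in\Pi(Y^{\assetsProcess^n})$, and similarly $(v,r)\in\Pi(Y^\assetsProcess)$. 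At a non-plateau point $s$ one has
\[
v_n(s) - v(s) = (u_n(s)-u(s)) - \paren[\Big]{\int_0^{r_n(s)} h(Y^{\assetsProcess^n}_z)\alpha_z\,dz - \int_0^{r(s)} h(Y^\assetsProcess_z)\alpha_z\,dz},
\]
and the difference of integrals is controlled by $L_h\|\alpha\|_{\infty,T}|r_n(s)-r(s)|$ plus $L_h\int_0^{r(s)}|Y^{\assetsProcess^n}_z-Y^\assetsProcess_z|\alpha_z\,dz$, which in turn is dominated by $L_h\|\alpha\|_{\infty,T}\int_0^s|v_n(\sigma)-v(\sigma)|\,d\sigma$ after reparametrization; interpolation on plateaus is handled analogously. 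Gronwall in the parameter $s$ then gives $\|v_n-v\|\vee\|r_n-r\|\to 0$, i.e.\ $M_1$-convergence of $Y^{\assetsProcess^n}$ to $Y^\assetsProcess$. The key point exploited throughout, and the reason $M_1$ does not fail here despite addition being discontinuous in general, is that the absolutely continuous drift $-h(Y)\alpha\,dt$ is compatible with the monotone time-reparametrizations intrinsic to $M_1$; this is precisely where the (local) Lipschitz continuity of $\alpha$ enters.
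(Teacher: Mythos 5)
Your overall route is the same as the one the paper actually takes, since the paper's own proof consists of citing \cite[proof of Thm.~4.1]{PangTalrejaWhitt2007} for the uniform and $J_1$ cases and \cite[Thm.~1.1]{PangWhitt2010} for the $M_1$ case, and those references argue exactly as you do: Picard/Gronwall for pathwise well-posedness and uniform continuity, time changes for $J_1$, parametric representations for $M_1$; your adaptedness argument (uniform pathwise limit of adapted processes) also matches the paper's. The uniform and $J_1$ parts are essentially fine, with one small caveat: in the $J_1$ step you cannot conclude from $\alpha_{\lambda_n(\cdot)}\lambda_n'(\cdot)\to\alpha$ \emph{weakly}, because the function $h(\tilde Y^n)$ against which it is integrated changes with $n$; you need $L^1$-convergence of the densities, which you obtain by choosing the $\lambda_n$ from the slope-controlled (complete) $J_1$ metric so that $\lambda_n'\to 1$ boundedly a.e., combined with the assumed Lipschitz continuity of $\alpha$.

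The genuine gap is in the $M_1$ Gronwall step. Your construction of $(v_n,r_n)\in\Pi(Y^{\assetsProcess^n})$ is correct, and in fact the identity $v_n(s)=u_n(s)+Y_{0-}-\assetsProcess^n_{0-}-\int_0^{r_n(s)}h(Y^n_z)\alpha_z\,dz$ holds at \emph{all} $s$, plateau or not, because $\Delta Y^n=\Delta\assetsProcess^n$ and the drift integral is continuous in time. But the claim that $\int_0^{r(s)}|Y^n_z-Y_z|\alpha_z\,dz$ is ``dominated by $L_h\|\alpha\|_{\infty,T}\int_0^s|v_n(\sigma)-v(\sigma)|\,d\sigma$ after reparametrization'' does not follow: substituting $z=r(\sigma)$ turns $Y_z$ into $v(\sigma)$, but $Y^n_z$ is \emph{not} $v_n(\sigma)$, because $v_n$ is parametrized along $r_n$, not along $r$. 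Writing $Y^n_z=v_n(\sigma')$ with $r_n(\sigma')=z=r(\sigma)$, the resulting error $|v(\sigma')-v(\sigma)|$ is controlled only by the oscillation of $v$ over $\{\tilde\sigma \mid |r(\tilde\sigma)-z|\le\|r_n-r\|\}$, and this is \emph{not} uniformly small: it is of order $|\Delta Y_t|$ whenever $z$ lies within $\|r_n-r\|$ of a jump time $t$ of $Y$, since $\sigma$ and $\sigma'$ may then sit on opposite sides of the corresponding plateau of $r$. To close the Gronwall loop one must split the $z$-integral into neighbourhoods of the finitely many jumps of size at least $\epsilon$ (whose total Lebesgue length is $O(\|r_n-r\|)$) and the remainder, producing an extra additive error $\epsilon_n\to0$ in the Gronwall inequality; this splitting is precisely the technical core of the Pang--Whitt argument and is missing from your proof as written. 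The conclusion is correct, but this step needs to be carried out rather than asserted.
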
 
\begin{proof}
The proof in the case of the uniform topology and the Skorokhod $J_1$ topology is given in \cite[proof of Thm.~4.1]{PangTalrejaWhitt2007}; the proof there is for $\alpha\equiv 1$ but it clearly extends to our setup as long as $\alpha$ is Lipschitz. 
For the $M_1$ topology, cf.~\cite[Thm.~1.1]{PangWhitt2010}, where again the main argument (\cite[proof of Thm.~1.1]{PangWhitt2010}) extends to our setup of more general $\alpha$.  
That $Y^\assetsProcess$ is adapted follows from the (pathwise) construction of $Y^\assetsProcess$ as the (a.s.) limit (in the uniform topology) of adapted processes, the solution processes for a sequence of piecewise-constant controls $\assetsProcess^n$ approximating uniformly $\assetsProcess$, cf.~ \cite[proof of Thm.~4.1]{PangTalrejaWhitt2007}.
\end{proof}

In general, we may have $\alpha_n\to \alpha$ and $\beta_n\to \beta$ in $D([0,T])$ endowed with $J_1$ (or $M_1$), and yet  $\alpha_n+\beta_n\not\to \alpha+\beta$ when $\alpha$ and $\beta$ have a common jump time. However, in special cases like in what follows, this does not happen.
\begin{lemma}[Allowed cancellation of jumps for $J_1$]\label{lemma: J1 cancellation}
Let $\alpha_n \rightarrow \alpha_0$ and $\beta_n \rightarrow \beta_0$ in $(D([0,T]), J_1)$ with the following property: for every $n\geq 0$ and every $t\in (0,T)$ 
\begin{center}
	$\Delta \alpha_n(t) \neq 0 \quad  \text{implies}\quad \Delta \beta_n(t) = -\Delta \alpha_n(t).$
\end{center}
Then $\alpha_n + \beta_n \rightarrow \alpha_0 + \beta_0$ in $(D([0,T]), J_1)$.
\end{lemma}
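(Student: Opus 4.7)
The plan is to reduce the claim to joint $J_1$-convergence $(\alpha_n,\beta_n)\to(\alpha_0,\beta_0)$ in the product Skorokhod space $D([0,T];\RR^2)$. Indeed, suppose $\nu_n:[0,T]\to[0,T]$ are continuous strictly increasing bijections fixing the endpoints such that $\norm{\nu_n-\mathrm{id}}_\infty\to 0$, $\norm{\alpha_n\circ\nu_n-\alpha_0}_\infty\to 0$ and $\norm{\beta_n\circ\nu_n-\beta_0}_\infty\to 0$. Then the triangle inequality gives $\norm{(\alpha_n+\beta_n)\circ\nu_n-(\alpha_0+\beta_0)}_\infty\to 0$, which by the parametric characterization of $J_1$-convergence \cite[Thm.~14.2]{Billingsley99} yields the claim for the sum.

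To produce such a common time change, first fix a sequence $\mu_n$ witnessing $\beta_n\to\beta_0$ in $J_1$, so that $\norm{\mu_n-\mathrm{id}}_\infty\to 0$ and $\norm{\beta_n\circ\mu_n-\beta_0}_\infty\to 0$. I claim that the same $\mu_n$ also satisfies $\norm{\alpha_n\circ\mu_n-\alpha_0}_\infty\to 0$. The hypothesis (applied at levels $n$ and $0$) implies that every jump time of $\alpha_n$ (resp.~$\alpha_0$) is also a jump time of $\beta_n$ (resp.~$\beta_0$), with $\Delta\beta_\cdot=-\Delta\alpha_\cdot$ there. Hence at any jump $t$ of $\alpha_0$, $\mu_n(t)$ is, for $n$ large, the location of a jump of $\beta_n$ of size tending to $\Delta\beta_0(t)=-\Delta\alpha_0(t)\ne 0$, and by hypothesis this is simultaneously a jump of $\alpha_n$ of size tending to $\Delta\alpha_0(t)$. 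Thus $\mu_n$ automatically matches jumps of $\alpha_0$ to jumps of $\alpha_n$ with the right magnitude.

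The main obstacle is to handle jump times $s$ of $\beta_0$ at which $\alpha_0$ is continuous: one must show that $\alpha_n$ has no jump at $\mu_n(s)$ for $n$ large. Arguing by contradiction, along a subsequence one would have $\Delta\alpha_n(\mu_n(s))=\varepsilon_n\ne 0$, which by hypothesis forces $\Delta\beta_n(\mu_n(s))=-\varepsilon_n$; combined with $\norm{\beta_n\circ\mu_n-\beta_0}_\infty\to 0$ and $\Delta\beta_0(s)\ne 0$, this implies $\varepsilon_n\to-\Delta\beta_0(s)\ne 0$. But then $\alpha_n$ would exhibit jumps of size bounded away from zero at positions $\mu_n(s)\to s$, contradicting the $J_1$-convergence $\alpha_n\to\alpha_0$ at the continuity point $s$ of $\alpha_0$, since in the $J_1$ topology the only possible accumulation points of jumps of the approximating sequence of size bounded away from zero are jump times of the limit. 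Granted this sub-claim, all jumps of $\alpha_n$ lie (for large $n$) at the locations $\mu_n(t)$ for jump times $t$ of $\alpha_0$ and are correctly matched; combined with $\norm{\mu_n-\mathrm{id}}_\infty\to 0$ and the standard fact that $J_1$-convergence of $\alpha_n$ entails uniform convergence $\alpha_n\to\alpha_0$ on every compact subinterval free of jumps of $\alpha_0$, one obtains $\norm{\alpha_n\circ\mu_n-\alpha_0}_\infty\to 0$ by splitting $[0,T]$ into small neighborhoods of the (at most countably many) jumps of $\alpha_0$ and their complement.
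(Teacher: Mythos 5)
Your overall strategy --- reduce to joint $J_1$-convergence of the pair $(\alpha_n,\beta_n)$ by exhibiting a single time change $\mu_n$ that works simultaneously for both coordinates --- is viable and is, in spirit, a hands-on version of what the paper does. The paper is considerably more economical: it invokes the criterion of \cite[Prop.~VI.2.2a]{JacodShiryaev2003_book}, by which (given the two separate convergences) it suffices to produce, for each $t$, \emph{one} sequence $t_n\to t$ along which both jump sizes converge. The case $\Delta\alpha_0(t)\neq 0$ is handled by taking $t_n$ matching the jumps of $\alpha_n$ and reading off $\Delta\beta_n(t_n)=-\Delta\alpha_n(t_n)\to-\Delta\alpha_0(t)=\Delta\beta_0(t)$ from the hypothesis; the case $\Delta\alpha_0(t)=0$ by taking $t_n$ matching the jumps of $\beta_n$ and using that jumps of $\alpha_n$ along any sequence tending to a continuity point of $\alpha_0$ must vanish. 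Your third paragraph is exactly this last argument and is correct.

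Your second paragraph, however, contains a genuine gap: from ``$\mu_n(t)$ is a jump time of $\beta_n$ of size tending to $\Delta\beta_0(t)\neq 0$'' you conclude ``by hypothesis this is simultaneously a jump of $\alpha_n$''. The hypothesis only says that jumps of $\alpha_n$ are jumps of $\beta_n$ with opposite sign, not the converse; $\beta_n$ may perfectly well jump at $\mu_n(t)$ without $\alpha_n$ jumping there. The conclusion you want is true, but it needs an extra ingredient: by $J_1$-convergence of $\alpha_n$ there is \emph{some} sequence $s_n\to t$ with $\Delta\alpha_n(s_n)\to\Delta\alpha_0(t)\neq 0$, hence (using the hypothesis in the correct direction) $\Delta\beta_n(s_n)\to\Delta\beta_0(t)\neq 0$; since a $J_1$-convergent sequence cannot have two distinct jumps of size bounded away from zero both converging to the same location, one must have $s_n=\mu_n(t)$ eventually. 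Relatedly, your closing assertion that ``all jumps of $\alpha_n$ lie at the locations $\mu_n(t)$'' is literally false: $\alpha_n$ may have arbitrarily placed jumps of vanishing size (compensated in $\beta_n$), and only the jumps of size bounded below are constrained to sit at matched locations --- which is what the standard $\varepsilon$-splitting argument actually needs. With these two repairs your proof goes through, but at that point you have essentially re-derived by hand the Jacod--Shiryaev matching criterion that the paper simply cites.
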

\begin{proof}
By \cite[Prop.~VI.2.2, a]{JacodShiryaev2003_book} it suffices to check that for every $t\in (0,T)$ there exists a sequence $t_n\to t$ such that $\Delta \alpha_n(t_n)\to \Delta \alpha_0(t)$ and $\Delta \beta_n(t_n)\to \Delta \beta_0(t)$.

Let $t\in (0,T)$ be arbitrary and first suppose that $\Delta \alpha_0(t) \neq 0$. Then \cite[Prop.~VI.2.1, a]{JacodShiryaev2003_book} implies the existence of a sequence $t_n\to t$ such that $\Delta \alpha_n(t_n)\to \Delta \alpha_0(t)$. Thus, our assumption on the sequence $(\beta_n)$ gives  $\Delta \beta_n(t_n)\to \Delta \beta_0(t)$. For the case $\Delta \alpha_0(t) = 0$, let $t_n\to t$ be such that $\Delta \beta_n(t_n)\to \Delta \beta_0(t)$. By \cite[Prop.~VI.2.1, b.5]{JacodShiryaev2003_book} we conclude that $\Delta \alpha_n(t_n)\to \Delta \alpha_0(t)$ as well, finishing the proof.
\end{proof}
Let us note that the conclusion of \cref{lemma: J1 cancellation} does not hold for the $M_1$ topology. Consider for example $\alpha_0 = \indicator_{[1,\infty)}$  with approximating sequence $\alpha_n(t):= n\int_{t}^{t+1/n}\alpha_0(s)\diff s$ and $\beta_0 = 1 - \alpha_0$ with approximating sequence $\beta_n(t):= n\int_{t-1/n}^{t}\beta_0(s)\diff s$. Thus we need the following refined statement.

\begin{lemma}[Allowed cancellation of jumps for $M_1$]\label{lemma: M1 cancellation}
Let $\alpha_n \rightarrow \alpha_0$ in $(D([0,T]), \norm{\cdot}_\infty)$ and $\beta_n \rightarrow \beta_0$ in $(D([0,T]), M_1)$ with the following property: $t\in \text{Disc}(\alpha_0)$ implies 
$\beta_n\to\beta_0$ locally uniformly in a neighborhood of $t$.
Then $\alpha_n + \beta_n \rightarrow \alpha_0 + \beta_0$ in $(D([0,T]), M_1)$.
\end{lemma}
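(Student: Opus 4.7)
My plan is to use the parametric-representation characterization of the $M_1$ metric and to exhibit representations of the two sides that, after a small alignment time-change, are uniformly close. As a preliminary, the hypothesis forces $\beta_0$ to be continuous at every $t\in\operatorname{Disc}(\alpha_0)$ (a local uniform limit of c\`adl\`ag functions cannot leave behind an unmatched jump), so $\operatorname{Disc}(\alpha_0)\cap\operatorname{Disc}(\beta_0)=\emptyset$. A companion consequence of uniform convergence $\alpha_n\to\alpha_0$ is $\alpha_n(t\pm)\to\alpha_0(t\pm)$ for every $t$, so for large $n$ and each $t_k\in\operatorname{Disc}(\alpha_0)$, $\alpha_n$ itself has a genuine jump at $t_k$ of size close to $\Delta\alpha_0(t_k)$.

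Fix $\varepsilon>0$ and let $t_1,\ldots,t_K$ be the finitely many jump times of $\alpha_0$ in $(0,T)$ with $\lvert\Delta\alpha_0(t_k)\rvert\ge\varepsilon$. Choose parametric representations $(u,r)\in\Pi(\beta_0)$ and $(u_n,r_n)\in\Pi(\beta_n)$ with $d_n:=\|u_n-u\|_\infty\vee\|r_n-r\|_\infty\to 0$. Continuity of $\beta_0$ at $t_k$ gives $\rho_k:=r^{-1}(t_k)$ as a single point in $(0,1)$; local uniform convergence of $\beta_n$ implies that $r_n^{-1}(t_k)$ is a shrinking interval around $\rho_k$, and any $\rho_k^{(n)}\in r_n^{-1}(t_k)$ satisfies $\rho_k^{(n)}\to\rho_k$. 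Enlarge the parameter space to $[0,1+K\delta]$ for a small $\delta>0$ and build $(\tilde u,\tilde r)\in\Pi(\alpha_0+\beta_0)$ by inserting an interval of length $\delta$ at each $\rho_k$: there set $\tilde r\equiv t_k$ and let $\tilde u$ interpolate linearly from $\alpha_0(t_k-)+\beta_0(t_k)$ to $\alpha_0(t_k)+\beta_0(t_k)$, tracing the vertical segment of $\Gamma(\alpha_0+\beta_0)$ at $t_k$; off the insertions, set $\tilde r=r$ and $\tilde u=u+\alpha_0\circ r$ (with the left/right-continuous convention at the small jumps of $\alpha_0$ needed to keep $\tilde u$ continuous). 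Define $(\tilde u_n,\tilde r_n)\in\Pi(\alpha_n+\beta_n)$ by the same recipe at $\rho_k^{(n)}$ using $\alpha_n,\beta_n$, absorbing the shrinking flat part of $r_n$ at $t_k$ into the inserted piece so that it traces the full vertical segment of $\Gamma(\alpha_n+\beta_n)$ at $t_k$. A piecewise-linear time change $\phi_n$ on $[0,1+K\delta]$ that maps $\rho_k+(k-1)\delta\mapsto\rho_k^{(n)}+(k-1)\delta$ aligns the insertions of the two representations; then rescale linearly to $[0,1]$.

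The uniform estimates are then: $\|\tilde r_n\circ\phi_n-\tilde r\|_\infty\to 0$ because $\|r_n-r\|_\infty\to 0$ and $\phi_n\to\mathrm{id}$. On each inserted piece, $\lvert\tilde u_n\circ\phi_n-\tilde u\rvert$ is controlled by $\lvert\alpha_n(t_k-)-\alpha_0(t_k-)\rvert+\lvert\alpha_n(t_k)-\alpha_0(t_k)\rvert+\lvert\beta_n(t_k\pm)-\beta_0(t_k)\rvert$, each vanishing by uniform (for $\alpha_n$) and local uniform (for $\beta_n$ around $t_k$) convergence. On each original piece, $\lvert\tilde u_n\circ\phi_n-\tilde u\rvert$ is bounded by $d_n+\|\alpha_n-\alpha_0\|_\infty+\sup_s\lvert\alpha_0(r(s))-\alpha_0(r_n\circ\phi_n(s))\rvert$; since on the original pieces $r(s)$ stays away from the big jump times $t_1,\ldots,t_K$, only the small jumps of $\alpha_0$ (of size strictly below $\varepsilon$) can contribute to the last term, yielding $\limsup_n\le\varepsilon$. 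Collecting the pieces gives $\limsup_n d_{M_1}(\alpha_n+\beta_n,\alpha_0+\beta_0)\le C\varepsilon$, and since $\varepsilon$ is arbitrary, the conclusion follows.

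The main obstacle is the careful alignment on the original pieces: one must arrange that the aligned time coordinate $r_n\circ\phi_n$ never crosses a big jump $t_k$ on an original piece, for otherwise the term $\lvert\alpha_0(r(s))-\alpha_0(r_n\circ\phi_n(s))\rvert$ would inherit a jump of size $\ge\varepsilon$. This is achieved by choosing $\rho_k^{(n)}$ at the endpoint of $r_n^{-1}(t_k)$ consistent with the monotonicity of $r_n$, so that on the original piece immediately to the right of the $k$-th insertion one has $r_n\circ\phi_n\ge t_k$ throughout, and symmetrically on the left. Once this alignment is in place only the small jumps of $\alpha_0$ contribute to the original-piece error, which closes the estimate.
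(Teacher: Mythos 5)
There is a genuine gap, and it sits at the very first step. You claim that the hypothesis forces $\beta_0$ to be continuous at every $t\in \text{Disc}(\alpha_0)$, ``since a local uniform limit cannot leave behind an unmatched jump.'' But locally uniform convergence only rules out \emph{unmatched} jumps of the limit; it is perfectly compatible with $\beta_0$ having a jump at $t$ that is matched by jumps of the $\beta_n$ at exactly the same time (e.g.\ $\beta_n=\beta_0=\indicator_{[t,\infty)}$ converges locally uniformly near $t$). So $\text{Disc}(\alpha_0)\cap\text{Disc}(\beta_0)=\emptyset$ does not follow. This is not a peripheral case: in the application inside Theorem~3.5 the lemma is invoked with $\alpha_0=\int_0^\cdot G_x(\baseS_{u-},Y_{u-})\diff\baseS_u+\Sigma$ and $\beta_0=-G(\baseS,Y)$, which \emph{both} jump at the jump times of $\baseS$, with cancelling jumps. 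In other words, the common-jump situation you assume away is precisely the situation the lemma exists to handle. Your construction breaks down there: $r^{-1}(t_k)$ is then a nondegenerate interval on which $u$ traces the vertical segment of $\beta_0$'s jump, and juxtaposing that segment (shifted by $\alpha_0(t_k-)$ or $\alpha_0(t_k)$) with a separately inserted vertical segment for $\alpha_0$'s jump produces a path that leaves the completed graph $\Gamma(\alpha_0+\beta_0)$ — most visibly when the jumps cancel, since then $\Gamma(\alpha_0+\beta_0)$ has no vertical segment at $t_k$ at all while your $\tilde u$ would move away from $(\alpha_0+\beta_0)(t_k)$ and back. Hence $(\tilde u,\tilde r)\notin\Pi(\alpha_0+\beta_0)$ and the distance estimate has no starting point. (A secondary, more repairable flaw: off the insertions $\tilde u=u+\alpha_0\circ r$ is genuinely discontinuous at the small jumps of $\alpha_0$, so it is not literally a parametric representation either; choosing a ``convention'' does not remove a jump.)

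The repair is to treat neighborhoods of $t\in\text{Disc}(\alpha_0)$ not by graph surgery but by observing that there \emph{both} summands converge locally uniformly, hence so does the sum, and locally uniform convergence to a c\`adl\`ag limit gives the required local $M_1$ control regardless of whether or how the jumps of $\alpha_0$ and $\beta_0$ interact. This is exactly the route the paper takes: it verifies the local oscillation-function criterion of \cite[Thm.~12.5.1(v)]{Whitt2002_book}, bounding $w_s(\alpha_n+\beta_n,t,\delta)$ by $w_s(\alpha_n,t,\delta)+w_s(\beta_n,t,\delta)+\varepsilon$ near continuity points of $\alpha_0$ (via the two-parameter uniform-distance function $v$ and \cite[Thm.~12.4.1]{Whitt2002_book}) and by $w_s(\alpha_0+\beta_0,t,\delta)+\varepsilon$ near points of $\text{Disc}(\alpha_0)$ using the locally uniform convergence of both sequences there. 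If you want to keep a parametric-representation proof, you must build the representation of $\alpha_n+\beta_n$ near each $t_k$ directly from the locally uniform convergence of the sum, rather than by splicing a segment for $\alpha_0$'s jump next to one for $\beta_0$'s.
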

\begin{proof}
We prove the following claim that suffices to deduce $M_1$-convergence of $\alpha_n + \beta_n$: For any $t\in [0,T]$ and $\varepsilon> 0$ there are $\delta > 0 $ and $n_0\in \mathbb{N}$ such that
\begin{equation}\label{eq:osc common jumps}
	w_s(\alpha_n + \beta_n, t, \delta)\leq w_s(\alpha_n, t, \delta) + w_s(\beta_n, t, \delta) + \varepsilon \quad \text{for all $n\geq n_0$},
\end{equation}
where $w_s$ is the $M_1$ oscillation function, see \cite[Chap.~12, eq.~(4.4)]{Whitt2002_book}.
Indeed, if $\eqref{eq:osc common jumps}$ holds, then the second condition in \cite[Thm.~12.5.1(v)]{Whitt2002_book} would hold, while the first condition there holds because of local uniform convergence at points of continuity of $\alpha_0 + \beta_0$: Either there is cancellation of jumps and thus local uniform convergence by our assumption, or both paths do not jump which still gives local uniform convergence because $M_1$-convergence implies such at continuity points of the limit.

To check \eqref{eq:osc common jumps}, we have $\lim_{\delta \downarrow 0}\limsup_{n\rightarrow\infty}v(\alpha_n, \alpha_0, t, \delta) = 0$ at points $t\in [0,T]$ with $\Delta \alpha_0(t) = 0$, where for $x_1, x_2\in D([0,T])$
\[
	v(x_1, x_2, t, \delta):= \sup_{0\vee (t-\delta)\leq t_1,t_2 \leq (t+\delta)\wedge T}|x_1(t_1) - x_2(t_2)|,
\] 
see \cite[Thm.~12.4.1]{Whitt2002_book}, which implies \eqref{eq:osc common jumps} for small $\delta$ and large $n$. Now if $t\in \text{Disc}(\alpha_0)$, $\alpha_n\to \alpha_0$ and $\beta_n\to\beta_0$ locally uniformly in a neighborhood of $t$ which implies that for small $\delta$ and large $n$
\[
	w_s(\alpha_n + \beta_n, t, \delta)\leq w_s(\alpha_0 + \beta_0, t, \delta) + \varepsilon/2.
\]
Because $\alpha_0 + \beta_0\in D([0,T])$, we can make $w_s(\alpha_0 + \beta_0, t, \delta)$ smaller than $\varepsilon/2$, which finishes the proof.
\end{proof}

\begin{lemma}[Uniform convergence of jump term] \label{lemma: uniform jump-sum convergence}
	Let $\alpha, \beta_n, \beta \in D([0,T])$ be such that $[\alpha]^d_T:= \sum_{t\le T: \Delta\alpha(t) \ne 0} \abs{\Delta\alpha(t)}^2 < \infty$, $\beta_n$ are uniformly bounded and at every jump time $t \in [0,T]$ of $\alpha$, $\Delta \alpha(t) \ne 0$, we have pointwise convergence $\beta_n(t) \to \beta(t)$.
	Let $G \in C^2$ such that $y\mapsto G_{xx}(x,y)$ is Lipschitz continuous on compacts.
	Then the sum 
	\[
		J(\alpha,\beta_n)_t:= \sum_{\substack{u\le t \\ \Delta\alpha(t)\ne 0}} G\bp{\alpha(t), \beta_n(t)} - G\bp{ \alpha(t-), \beta_n(t) } - G_x\bp{\alpha(t-), \beta_n(t)} \Delta\alpha(t)
	\]
	converges uniformly for $t \in [0,T]$ to $J(\alpha,\beta)_t$, as $n \to \infty$.
\end{lemma}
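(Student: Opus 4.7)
The plan is to control each summand via a second-order Taylor expansion and then split the sum into a finite ``large-jump'' part, where uniform convergence follows from pointwise convergence of finitely many step functions, and a small-jump tail that is uniformly dominated by the square-summable jump sequence.

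First I would fix uniform $L^\infty$ bounds: since $\alpha \in D([0,T])$ is bounded and $(\beta_n)$, $\beta$ are uniformly bounded, both $\alpha(t)$, $\alpha(t-)$ and all values $\beta_n(t)$, $\beta(t)$ lie in some compact rectangle $R \subset \RR^2$. Set $C := \sup_{(x,y)\in R} |G_{xx}(x,y)| < \infty$. A second-order Taylor expansion in the first argument gives
\begin{equation*}
	G(\alpha(t),\beta_n(t)) - G(\alpha(t-),\beta_n(t)) - G_x(\alpha(t-),\beta_n(t))\Delta\alpha(t) = \tfrac{1}{2} G_{xx}\bp{\xi^n_t, \beta_n(t)} (\Delta\alpha(t))^2
\end{equation*}
for some $\xi^n_t$ between $\alpha(t-)$ and $\alpha(t)$, and likewise for $\beta$ in place of $\beta_n$. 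Thus each summand of $J(\alpha,\beta_n)$ and of $J(\alpha,\beta)$ is bounded in absolute value by $\tfrac{C}{2}(\Delta\alpha(t))^2$.

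Given $\varepsilon > 0$, use $[\alpha]^d_T < \infty$ to pick a finite set $F \subset \{t \in [0,T] : \Delta\alpha(t) \ne 0\}$ with $\sum_{t\notin F} (\Delta\alpha(t))^2 < \varepsilon$. Split $J = J_F + J_{F^c}$ accordingly. The tail bound above gives
\begin{equation*}
	\sup_{t \in [0,T]} \abs[\big]{ J_{F^c}(\alpha,\beta_n)_t - J_{F^c}(\alpha,\beta)_t } \le C \varepsilon,
\end{equation*}
uniformly in $n$. For the finite part $J_F(\alpha,\beta_n)_t$, this is a step function in $t$ with jumps only at the (finitely many) times $t \in F$, and the jump size at each such $t$ converges as $n \to \infty$ to the corresponding jump of $J_F(\alpha,\beta)_t$ by the pointwise convergence $\beta_n(t) \to \beta(t)$ together with the continuity of $G$ and $G_x$. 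Since $F$ is finite, this pointwise convergence of jump sizes transfers to uniform convergence of $J_F(\alpha,\beta_n)_\cdot$ to $J_F(\alpha,\beta)_\cdot$ on $[0,T]$. Combining, $\limsup_{n} \sup_{t} |J(\alpha,\beta_n)_t - J(\alpha,\beta)_t| \le C\varepsilon$, and letting $\varepsilon \downarrow 0$ finishes the proof.

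The only slightly subtle point is the finite-part argument: one must observe that once $F$ is finite, the dependence on $t$ is trivial (a piecewise constant process with a fixed finite set of jump times), so pointwise convergence of the finitely many jump heights is equivalent to uniform convergence in $t$. The hypothesis that $G_{xx}$ is Lipschitz in $y$ on compacts is not strictly needed here (continuity of $G$ and $G_x$ suffices for the finite-part convergence, and boundedness of $G_{xx}$ on compacts for the tail bound), but it is harmlessly consistent with the standing regularity assumptions on $g$.
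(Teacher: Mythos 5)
Your proof is correct, and it reaches the conclusion by a route that is genuinely a bit different from (and slightly more economical than) the paper's. The paper does not split the sum by jump size and bound each summand separately; instead it forms the difference of corresponding summands, $\tilde H(x,\Delta x,y,z) = H(x,\Delta x,y)-H(x,\Delta x,z)$ with $H(x,\Delta x,y)=G(x+\Delta x,y)-G(x,y)-G_x(x,y)\Delta x$, and uses the Lipschitz continuity of $y\mapsto G_{xx}(x,y)$ to get $\abs{\tilde H}\le \tfrac12\abs{\Delta x}^2 L\abs{y-z}$; the square-summable tail is then controlled by the uniform bound on $\abs{\beta_n-\beta}$ and the finite head by pointwise convergence, exactly in the spirit of your $F$/$F^c$ split, but the passage from convergence at the terminal time $T$ to uniform convergence in $t$ is achieved by decomposing the difference process into its increasing and decreasing components and using that a monotone process starting at zero attains its supremum norm at $T$. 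Your argument replaces both devices: bounding each summand by $\tfrac{C}{2}(\Delta\alpha(t))^2$ via Taylor's theorem needs only boundedness of $G_{xx}$ on compacts, the tail is then small uniformly in $n$ \emph{and} $t$ without any reference to $\beta_n-\beta$, and uniform convergence of the finite part is immediate since it is a piecewise constant function with a fixed finite set of jump times. Your closing remark is accurate: the Lipschitz hypothesis on $G_{xx}(x,\cdot)$ is what the paper's estimate of $\tilde H$ consumes, but it is not needed for the statement itself; your proof establishes the lemma under the weaker assumption that $G\in C^2$ with $G_{xx}$ locally bounded, which is automatic. One small presentational point: when you write that the tail difference is bounded by $C\varepsilon$, you are implicitly adding the two one-sided tail bounds $\tfrac{C}{2}\varepsilon$ for $J_{F^c}(\alpha,\beta_n)$ and $J_{F^c}(\alpha,\beta)$; this is fine, but worth stating, as is the (easy) observation that the defining series for $J(\alpha,\beta_n)_t$ converges absolutely thanks to the same Taylor bound.
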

\begin{proof}
\newcommand{\Lip}{L}
	Since $\alpha$, $[\alpha]^d$, $\beta_n$ and $\beta$ are bounded on $[0,T]$ by a constant $C \in \RR$, we can assume w.l.o.g.\ that $G_{xx}$ is globally Lipschitz in $y$ with Lipschitz constant $\Lip$.
	Hence $J(\alpha,\beta_n)_t < \infty$ by Taylor's theorem.
	Let $H(x,\Delta x,y):= G(x+\Delta x, y) - G(x, y) - G_x(x, y) \Delta x$ and denote by $\tilde J^{n,\pm}$ the increasing and decreasing components of $J(\alpha,\beta_n)-J(\alpha,\beta)$, respectively, i.e.\
	\[
		\tilde J^{n,+}_t := \sum_{\mathclap{\substack{u\le t \\ \tilde H(\dots) > 0 }}} \tilde H\bp{\alpha({u-}), \Delta\alpha(u), \beta_n(u), \beta(u)}
		\text{,} \ \ 
		\tilde J^{n,-}_t := \sum_{\mathclap{\substack{u\le t \\ \tilde H(\dots) < 0 }}} \tilde H\bp{\alpha({u-}), \Delta\alpha(u), \beta_n(u), \beta(u)},
	\]
	for $\tilde H(x,\Delta x, y, z):= H(x,\Delta x,y) - H(x,\Delta x,z)$. 
	Moreover, take any enumeration of the jump times of $\alpha$, $\{t_k \mid k\in\NN\} = \{t \mid \Delta\alpha(t) \ne 0\}$, and arbitrary $\varepsilon > 0$.
	Since $[\alpha]^d < \infty$, there exists $K \in \NN$ such that $\sum_{k > K} \abs{\Delta\alpha(t_k)}^2 < \varepsilon / (2 C \Lip)$.
	Moreover, we have $\abs{\tilde H(x,\Delta x, y,z)} \le \tfrac{1}{2} \abs{\Delta x}^2 \Lip \abs{y-z}$ and thus
	\[
		\abs{\tilde J^{n,\pm}_T} 
			\le \frac{\Lip}{2} \sum_{k=1}^\infty \abs{\Delta\alpha(t_k)}^2 \abs{\beta_n(t_k) - \beta(t_k)}
			< \frac{\varepsilon}{2} + \frac{\Lip}{2} \paren[\Big]{\max_{1\le k \le K} \abs{\beta_n(t_k) - \beta(t_k)}} \sum_{k=1}^K \abs{\Delta\alpha(t_k)}^2\,.
	\]
	By pointwise convergence $\beta_n(t_k) \to \beta(t_k)$ at all $t_k$, there exists $N \in \NN$ such that for all $k=1,\dots,K$ and $n \ge N$ we have $\abs{\beta_n(t_k) - \beta(t_k)} < \varepsilon/(L [\alpha]^d_T)$ and therefore $\abs{\tilde J^{n,\pm}_T} < \varepsilon$ for $n\ge N$.
	Hence $J^{n,\pm}_T \to 0$ as $n \to \infty$.
	
	Since $J^{n,\pm}$ are monotone and do not cross zero, we have $\sup_{0\le t\le T} \abs{\tilde J^{n,\pm}_t} = \abs{\tilde J^{n,\pm}_T}$ and therefore uniform convergence $\tilde J^{n,\pm} \to 0$ on $[0,T]$. 
	So in particular $J(\alpha,\beta_n)$ converges to $J(\alpha,\beta)$, uniformly on $[0,T]$.
\end{proof}

\label{sect:Marcus integral proofs}

\begin{proof}[Proof of \cref{lem:problem as Marcus integral}]
	Since $\assetsProcess$ is of finite variation, we have $\diff {[Z^j, Z^m]^c_t} = \diff [\baseS]^c_t$ for $j=m=2$, and $0$ otherwise.
	So the $\partial \marcusDriver_{\cdot,j} / \partial x_\ell$ terms in \cref{eq:defMarcusIntegral} simplify to 
	\begin{equation} \label{eq:MarcusDerivativeTerms}
		\frac{1}{2} \sum_{j,m=1}^3 \sum_{\ell = 1}^3 \int_0^t \frac{\partial \marcusDriver_{\cdot,j}}{\partial x_\ell}(X_{s-}) \marcusDriver_{\ell,m}(X_{s-}) \diff {[Z^j, Z^m]^c_s} = \paren{ 0,\ 0,\ 0 }^{tr}.
	\end{equation}
	Jumps of $Z$ are of the form 
	$\Delta Z_s = \paren[\big]{\Delta \assetsProcess_s,\ \Delta\baseS_s,\ 0}^{tr}$, so for $\xi(X):= \marcusDriver(X) \Delta Z_s$ we obtain
	\(
		\xi(X)
			= \paren[\big]{ -g(X^3, X^2) \Delta \assetsProcess_s,\ \Delta \assetsProcess_s,\ \Delta \baseS_s}^{tr},
	\)
	which yields the solutionto \eqref{eq:defPhiODE} as $y(u) = V_u = (V^1_u, V^2_u, V^3_u)^{tr} \in \RR^3$ with $V_0 = X_{s-}$\,,
	\begin{align*}
		V^2_u &= Y_{s-} + \int_0^u \Delta \assetsProcess_s \diff x = Y_{s-} + u \Delta \assetsProcess_s \,,
	\\	
		V^3_u &= \baseS_{s-} + \int_0^u \Delta \baseS_s \diff x = \baseS_{s-} + u \Delta\baseS_s \,,
	\allowdisplaybreaks
	\\	
		V^1_u &= L_{s-} - \int_0^u g(\baseS_{s-} + x \Delta\baseS_s, Y_{s-} + x \Delta \assetsProcess_s) \Delta \assetsProcess_s \diff x 
		\\	&= L_{s-} - \int_0^{u \Delta \assetsProcess_s} g(\baseS_{s-}, Y_{s-} + x) \diff x \,,
	\end{align*}
	since quasi-left continuity of $\baseS$ gives that a.s.\ $\Delta\baseS_s = 0$ whenever $\Delta\assetsProcess_s \ne 0$ (jumps of $\assetsProcess$ occur at predictable times).
	Thus the jump terms in \eqref{eq:defMarcusIntegral} become
	\begin{align} \label{eq:MarcusJumpTerms}
	\begin{split}
		\MoveEqLeft
		\varphi\paren{\marcusDriver(\cdot) \Delta Z_s, X_{s-}} - X_{s-} - \marcusDriver(X_{s-})\Delta Z_s 
	\\	
			&= \paren[\bigg]{ - \int_0^{\Delta \assetsProcess_s} g(\baseS_{s-}, Y_{s-} + x) \diff x + g(\baseS_{s-}, Y_{s-}) \Delta \assetsProcess_s ,\ 0,\ 0 }^{tr}.
	\end{split}
	\end{align}
	Furthermore, the Itô integral in \eqref{eq:defMarcusIntegral} reads
	\begin{align} \label{eq:MarcusItoTerm}
		\int_0^t \marcusDriver(X_{s-}) \diff Z_s = \begin{pmatrix}
			-\int_0^t g(\baseS_{s-}, Y_{s-}) \diff \assetsProcess_s
		\\	-\int_0^t h(Y_s) \diff \angles{M}_s + \assetsProcess_t - \assetsProcess_{0-}
		\\	\baseS_t - \baseS_{0-}
		\end{pmatrix}.
	\end{align}
	Summing up $X_{0-}$ and \cref{eq:MarcusItoTerm,eq:MarcusDerivativeTerms,eq:MarcusJumpTerms} yields the second and third components $Y_{0-} - \int_0^t h(Y_s) \diff s + \assetsProcess_t - \assetsProcess_{0-} = Y_t$ and $\baseS_{0-} + \baseS_t - \baseS_{0-} = \baseS_t$, respectively.
	To complete the proof, we note that for the first component we get
	\[
		L_{0-} - \int_0^t g(\baseS_{s-}, Y_{s-}) \diff \assetsProcess_s + \sum_{\substack{0\le s\le t \\ \Delta\assetsProcess_s \ne 0}} \paren[\Big]{ g(\baseS_{s-}, Y_{s-}) \Delta\assetsProcess_s - \int_0^{\Delta\assetsProcess_s} \!\! g(\baseS_{s-}, Y_{s-}+x) \diff x }
		= L_t\,.\ \qedhere
	\]
\end{proof}

\label{sect:finite stochastic horizon proofs}

The following proves the technical \cref{lem:admissible type F strategies} about admissibility of \typeFname{} strategies in \cref{ex: stochastic-finite-horizon}.
\begin{proof}[{Proof of \cref{lem:admissible type F strategies}}]
	By \cite[Ch.~2, Sect.~2, eq.~(2.0.2) on p.~295]{BorodinSalminen02}, the law of the hitting time $H_z$ of level $z$ by a Brownian motion with drift $\mu$ starting in $x$ is for $t \in (0,\infty)$ given by $\PP_x[ H_z \in \!\diff t ] = h^\mu(t,z-x) \diff t$ with $h^\mu(t,x) := \frac{\abs{x}}{\sqrt{2\pi} t^{3/2}} \exp\bp{ - \frac{(x-\mu t)^2}{2t} }$ and $\PP_x[H_z = \infty] = 1 - \exp\bp{ \mu (z-x) - \abs{\mu} \cdot \abs{z-x} }$.
	With $\mu = \beta \tilde\Upsilon / \rho$, $x = (\assetsProcess_{0-} + \tilde\Upsilon - Y_{0-})/\rho$ and $z = (\tilde\Upsilon - \Upsilon)/\rho$ we obtain the stated terms for $\EE[\tau] = \EE_x[H_z]$.
	
	Now, let $\tilde\Upsilon$, $\Upsilon$ be such that $\EE[\tau] \le \maxET$.
	Independence of $\tau$ and $\baseS$ gives $\EE[\baseS_\tau] = \baseS_0$ and $\EE[\tau \baseS_\tau] < \infty$.
	We have $\int_0^\tau \baseS_t f(Y_t) \diff B_t = f(\tilde\Upsilon) M_\tau$ for $M_T:= \int_0^{T \wedge \tau} \baseS_t \diff B_t$
	and $\int_0^\tau \baseS_t F(Y_t) \diff W_t = F(\tilde\Upsilon) \sigma^{-2} \baseS_\tau$.
	Note that $\brackets{M}_\tau = \sigma^{-2} \brackets{\baseS}_\tau $.
	We will show that $M$, $\baseS_{\cdot\wedge \tau}$ and $(\baseS B)_{\cdot\wedge \tau}$ are in $\scH^1$ and hence UI martingales.
	By Burkholder-Davis-Gundy \cite[Thm.~IV.4.48]{Protter04}, there exists $C > 0$ such that $\EE\brackets[\big]{ [\baseS]_\tau^{1/2} } \le C \EE[\sup_{u \le \tau} \abs{\baseS_u}] = C\EE[\exp(\sigma X_\tau)]$ with $X_t:= \sup_{u \le t} (W_u - \frac{\sigma}{2}u)$.
	Using $\{X_t > z\} = \{ H_z < t \}$ for $z,t \ge 0$ with starting point $X_0 = 0$ and drift $\mu = -\sigma/2$ we first obtain
	\begin{align*}
		\EE[\exp(\sigma X_t)] 
			&= \int_{[0,\infty]} e^{\sigma x} \PP[X_t \in \!\diff x]
			= \int_{[0,\infty]} e^{\sigma x} \diff \bp{1 - \PP[X_t > x]}_x
		\\	&= -\int_{[0,\infty]} e^{\sigma x} \diff \bp{\PP[X_t > x]}_x
			= -\int_{[0,\infty]} e^{\sigma x} \diff \bp{\PP[H_x < t]}_x\,.
	\end{align*}
	Since $\PP[H_\infty < t] = 0$ we can approximate the Riemann-Stieltjes integral and apply integration by parts twice to get
	\begin{align*}
		&\hspace{-\mathindent}\hspace{1em}
		\EE[\exp(\sigma X_t)] 
			= - \lim_{\varepsilon \searrow 0} \int_0^t \int_\varepsilon^{1/\varepsilon} e^{\sigma x} h^{-\sigma/2}_x(u, x) \diff x \diff u
			= -\int_0^t \int_0^\infty e^{\sigma x} h^{-\sigma/2}_x(u, x) \diff x \diff u
	\allowdisplaybreaks	
	\\	&\hspace{-\mathindent}\text{with}\quad
		h^{-\sigma/2}_x(t,x) 
			= \frac{\diff}{\diff x} h^{-\sigma/2}(t,x) 
			= -\frac{x^2 - t + \frac{\sigma}{2} x t}{\sqrt{2\pi} t^{5/2}} \exp\paren[\Big]{ -\frac{(x+\frac{\sigma}{2}t)^2}{2t} }\,.
	\end{align*}
	So we have $e^{\sigma x} h^{-\sigma/2}_x(t,x) = h^{\sigma/2}_x(t,x) - \sigma h^{\sigma/2}(t,x)$.
	The contribution from the first summand of the integrand $h^{\sigma/2}_x(t,x) - \sigma h^{\sigma/2}(t,x)$ is zero, since $h^{\sigma/2}(t,x) \to 0$ for $x \to \infty$ and for $x \to 0$.
	Hence, $\EE[ \exp(\sigma X_t) ]$ equals
	\begin{align*}
		&\sigma \int_0^t \int_0^\infty h^{\sigma/2}(u,x) \diff x \diff u
= \sigma \int_0^t \paren[\bigg]{ \frac{\exp\bp{ -\frac{\sigma^2}{8}u }}{\sqrt{2\pi u}} - \frac{\sigma}{2} + \frac{\sigma}{2}\varphi\paren[\big]{\tfrac{\sigma}{2}\sqrt{u}} } \diff u
	\allowdisplaybreaks
	\\		&= 2 \varphi\bp{ \tfrac{\sigma}{2}\sqrt{t} } - 1 + \tfrac{\sigma^2}{2}t \varphi\bp{ \tfrac{\sigma}{2}\sqrt{t} } - \tfrac{\sigma^2}{2}t + \frac{\sigma \sqrt{t}}{\sqrt{2\pi}} \exp\bp{ -\tfrac{\sigma^2}{8}t }
		\le 1 + \frac{\sigma\sqrt{t}}{\sqrt{2\pi}}
	\,,
	\end{align*}
	where $\varphi(x) = \int_{-\infty}^x e^{-z^2/2} \diff z / \sqrt{2\pi}$.
	So by independence of $X$ and $\tau$
	\begin{align*}
		\hspace{-1em}
		\EE[\exp(\sigma X_\tau)] 
			&= \EE[ (t \mapsto \EE[ e^{\sigma X_t} ])(\tau) ] 
			\le \EE\brackets[\Big]{ 1 + \frac{\sigma}{\sqrt{2\pi}}\sqrt{\tau} } 
			\le 1 + \frac{\sigma}{\sqrt{2\pi}}(1+\EE[\tau]) 
			< \infty\,.
	\end{align*}	
	Moreover, $\brackets{\baseS B}_\tau = \tau \brackets{\baseS}_\tau$ by independence of $\baseS$ and $B$,
	so we can bound $\EE\brackets[\big]{ \brackets{\baseS B}_\tau^{1/2} } $ by
	\(
		\EE\brackets[\big]{ \sqrt{\tau}\brackets{\baseS}_\tau^{1/2} }
		\!=\! \EE\brackets[\Big]{ {\sqrt{t} \EE\brackets[\big]{ \brackets{\baseS}_t^{1/2} } }\Bigr|_{t=\tau} }
		\!\le\! C \EE\brackets[\Big]{ { \sqrt{t} \EE\brackets{ \exp(\sigma X_t) } }\Bigr|_{t=\tau} }
		\!\le\! C\EE\brackets{ \sqrt{\tau} + \tfrac{\sigma}{\sqrt{2\pi}} \tau }
		< \infty.
	\)
	Thus, $(\baseS B)_{\cdot \wedge \tau}$ is in $\scH^1$ and hence a UI martingale.
	
	Finally, $\bp{L_\tau(\assetsProcess)}^- \in L^1(\PP)$ follows from  $\int_0^\tau \baseS_t g(Y^\assetsProcess_{t-}) \diff t = g(\tilde \Upsilon)\int_0^\tau \baseS_t \diff t $, which is integrable by optional projection \cite[Thm.~VI.57]{DellacherieMeyer82bookB} since $\EE[\tau \baseS_\tau] < \infty$, and integrability of $ \baseS_\tau F(Y^\assetsProcess_\tau) = \baseS_\tau F(\Upsilon)$.
\end{proof}

\footnotesize{

}

\end{document}